\definecolor{crimson}{RGB}{192,0,0}         
\definecolor{navy}{RGB}{47,85,151}         
\newif\if@restonecol
\newif\if@restonecol
\theoremstyle{plain}
\newtheorem{thm}{Theorem}
\newtheorem{lemm}{Lemma}
\newtheorem{coro}{Corollary}
\theoremstyle{plain}
\newtheorem{rem}{Remark}
\begin{document}


\title{Uplink Precoding Design for Cell-Free Massive MIMO with Iteratively Weighted MMSE
\thanks{This article was presented in part at IEEE International Conference on Communications 2022 \cite{2022arXiv220111299W}.}
\thanks{Z. Wang and J. Zhang are with the School of Electronic and Information Engineering, Beijing Jiaotong University, Beijing 100044, China, and also with the Frontiers Science Center for Smart High-speed Railway System, Beijing Jiaotong University, Beijing 100044, China (e-mail: \{zhewang\_77, jiayizhang\}@bjtu.edu.cn).}
\thanks{H. Q. Ngo is with the Institute of Electronics, Communications, and Information Technology, Queen’s University Belfast, BT3 9DT Belfast, U.K. (email: hien.ngo@qub.ac.uk).}
\thanks{B. Ai is with the State Key Laboratory of Rail Traffic Control and Safety, Beijing Jiaotong University, Beijing 100044, China, also with the Frontiers Science Center for Smart High-Speed Railway System and the Henan Joint International Research Laboratory of Intelligent Networking and Data Analysis, Zhengzhou University, Zhengzhou 450001, China, and also with the Research Center of Networks and Communications, Peng Cheng Laboratory, Shenzhen 518066, China (e-mail: boai@bjtu.edu.cn).}
\thanks{M. Debbah is with the Technology Innovation Institute, Abu Dhabi, United Arab Emirates, and also with CentraleSup{\'e}lec,
University Paris-Saclay, 91192 Gif-sur-Yvette, France (e-mail: merouane.debbah@tii.ae).}}
\author{Zhe Wang, Jiayi Zhang,~\IEEEmembership{Senior Member,~IEEE,} Hien Quoc Ngo,~\IEEEmembership{Senior Member,~IEEE,} Bo Ai,~\IEEEmembership{Fellow,~IEEE} and M{\'e}rouane Debbah,~\IEEEmembership{Fellow,~IEEE}}
\maketitle
\vspace{-1.5cm}

\maketitle


\begin{abstract}

In this paper, we investigate a cell-free massive multiple-input multiple-output system with both access points and user equipments equipped with multiple antennas over the Weichselberger Rayleigh fading channel. We study the uplink spectral efficiency (SE) for the fully centralized processing scheme and large-scale fading decoding (LSFD) scheme. To further improve the SE performance, we design the uplink precoding schemes based on the weighted sum SE maximization. Since the weighted sum SE maximization problem is not jointly over all optimization variables, two efficient uplink precoding schemes based on Iteratively Weighted sum-Minimum Mean Square Error (I-WMMSE) algorithms, which rely on the iterative minimization of weighted MSE, are proposed for two processing schemes investigated. Furthermore, with maximum ratio combining applied in the LSFD scheme, we derive novel closed-form achievable SE expressions and optimal precoding schemes. Numerical results validate the proposed results and show that the I-WMMSE precoding schemes can achieve excellent sum SE performance with a large number of UE antennas.
\end{abstract}
\begin{IEEEkeywords}
Cell-free massive MIMO, uplink precoding, weighted sum-rate maximization, spectral efficiency.
\end{IEEEkeywords}

\IEEEpeerreviewmaketitle
\vspace{-0.6cm}
\section{Introduction}
\vspace{-0.3cm}
Cell-free massive multiple-input multiple-output (CF mMIMO) has attracted a lot of research interest and is regarded as a promising technology for future wireless communications, for its ability to achieve uniformly high spectral efficiency (SE) \cite{7827017,9113273,chen2021survey,[162],9586055,8768014}. Basically, a large number of access points (APs), arbitrarily distributed in a wide coverage area and connected to one or several central processing units (CPUs), jointly serve all user equipments (UEs) on the same time-frequency resource. Compared with the traditional cellular mMIMO system, the CF mMIMO system operates with no cell boundaries and many more APs than UEs \cite{8097026,9064545,9174860}. Relying upon the prominent network topology of CF mMIMO, four uplink (UL) signal processing schemes, distinguished from levels of the mutual cooperation between all APs and the assistance from the CPU, can be implemented as \cite{[162]}. Among these schemes, the ``\emph{Level 4}"  and ``\emph{Level 3}" are viewed as efficient processing techniques. The so-called \emph{Level 4} is a fully-centralized processing scheme where all the pilot and data signals received at APs are transmitted to the CPU via the fronthaul links and the CPU performs channel estimation and data detection. The similar scheme was also investigated in \cite{9528977,9849114,9043895}. The so-called \emph{Level 3} stands for a two layer decoding scheme: in the first layer, each AP estimates channels and decodes the UE data locally by applying an arbitrary combining scheme based on the local channel state information (CSI); in the second layer, all the local estimates of the UE data are gathered at the CPU in which they are linearly weighted by the optimal large-scale fading decoding (LSFD) coefficient to obtain the final decoding data. The LSFD scheme has been widely investigated in \cite{7869024,8809413,9276421,9529197} since it can make full use of the prominent network topology for CF mMIMO and achieve excellent performance.

To promote the practical implementation of the CF mMIMO network, a new framework of scalable CF mMIMO system and its respective processing algorithms were proposed in \cite{9064545} by exploiting the dynamic cooperation cluster (DCC) concept. Besides, the scalability aspects in a realistic scenario with multiple CPUs were considered in \cite{8761828}, where the data processing, network topology and power control strategies with multiple CPUs were discussed. Moreover, the authors of \cite{9499049} considered the uplink of a radio-strip-based CF mMIMO network architecture with sequential fronthaul links between APs and proposed MMSE-based sequential processing schemes, which significantly reduced the fronthaul requirement. However, when the CF mMIMO network is operated in practice, a more practical capacity-constrained fronthaul network would have a great effect on the system performance. The authors of \cite{8891922} and \cite{9446982} discussed the uplink performance of a CF mMIMO system with limited capacity fronthaul links. Furthermore, it is worth noting that the CF mMIMO architecture has been co-designed with another promising future wireless technology: Reconfigurable Intelligent Surface (RIS) \cite{9665300,9743355}, which would undoubtedly provide vital tutorials for the future wireless network design.

The vast majority of scientific papers on CF mMIMO focus on the scenario with single-antenna UEs. However, in practice, contemporary UEs with moderate physical sizes have already been equipped with multiple antennas to achieve higher multiplexing gain and boost the system reliability. The authors of \cite{8646330} investigated the UL performance of a CF mMIMO system with multi-antenna UEs over maximum ratio (MR) combining and zero-forcing (ZF) combining. The authors of \cite{8901451} considered a user-centric (UC) approach for CF mMIMO with multi-antenna UEs and proposed power allocation strategies for either sum-rate maximization or minimum-rate maximization. Besides, the authors of \cite{194} analyzed the downlink SE performance for a CF mMIMO system with multi-antenna UEs and computed SE expressions in closed-form. Then, the SE performance for a CF mMIMO system with multi-antenna UEs and low-resolution DACs was investigated in \cite{9424703}. Nevertheless, these works only investigated a simple distributed processing scheme and are based on the overly idealistic assumption of independent and identically distributed (i.i.d.) Rayleigh fading channels, neglecting the spatial correlation that has a significant impact on practical CF mMIMO systems \cite{8809413,9276421}. The authors of \cite{04962} considered a CF mMIMO system with multi-antenna UEs over the jointly-correlated Weichselberger model \cite{1576533} and analyzed four UL processing schemes.

As observed in \cite{194,04962}, increasing the number of antennas per UE may not always benefit the SE performance. The SE would reach the maximum value with particular number of antennas per UE, then decrease with the increase of number of antennas per UE. One main reason for this phenomenon is that the UEs cannot make full use of the benefit of equipping with multiple antennas to achieve higher SE performance without UL precoding schemes. So it is undoubtedly vital to design the UL precoding scheme to further improve the performance of systems. However, it is worth noting that the design of UL precoding for CF mMIMO has not been investigated. For the traditional mMIMO or MIMO systems, one popular optimization objective for the uplink/downlink precoding design is to maximize the weighted sum rate (WSR) \cite{5756489,6302110,4712693,9347738}. The authors of \cite{5756489} and \cite{4712693} discussed the equivalence between the WSR maximization problem and the Weighted sum-Minimum Mean Square Error (WMMSE) problem in MIMO systems and proposed an iteratively downlink transceiver design algorithm for the WSR maximization. Note that the algorithm relies on the iterative minimization of weighted MSE since the WMMSE problem are not jointly convex over all optimization variables. Moreover, the authors of \cite{6302110} investigated the UL precoding scheme optimization based on \cite{5756489} under sum-power-constraint or individual-power-constraint.

Motivated by the above observations, we investigate a CF mMIMO system with both multi-antenna APs and UEs over the Weichselberger Rayleigh fading  channel. Two pragmatic processing schemes: 1) the fully centralized processing scheme; 2) the large-scale fading decoding scheme are implemented. The main contributions are given as follows.
\begin{itemize}
\item We design an efficient UL precoding scheme to maximize the WSR for the fully centralized processing scheme based on an iteratively WMMSE (I-WMMSE) algorithm. Note that the design of I-WMMSE precoding scheme for the fully centralized processing scheme is implemented at the CPU and based on the instantaneous CSI.
\item For the LSFD processing scheme, we derive a UL precoding scheme for the WSR maximization based on an iteratively WMMSE algorithm. The design of I-WMMSE precoding scheme for the LSFD scheme is implemented at the CPU but based only on channel statistics. More importantly, we compute achievable SE expressions and optimal precoding schemes in novel closed-form for the LSFD scheme with MR combining.
\item We analyze the practical implementation and computation complexity for the proposed I-WMMSE precoding schemes. It is found that the proposed I-WMMSE precoding schemes can be guaranteed to converge. More importantly, the proposed UL precoding schemes are efficient to achieve excellent sum SE/rate performance and the average rate benefits from the multiple antennas at the UE-side, which undoubtedly provides vital insights for the practical implementation of multi-antenna UEs.
\end{itemize}

Note that this paper differs from the conference version \cite{2022arXiv220111299W} in the following aspects: \emph{i}) we investigate the fully centralized processing/LSFD schemes and design their respective UL precoding schemes, while only the LSFD scheme was considered in \cite{2022arXiv220111299W}; \emph{ii}) we provide details for the derivation of the I-WMMSE precoding schemes, which are omitted in \cite{2022arXiv220111299W} due to the lack of space; \emph{iii}) we analyze the practical implementation and convergence behavior of the proposed precoding schemes. More importantly, numerical results show vital insights for the CF mMIMO system with the proposed UL precoding schemes.

The rest of this paper is organized as follows. In Section \ref{sec:system}, we consider a CF mMIMO system with the Weichselberger Rayleigh fading channel, and describe the channel estimation and data detection. Then, Section \ref{sec:Iterative Optimization} introduces the fully centralized processing and LSFD processing schemes, and provides their respective achievable SE expressions. Novel closed-form SE expressions for the LSFD scheme with MR combining are derived. More importantly, based on the achievable SE expressions, we propose UL I-WMMSE precoding schemes for two processing schemes. Then, Section \ref{Precoding Implementation} provides some insights for the practical implementation and computation complexity of proposed I-WMMSE precoding schemes. In Section \ref{Numerical_Results}, numerical results and performance analysis for the I-WMMSE precoding schemes are provided. Finally, the major conclusions and future directions are drawn in Section \ref{sec:conclusion}.

\textbf{\emph{Notation}}: Lowercase letters $\mathbf{x}$ and boldface uppercase letters $\mathbf{X}$ denote the column vectors and matrices, respectively. $\mathbb{E} \left\{ \cdot \right\}$, $\mathrm{tr}\left\{ \cdot \right\}$ and $\triangleq$ are the expectation operator, the trace operator, and the definitions, respectively. $\left| \cdot \right|$, $\left\| \cdot \right\|$ and $\left\| \cdot \right\| _{\mathrm{F}}$ are the determinant of a matrix or the absolute value of a number, the Euclidean norm and the Frobenious norm, respectively. $\mathrm{vec}\left( \mathbf{A} \right)$ denotes a column vector formed by the stack of the columns of $\mathbf{A}$. The $n\times n$ identity matrix is represented by $\mathbf{I}_{n\times n}$. The Kronecker products and the element-wise products are denoted by $\otimes$ and $\odot$, respectively. Finally, $\mathbf{x}\sim \mathcal{N} _{\mathbb{C}}\left( 0,\mathbf{R} \right)$ is a circularly symmetric complex Gaussian distribution vector with correlation matrix $\mathbf{R}$.

\vspace{-0.6cm}
\section{System Model}\label{sec:system}
\vspace{-0.3cm}
In this paper, we investigate a CF mMIMO system consisting of $M$ APs and $K$ UEs, where all APs are connected to one or several CPUs via fronthaul links as shown in Fig. \ref{System_Model}. For simplicity, there is only one CPU and all APs serve all UEs\footnote{As shown in Fig. \ref{System_Model}, a more practical network topology is with multiple CPUs and dynamic cooperation clusters, where each UE is only served by a cluster of APs and the APs are grouped into cell-centric clusters. Each cell-centric cluster is connected to a particular CPU.}. The numbers of antennas per AP and UE are $L$ and $N$, respectively. A standard block fading model is investigated, in which the channel response is constant and frequency flat in a coherence block of $\tau _c$-length (channel uses). Let $\tau _p$ and $\tau _c-\tau _p$ denote channel uses dedicated for the channel estimation and data transmission, respectively. We denote by $\mathbf{H}_{mk}\in \mathbb{C}^{L\times N}$ the channel response between AP $m$ and UE $k$. We assume that $\mathbf{H}_{mk}$ for different AP-UE pairs are independent.
\begin{figure}[t]
\centering
\includegraphics[scale=0.6]{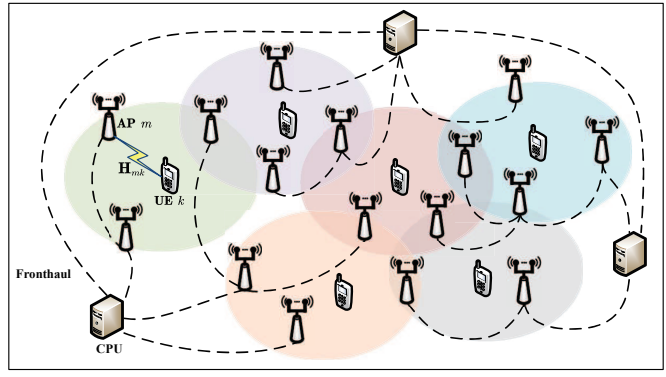}
\caption{A cell-free massive MIMO system.
\label{System_Model}}
\vspace{-0.6cm}
\end{figure}
\vspace{-0.6cm}
\subsection{Channel Model}
Based on the jointly-correlated (also known as the Weichselberger model \cite{1576533}) Rayleigh fading channel\footnote{Note that the Rayleigh fading channel is a special case of the Rician fading channel. And the performance gap between the Rician channel and the Rayleigh channel is small \cite{8645336}. However, the focus of this paper is not on the channel model but on the UL precoding scheme design. So for the simplicity of analysis, we investigate an essential Rayleigh fading channel by assuming there is no line-of-sight (LoS) link between each UE and AP.}, $\mathbf{H}_{mk}$ is modeled as
\begin{equation}\label{Weichselberger}
\mathbf{H}_{mk}=\mathbf{U}_{mk,\mathrm{r}}\left( \mathbf{\tilde{\Omega}}_{mk}\odot \mathbf{H}_{mk,\mathrm{iid}} \right) \mathbf{U}_{mk,\mathrm{t}}^{H}
\end{equation}
where $\mathbf{U}_{mk,\mathrm{r}}=\left[ \mathbf{u}_{mk,\mathrm{r},1},\cdots ,\mathbf{u}_{mk,\mathrm{r},L} \right] \in \mathbb{C}^{L\times L}$ and $\mathbf{U}_{mk,\mathrm{t}}=\left[ \mathbf{u}_{mk,\mathrm{t},1},\cdots ,\mathbf{u}_{mk,\mathrm{t},N} \right] \in \mathbb{C}^{N\times N}$ are the eigenvector matrices of the one-sided correlation matrices $\mathbf{R}_{mk,\mathrm{r}}\triangleq \mathbb{E}\left[ \mathbf{H}_{mk}\mathbf{H}_{mk}^{H} \right]$ and $\mathbf{R}_{mk,\mathrm{t}}\triangleq \mathbb{E}\left[ \mathbf{H}_{mk}^{T}\mathbf{H}_{mk}^{*} \right]$, and $\mathbf{H}_{mk,\mathrm{iid}}\in \mathbb{C}^{L\times N}$ is composed of i.i.d. $\mathcal{N}_{\mathbb{C}}\left( 0,1 \right)$ random entries, respectively. Besides, we denote by $\mathbf{\Omega }_{mk}\triangleq \mathbf{\tilde{\Omega}}_{mk}\odot \mathbf{\tilde{\Omega}}_{mk}\in \mathbb{R} ^{L\times N}$ the ``eigenmode coupling matrix'' with the $\left( l,n \right)$-th element $\left[ \mathbf{\Omega }_{mk} \right] _{ln}$ specifying the average amount of power coupling from $\mathbf{u}_{mk,\mathrm{r},l}$ to $\mathbf{u}_{mk,\mathrm{t},n}$. $\mathbf{H}_{mk}$ can also be formed as $\mathbf{H}_{mk}=\left[ \mathbf{h}_{mk,1},\cdots ,\mathbf{h}_{mk,N} \right] $ with $\mathbf{h}_{mk,n}\in \mathbb{C} ^L$ being the channel between AP $m$ and $n$-th antenna of UE $k$. By stacking the columns of $\mathbf{H}_{mk}$ on each other, we define $\mathbf{h}_{mk}\triangleq \mathrm{vec}\left( \mathbf{H}_{mk} \right) = [ \mathbf{h}_{mk,1}^{T},\cdots ,\mathbf{h}_{mk,N}^{T}] ^T\sim \mathcal{N} _{\mathbb{C}}\left( 0,\mathbf{R}_{mk} \right)$, where $\mathbf{R}_{mk}\triangleq \mathbb{E} \{ \mathbf{h}_{mk}\mathbf{h}_{mk}^{H} \} $ is the full correlation matrix
\begin{equation}
\mathbf{R}_{mk}=( \mathbf{U}_{mk,\mathrm{t}}^{*}\otimes \mathbf{U}_{mk,\mathrm{r}} ) \mathrm{diag}\left( \mathrm{vec}\left( \mathbf{\Omega }_{mk} \right) \right) ( \mathbf{U}_{mk,\mathrm{t}}^{*}\otimes \mathbf{U}_{mk,\mathrm{r}} ) ^H.
\end{equation}
Moreover, note that $\mathbf{R}_{mk}$ can be structured into the block form as \cite{04962} with the $(n,i)$-th submatrix being $\mathbf{R}_{mk}^{ni}=\mathbb{E} \{ \mathbf{h}_{mk,n}\mathbf{h}_{mk,i}^{H}\}$. Besides, the large-scale fading coefficient $\beta _{mk}$ can be extracted from $\mathbf{R}_{mk}$ as $\beta _{mk}=\frac{1}{LN}\mathrm{tr}\left( \mathbf{R}_{mk} \right) =\frac{1}{LN}\left\| \mathbf{\Omega }_{mk} \right\| _1$. It is worth mentioning that the motivations for adopting the Weichselberger channel model are: 1) The Weichselberger model investigated in \eqref{Weichselberger} not only captures the correlation features at both the AP-side and UE-side but models the joint correlation dependence between each AP-UE pair through the coupling matrix; 2) The coupling matrix $\mathbf{\Omega}_{mk}$ reflects the practical spatial arrangement of scattering objects between AP $m$ and UE $k$. More significantly, the Weichselberger model can reduce to most channel models of great interest by adjusting the coupling $\mathbf{\Omega}_{mk}$ to particular formulation, such as the Kronecker model and i.i.d. Rayleigh fading model \cite{1576533,04962}; 3) Compared with other stochastic channel models, the Weichselberger model displays significantly less modeling error, which is validated based on the practical measurement in \cite{1576533}.

\vspace*{-0.7cm}
\subsection{Channel Estimation}
\vspace*{-0.3cm}
For the channel estimation, mutually orthogonal pilot matrices are constructed and each pilot matrix is composed of $N$ mutually orthogonal pilot sequences.
We denote by $\mathbf{\Phi }_k$ the pilot matrix assigned to UE $k$ with $\boldsymbol{\Phi }_{k}^{H}\boldsymbol{\Phi }_l=\tau _p\mathbf{I}_N$, if $\ l=k$ and ${\bf{0}}$ otherwise.
And $\mathcal{P}_k$ is the index subset of UEs using the same pilot matrix as UE $k$ including itself. When all UEs transmit their pilot matrices, the received signal at AP $m$ $\mathbf{Y}_{mk}^{\mathrm{p}}\in \mathbb{C} ^{L\times \tau _p}$ is
$\mathbf{Y}_{m}^{\mathrm{p}}=\sum_{k=1}^K{\mathbf{H}_{mk}\mathbf{F}_{k,\mathrm{p}}\mathbf{\Phi }_{k}^{T}+\mathbf{N}_{m}^{\mathrm{p}}},$
where $\mathbf{F}_{k,\mathrm{p}}\in \mathbb{C} ^{N\times N}$ is the precoding matrix for UE $k$ under the phase of pilot transmission, $\mathbf{N}_{m}^{\mathrm{p}}\in \mathbb{C}^{L\times \tau _p}$ is the additive noise at AP $m$ with independent $\mathcal{N}_{\mathbb{C}}( 0,\sigma ^2 )$ entries and $\sigma ^2$ being the noise power, respectively. The pilot transmission should be implemented under the power constraint as $\mathrm{tr}( \mathbf{F}_{k,\mathrm{p}}\mathbf{F}_{k,\mathrm{p}}^{H}) \leqslant p_k$, where $p_k$ is the maximum transmit power for UE $k$. To derive sufficient statistics for $\mathbf{h}_{mk}$, AP $m$ projects $\mathbf{Y}_{mk}^{\mathrm{p}}$ onto $\mathbf{\Phi }_{k}^{*}$ as
$
\mathbf{Y}_{mk}^{p}=\mathbf{Y}_{m}^{p}\boldsymbol{\Phi }_{k}^{*}=\sum_{l=1}^K{\mathbf{H}_{ml}\mathbf{F}_{l,\mathrm{p}}\left( \boldsymbol{\Phi }_{l}^{T}\boldsymbol{\Phi }_{k}^{*} \right) +}\mathbf{N}_{m}^{\mathrm{p}}\boldsymbol{\Phi }_{k}^{*}=\sum_{l\in \mathcal{P}_k}{\tau _p\mathbf{H}_{ml}\mathbf{F}_{l,\mathrm{p}}}+\mathbf{Q}_{mk}^{\mathrm{p}},
$
where $\mathbf{Q}_{mk}^{\mathrm{p}}\triangleq \mathbf{N}_{m}^{p}\boldsymbol{\Phi }_{k}^{*}$. Then, following the standard MMSE estimation steps in \cite{5340650} and \cite{8187178}, AP $m$ can compute the MMSE estimation of $\mathbf{h}_{mk}$ as
\begin{equation}\label{eq:Local_Estimation}
\mathbf{\hat{h}}_{mk}=\mathrm{vec}( \mathbf{\hat{H}}_{mk} ) =\mathbf{R}_{mk}\mathbf{\tilde{F}}_{k,\mathrm{p}}^{H}\mathbf{\Psi }_{mk}^{-1}\mathbf{y}_{mk}^{\mathrm{p}},
\end{equation}
where $\mathbf{\hat{H}}_{mk}$ is the MMSE estimation of $\mathbf{H}_{mk}$, $\mathbf{\tilde{F}}_{k,\mathrm{p}}=\mathbf{F}_{k,\mathrm{p}}^{T}\otimes \mathbf{I}_L$, $\mathbf{y}_{mk}^{\mathrm{p}}\triangleq\mathrm{vec}\left( \mathbf{Y}_{mk}^{\mathrm{p}} \right)=\sum_{l\in \mathcal{P} _k}^{}{\tau _p\mathbf{\tilde{F}}_{l,\mathrm{p}}\mathbf{h}_{ml}+\mathbf{q}_{m}^{\mathrm{p}}}$, $\mathbf{q}_{m}^{\mathrm{p}}=\mathrm{vec}\left(\mathbf{Q}_{mk}^{\mathrm{p}}\right)$ and $\mathbf{\Psi }_{mk}=\sum\nolimits_{l\in \mathcal{P} _k}^{}{\tau _p\mathbf{\tilde{F}}_{l,\mathrm{p}}\mathbf{R}_{ml}\mathbf{\tilde{F}}_{l,\mathrm{p}}^{H}}+\sigma ^2\mathbf{I}_{LN}$, respectively. Note that the estimate $\mathbf{\hat{h}}_{mk}$ and estimation error $\mathbf{\tilde{h}}_{mk}=\mathbf{h}_{mk}-\mathbf{\hat{h}}_{mk}$ are independent random vectors distributed as $\mathbf{\hat{h}}_{mk}\sim \mathcal{N} _{\mathbb{C}}( \mathbf{0},\mathbf{\hat{R}}_{mk} )$ and $\mathbf{\tilde{h}}_{mk}\sim \mathcal{N} _{\mathbb{C}}( \mathbf{0},\mathbf{C}_{mk} )$, where $\mathbf{\hat{R}}_{mk}\triangleq \tau _p\mathbf{R}_{mk}\mathbf{\tilde{F}}_{k,\mathrm{p}}^{H}\mathbf{\Psi}_{mk}^{-1}\mathbf{\tilde{F}}_{k,\mathrm{p}}\mathbf{R}_{mk}$ and $\mathbf{C}_{mk}\triangleq \mathbf{R}_{mk}-\mathbf{\hat{R}}_{mk}$. We can also form $\mathbf{\hat{R}}_{mk}$ and $\mathbf{C}_{mk}$ in the block structure with the $(n,i)$-th submatrix being $\mathbf{\hat{R}}_{mk}^{ni}=\mathbb{E} \{ \mathbf{\hat{h}}_{mk,n}\mathbf{\hat{h}}_{mk,i}^{H} \} $ and $\mathbf{C}_{mk}^{ni}=\mathbb{E} \{ \mathbf{\tilde{h}}_{mk,n}\mathbf{\tilde{h}}_{mk,i}^{H} \}$, respectively.
\vspace*{-0.7cm}
\subsection{Data Transmission}
\vspace*{-0.3cm}
For the data transmission, all antennas of all UEs simultaneously transmit their data symbols to all APs. The received signal $\mathbf{y}_m\in \mathbb{C} ^L$ at AP $m$ is
\begin{equation}
\mathbf{y}_m=\sum_{k=1}^K{\mathbf{H}_{mk}\mathbf{s}_k}+\mathbf{n}_m,
\end{equation}
where $\mathbf{n}_m\sim \mathcal{N} _{\mathbb{C}}( 0,\sigma ^2\mathbf{I}_L ) $ is the independent receiver noise. The transmitted signal from UE $k$ $\mathbf{s}_k\in \mathbb{C} ^N$ can be constructed as $\mathbf{s}_k=\mathbf{F}_{k,\mathrm{u}}\mathbf{x}_k$, where $\mathbf{x}_k\sim \mathcal{N} _{\mathbb{C}}( 0,\mathbf{I}_N )$ is the data symbol for UE $k$ and $\mathbf{F}_{k,\mathrm{u}}\in \mathbb{C} ^{N\times N}$ is the precoding matrix for the data transmission which should satisfy the power constraint of UE $k$ as $\mathrm{tr}( \mathbf{F}_{k,\mathrm{u}}\mathbf{F}_{k,\mathrm{u}}^{H} ) \leqslant p_k$.
\vspace*{-0.5cm}
\section{Spectral Efficiency Analysis and I-WMMSE Precoding Design}\label{sec:Iterative Optimization}
\vspace*{-0.3cm}
In this section, we investigate two promising signal processing schemes, called ``fully centralized processing" and ``LSFD processing", and analyze their corresponding SE performance and design respective iteratively WMMSE precoding schemes\footnote{We only optimize the precoding matrices for the phase of data transmission $\mathbf{F}_{k,\mathrm{u}}$. The optimization of  $\mathbf{F}_{k,\mathrm{p}}$ is left for future research. Although we do not design $\mathbf{F}_{k,\mathrm{p}}$ in this paper, we try to keep the derived equations more generalized. So a scenario with arbitrary $\mathbf{F}_{k,\mathrm{p}}$ instead of limiting $\mathbf{F}_{k,\mathrm{p}}$ to a particular form is investigated. It is worth noting that all equations in this paper hold for any $\mathbf{F}_{k,\mathrm{p}}$ so undoubtedly provide some important guidelines for the investigation of optimization design for $\mathbf{F}_{k,\mathrm{p}}$ in the future work.}.
\vspace*{-0.7cm}
\subsection{Fully Centralized Processing}
\vspace*{-0.3cm}
\subsubsection{Spectral Efficiency Analysis}
For the fully centralized processing scheme, all $M$ APs send all the received pilot signals and data signals to the CPU. Indeed, both the channel estimation and data detection are implemented at the CPU. The collective channel $\mathbf{h}_k\in \mathbb{C}^{MLN}$ for UE $k$ can be constructed as $\mathbf{h}_k=[ \mathrm{vec}( \mathbf{H}_{1k})^T ,\cdots ,\mathrm{vec}( \mathbf{H}_{Mk} )^T ] ^T\sim \mathcal{N} _{\mathbb{C}}( \mathbf{0},\mathbf{R}_k)$ with $\mathbf{R}_k=\mathrm{diag}\left( \mathbf{R}_{1k},\cdots ,\mathbf{R}_{Mk} \right) \in \mathbb{C}^{MLN\times MLN}$ being the whole block-diagonal correlation matrix for UE $k$. Similar to \eqref{eq:Local_Estimation}, the CPU can derive the channel estimate for UE $k$ as\footnote{Note that the pilot signals received at the APs are first transmitted to the CPU and then the CPU estimates the channels, where $\tau _pML$ complex scalars are sent from the APs to the CPU at each coherence block. Alternatively, all APs can first estimate the channels as \eqref{eq:Local_Estimation}, and then send their channel estimates to the CPU, where $MKLN$ complex scalars are sent from the APs to the CPU at each coherence block. Since the pilot contamination is investigated ($\tau _p<KN$) in this paper, we consider the first transmission protocol due to its lower fronthaul overhead.}
$
\mathbf{\hat{h}}_k\triangleq \left[ \mathbf{\hat{h}}_{1k}^{T},\dots ,\mathbf{\hat{h}}_{Mk}^{T} \right] ^T\sim \mathcal{N} _{\mathbb{C}}\left( \mathbf{0},\tau _p\mathbf{R}_k\mathbf{\bar{F}}_{k,\mathrm{p}}^{H}\mathbf{\Psi }_{k}^{-1}\mathbf{\bar{F}}_{k,\mathrm{p}}\mathbf{R}_k \right)
$
where $\mathbf{\bar{F}}_{k,\mathrm{p}}=\mathrm{diag}( \underset{M}{\underbrace{\mathbf{\tilde{F}}_{k,\mathrm{p}},\dots ,\mathbf{\tilde{F}}_{k,\mathrm{p}}}})$ and $\mathbf{\Psi }_{k}^{-1}=\mathrm{diag}( \mathbf{\Psi }_{1k}^{-1},\dots ,\mathbf{\Psi }_{Mk}^{-1})$. The channel estimation error is $\mathbf{\tilde{h}}_k\sim \mathcal{N} _{\mathbb{C}}\left( 0,\mathbf{C}_k \right) $ where $\mathbf{C}_k\triangleq \mathbf{R}_k-\tau _p\mathbf{R}_k\mathbf{\bar{F}}_{k,\mathrm{p}}^{H}\mathbf{\Psi }_{k}^{-1}\mathbf{\bar{F}}_{k,\mathrm{p}}\mathbf{R}_k$. Moreover, the received data signal at the CPU can be denoted as
\begin{equation}
\underset{=\,\,\mathbf{y}}{\underbrace{[ \mathbf{y}_{1}^{T},\cdots ,\mathbf{y}_{M}^{T} ] ^T}}=\sum_{k=1}^K{\underset{=\,\,\mathbf{H}_k}{\underbrace{[ \mathbf{H}_{1k}^{T},\dots ,\mathbf{H}_{Mk}^{T} ] ^T}}\mathbf{F}_{k,\mathrm{u}}\mathbf{x}_k}+\underset{=\,\,\mathbf{n}}{\underbrace{[ \mathbf{n}_{1}^{T},\dots ,\mathbf{n}_{M}^{T} ] ^T}},
\end{equation}
or a compact form as
$
\mathbf{y}=\sum_{k=1}^K{\mathbf{H}_k\mathbf{F}_{k,\mathrm{u}}\mathbf{x}_k}+\mathbf{n}.
$

Under the setting of ``fully centralized processing", we assume that UL precoding matrices ($\mathbf{F}_{k,\mathrm{u}}$ and $\mathbf{F}_{k,\mathrm{p}}$) are available at the CPU. Based on the collective channel estimates, the CPU designs an arbitrary receive combining matrix $\mathbf{V}_k\in \mathbb{C} ^{LM\times N}$ for UE $k$ to detect $\mathbf{x}_k$ as
\begin{equation}
\mathbf{\check{x}}_k=\mathbf{V}_{k}^{H}\mathbf{y}=\mathbf{V}_{k}^{H}\mathbf{\hat{H}}_k\mathbf{F}_{k,\mathrm{u}}\mathbf{x}_k+\mathbf{V}_{k}^{H}\mathbf{\tilde{H}}_k\mathbf{F}_{k,\mathrm{u}}\mathbf{x}_k+\sum_{l\ne k}^K{\mathbf{V}_{k}^{H}\mathbf{H}_l\mathbf{F}_{l,\mathrm{u}}\mathbf{x}_l}+\mathbf{V}_{k}^{H}\mathbf{n},
\end{equation}
and the conditional MSE matrix for UE $k$ is
\begin{equation}\label{MSE_Matrix_1}
\begin{aligned}
&\mathbf{E}_{k,(1)}=\mathbb{E} \{(\mathbf{x}_k-\mathbf{\check{x}}_k)(\mathbf{x}_k-\mathbf{\check{x}}_k)^H|\{ \mathbf{\hat{H}}_k \} ,\{ \mathbf{F}_{k,\mathrm{u}} \} \}\\
&=\mathbf{I}_N-\mathbf{V}_{k}^{H}\mathbf{\hat{H}}_k\mathbf{F}_{k,\mathrm{u}}-\mathbf{F}_{k,\mathrm{u}}^{H}\mathbf{\hat{H}}_{k}^{H}\mathbf{V}_k+\mathbf{V}_{k}^{H}\left( \sum_{l=1}^K{\left( \mathbf{\hat{H}}_l\mathbf{\bar{F}}_{l,\mathrm{u}}\mathbf{\hat{H}}_{l}^{H}+\mathbf{C}_{l}^{\prime} \right)}+\sigma ^2\mathbf{I}_{ML} \right) \mathbf{V}_k
\end{aligned}
\end{equation}
where $\mathbf{\bar{F}}_{l,\mathrm{u}}\triangleq \mathbf{F}_{l,\mathrm{u}}\mathbf{F}_{l,\mathrm{u}}^{H}$, $\mathbf{C}_{l}^{\prime}\triangleq \mathrm{diag}\left( \mathbf{C}_{1l}^{\prime},\cdots ,\mathbf{C}_{Ml}^{\prime} \right) \in \mathbb{C}^{ML\times ML}$ and $\mathbf{C}_{ml}^{\prime}=\mathbb{E} \{ \mathbf{\tilde{H}}_{ml}\mathbf{\bar{F}}_{l,\mathrm{u}}\mathbf{\tilde{H}}_{ml}^{H} \} \in \mathbb{C} ^{L\times L}$ with the $\left( j,q \right) $-th element of $\mathbf{C}_{ml}^{\prime}$ being $\left[ \mathbf{C}_{ml}^{\prime} \right] _{jq}=\sum_{p_1=1}^N{\sum_{p_2=1}^N{\left[ \mathbf{\bar{F}}_l \right] _{p_2p_1}\left[ \mathbf{C}_{ml}^{p_2p_1} \right] _{jq}}}.$

By implementing the per-user-basis minimum mean-squared error-based successive interference cancellation (MMSE-SIC) detector while treating co-user interference as uncorrelated Gaussian noise, we derive the achievable SE for UE $k$ as follows.
\begin{coro}\label{Centralized_SE}
An achievable for UE $k$ under the setting of ``fully centralized processing" with the MMSE estimator is
\begin{equation}\label{eq:SE_4}
\mathrm{SE}_{k,(1)}=\left( 1-\frac{\tau _p}{\tau _c} \right) \mathbb{E}\left\{ \log _2\left| \mathbf{I}_N+\mathbf{D}_{k,\left( 1 \right)}^{H}\mathbf{\Sigma }_{k,\left( 1 \right)}^{-1}\mathbf{D}_{k,\left(1 \right)} \right| \right\} ,
\end{equation}
where $\mathbf{D}_{k,\left( 1 \right)}\triangleq \mathbf{V}_{k}^{H}\mathbf{\hat{H}}_k\mathbf{F}_{k,\mathrm{u}}$ and
$
\mathbf{\Sigma}_{k,(1)}\triangleq\mathbf{V}_{k}^{H}\left(\sum_{l=1}^K{\mathbf{\hat{H}}_l\mathbf{\bar{F}}_{l,\mathrm{u}}\mathbf{\hat{H}}_{l}^{H}}-\mathbf{\hat{H}}_k\mathbf{\bar{F}}_{k,\mathrm{u}}\mathbf{\hat{H}}_{k}^{H}+\sum_{l=1}^K{\mathbf{C}_{l}^{\prime}}+\sigma ^2\mathbf{I}_{ML} \right) \mathbf{V}_k.
$
The expectations are with respect to all sources of randomness.
\end{coro}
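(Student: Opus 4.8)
The plan is to recast the post-combining observation $\check{\mathbf{x}}_k = \mathbf{V}_k^H\mathbf{y}$ as an equivalent single-user $N$-dimensional Gaussian vector channel with a known gain matrix and additive interference-plus-noise, and then invoke the worst-case uncorrelated-noise capacity bound together with the optimality of MMSE-SIC. First I would split $\check{\mathbf{x}}_k$ into the desired part $\mathbf{D}_{k,(1)}\mathbf{x}_k$, where $\mathbf{D}_{k,(1)}=\mathbf{V}_k^H\mathbf{\hat{H}}_k\mathbf{F}_{k,\mathrm{u}}$ is deterministic once the estimates $\{\mathbf{\hat{H}}_l\}$ and the precoders are fixed, and the effective noise $\mathbf{w}_k \triangleq \mathbf{V}_k^H\mathbf{\tilde{H}}_k\mathbf{F}_{k,\mathrm{u}}\mathbf{x}_k + \sum_{l\ne k}\mathbf{V}_k^H\mathbf{H}_l\mathbf{F}_{l,\mathrm{u}}\mathbf{x}_l + \mathbf{V}_k^H\mathbf{n}$, which collects the self-interference due to the estimation error, the co-user interference, and the thermal noise.

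Second, conditioned on $\{\mathbf{\hat{H}}_l\}$ and $\{\mathbf{F}_{l,\mathrm{u}}\}$, I would verify that $\mathbf{w}_k$ has zero conditional mean and is conditionally uncorrelated with $\mathbf{x}_k$. This uses the orthogonality of the MMSE estimator, so that $\mathbf{\tilde{H}}_k$ is zero-mean and independent of $\mathbf{\hat{H}}_k$, together with $\mathbb{E}\{\mathbf{x}_l\}=\mathbf{0}$ and the mutual independence of the data symbols and the noise. I would then compute $\mathbb{E}\{\mathbf{w}_k\mathbf{w}_k^H\,|\,\cdot\}$: using $\mathbb{E}\{\mathbf{x}_l\mathbf{x}_l^H\}=\mathbf{I}_N$, the decomposition $\mathbf{H}_l=\mathbf{\hat{H}}_l+\mathbf{\tilde{H}}_l$ with vanishing cross-terms, the definition $\mathbf{C}_l'=\mathbb{E}\{\mathbf{\tilde{H}}_l\mathbf{\bar{F}}_{l,\mathrm{u}}\mathbf{\tilde{H}}_l^H\}$ (whose entries inherit the Weichselberger coupling structure recorded below \eqref{MSE_Matrix_1}), and the noise covariance $\sigma^2\mathbf{I}_{ML}$, the terms collapse exactly to $\mathbf{\Sigma}_{k,(1)}$. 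Equivalently, this can be read off from the MSE matrix \eqref{MSE_Matrix_1}: writing $\mathbf{Z}_k$ for its bracketed matrix, one has $\mathbf{\Sigma}_{k,(1)}=\mathbf{V}_k^H\mathbf{Z}_k\mathbf{V}_k-\mathbf{D}_{k,(1)}\mathbf{D}_{k,(1)}^H$, i.e. the interference-plus-noise covariance is the full received covariance minus the desired-signal contribution $\mathbf{D}_{k,(1)}\mathbf{D}_{k,(1)}^H=\mathbf{V}_k^H\mathbf{\hat{H}}_k\mathbf{\bar{F}}_{k,\mathrm{u}}\mathbf{\hat{H}}_k^H\mathbf{V}_k$.

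Third, I would apply the worst-case uncorrelated additive-noise argument: for a fixed Gaussian input $\mathbf{x}_k\sim\mathcal{N}_{\mathbb{C}}(\mathbf{0},\mathbf{I}_N)$ and an effective noise of fixed conditional covariance $\mathbf{\Sigma}_{k,(1)}$, the conditional mutual information $I(\mathbf{x}_k;\check{\mathbf{x}}_k\,|\,\{\mathbf{\hat{H}}_l\})$ is minimized when $\mathbf{w}_k$ is Gaussian, and this minimum equals $\log_2|\mathbf{I}_N+\mathbf{D}_{k,(1)}^H\mathbf{\Sigma}_{k,(1)}^{-1}\mathbf{D}_{k,(1)}|$ by applying Sylvester's determinant identity to the Gaussianized channel $\check{\mathbf{x}}_k=\mathbf{D}_{k,(1)}\mathbf{x}_k+\mathbf{w}_k$. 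Since the per-user MMSE-SIC detector across the $N$ streams of UE $k$ attains the mutual information of this vector channel, the quantity is an achievable rate. Averaging over all channel realizations and multiplying by the pre-log penalty $(1-\tau_p/\tau_c)$ accounting for the pilot overhead then yields \eqref{eq:SE_4}.

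The main obstacle I anticipate is the covariance bookkeeping rather than the information-theoretic step: I must confirm that the conditional second moment of the estimation-error self-interference term reduces to $\mathbf{C}_k'$ with the stated coupling entries $[\mathbf{C}_{ml}']_{jq}=\sum_{p_1,p_2}[\mathbf{\bar{F}}_l]_{p_2p_1}[\mathbf{C}_{ml}^{p_2p_1}]_{jq}$, and that the signal and interference-plus-noise contributions separate cleanly so that the $\mathbf{\hat{H}}_k\mathbf{\bar{F}}_{k,\mathrm{u}}\mathbf{\hat{H}}_k^H$ term is subtracted with the correct sign inside $\mathbf{\Sigma}_{k,(1)}$. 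Once the equivalent channel and its moments are in place and shown consistent with \eqref{MSE_Matrix_1}, the worst-case Gaussian bound and the MMSE-SIC achievability are standard, so the substantive work is this moment computation.
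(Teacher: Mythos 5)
Your proposal is correct and follows essentially the same route as the paper's (omitted) proof, which defers to the standard argument of \cite[Corollary 1]{04962}: condition on the channel estimates, isolate the deterministic effective channel $\mathbf{D}_{k,(1)}$, check that the residual interference-plus-noise is conditionally zero-mean and uncorrelated with $\mathbf{x}_k$ with covariance $\mathbf{\Sigma}_{k,(1)}$ (consistent with the MSE matrix in \eqref{MSE_Matrix_1}), and invoke the worst-case uncorrelated additive-noise bound together with MMSE-SIC achievability before averaging and applying the pre-log factor. The only phrasing worth tightening is that the lower bound applies precisely because the effective noise is merely \emph{uncorrelated} with (not independent of) $\mathbf{x}_k$ — it contains $\mathbf{x}_k$ through the estimation-error term — which is exactly the hypothesis of the standard worst-case Gaussian lemma you invoke.
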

\begin{proof}
The proof follows from the similar approach as \cite[Corollary 1]{04962} and is therefore omitted.
\end{proof}

We notice that Corollary~\ref{Centralized_SE} holds for any combining schemes. One promising combining scheme is the MMSE combining as
\begin{equation}\label{eq:MMSE_Combining}
\mathbf{V}_k^{\mathrm{MMSE}}=\left( \sum_{l=1}^K{\left( \mathbf{\hat{H}}_l\mathbf{\bar{F}}_{l,\mathrm{u}}\mathbf{\hat{H}}_{l}^{H}+\mathbf{C}_{l}^{\prime} \right)}+\sigma ^2\mathbf{I}_{ML} \right) ^{-1}\mathbf{\hat{H}}_k\mathbf{F}_{k,\mathrm{u}},
\end{equation}which can minimize the mean-squared error $\mathrm{MSE}_{k,(1)}=\mathrm{tr}( \mathbf{E}_{k,(1)})$. With the MMSE combining scheme, the conditional MSE matrix in \eqref{MSE_Matrix_1} is
\begin{equation}\label{eq:MSE_Matrix_Optimal_1}
\mathbf{E}_{k,(1)}^{\mathrm{opt}}=\mathbf{I}_N-\mathbf{F}_{k,\mathrm{u}}^{H}\mathbf{\hat{H}}_{k}^{H}\left( \sum_{l=1}^K{\left( \mathbf{\hat{H}}_l\mathbf{\bar{F}}_{l,\mathrm{u}}\mathbf{\hat{H}}_{l}^{H}+\mathbf{C}_{l}^{\prime} \right)}+\sigma ^2\mathbf{I}_{ML} \right) ^{-1}\mathbf{\hat{H}}_k\mathbf{F}_{k,\mathrm{u}}
\end{equation}

More importantly, the MMSE combining in \eqref{eq:MMSE_Combining} can also maximize the achievable SE in \eqref{eq:SE_4} as follows.
\begin{coro}\label{MMSE_Optimal}
The achievable SE for UE $k$ in \eqref{eq:SE_4} can be maximized by the MMSE combining scheme in \eqref{eq:MMSE_Combining} with the maximum value
\begin{equation}
\mathrm{SE}_{k,(1)}^{\mathrm{opt}}=\left( 1-\frac{\tau _p}{\tau _c} \right) \mathbb{E} \left\{ \log _2\left| \mathbf{I}_N+\mathbf{F}_{k,\mathrm{u}}^{H}\mathbf{\hat{H}}_{k}^{H}\left( \sum_{l=1}^K{\left( \mathbf{\hat{H}}_l\mathbf{\bar{F}}_{l,\mathrm{u}}\mathbf{\hat{H}}_{l}^{H}+\mathbf{C}_{l}^{\prime} \right)}-\mathbf{\hat{H}}_k\mathbf{\bar{F}}_{k,\mathrm{u}}\mathbf{\hat{H}}_{k}^{H}+\sigma ^2\mathbf{I}_{ML} \right) ^{-1}\mathbf{\hat{H}}_k\mathbf{F}_{k,\mathrm{u}} \right| \right\}.
\end{equation}
\end{coro}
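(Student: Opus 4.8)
The plan is to recognize the objective in \eqref{eq:SE_4} as a matrix-valued generalized SINR and to invoke the standard optimality of the MMSE receiver for each channel realization, so that the maximization over $\mathbf{V}_k$ can be carried out pointwise inside the expectation. First I would abbreviate the desired-signal term $\mathbf{A}_k\triangleq \mathbf{\hat{H}}_k\mathbf{F}_{k,\mathrm{u}}$ and the interference-plus-noise covariance $\mathbf{B}_k\triangleq \sum_{l=1}^K( \mathbf{\hat{H}}_l\mathbf{\bar{F}}_{l,\mathrm{u}}\mathbf{\hat{H}}_{l}^{H}+\mathbf{C}_{l}^{\prime} )-\mathbf{\hat{H}}_k\mathbf{\bar{F}}_{k,\mathrm{u}}\mathbf{\hat{H}}_{k}^{H}+\sigma ^2\mathbf{I}_{ML}$, which is positive definite since $\sigma^2>0$. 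With this notation $\mathbf{D}_{k,(1)}=\mathbf{V}_{k}^{H}\mathbf{A}_k$ and $\mathbf{\Sigma}_{k,(1)}=\mathbf{V}_{k}^{H}\mathbf{B}_k\mathbf{V}_k$, so the integrand in \eqref{eq:SE_4} depends on $\mathbf{V}_k$ only through $\mathbf{G}(\mathbf{V}_k)\triangleq \mathbf{A}_{k}^{H}\mathbf{V}_k(\mathbf{V}_{k}^{H}\mathbf{B}_k\mathbf{V}_k)^{-1}\mathbf{V}_{k}^{H}\mathbf{A}_k$.

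Second I would record a scale-invariance property: for any invertible $\mathbf{T}$ one has $\mathbf{G}(\mathbf{V}_k\mathbf{T})=\mathbf{G}(\mathbf{V}_k)$, because the factors of $\mathbf{T}$ and $\mathbf{T}^{-1}$ cancel through the inverse $(\mathbf{V}_{k}^{H}\mathbf{B}_k\mathbf{V}_k)^{-1}$. This observation is what lets me replace any candidate combiner by a convenient representative without changing the SE.

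Third, and this is the analytic core, I would establish the matrix inequality $\mathbf{G}(\mathbf{V}_k)\preceq \mathbf{A}_{k}^{H}\mathbf{B}_{k}^{-1}\mathbf{A}_k$ in the positive-semidefinite order. The clean route is to whiten: write $\mathbf{B}_k=\mathbf{B}_{k}^{1/2}\mathbf{B}_{k}^{1/2}$, set $\mathbf{\tilde{V}}\triangleq\mathbf{B}_{k}^{1/2}\mathbf{V}_k$ and $\mathbf{\tilde{A}}\triangleq\mathbf{B}_{k}^{-1/2}\mathbf{A}_k$, and observe that $\mathbf{G}(\mathbf{V}_k)=\mathbf{\tilde{A}}^{H}\mathbf{P}\mathbf{\tilde{A}}$, where $\mathbf{P}=\mathbf{\tilde{V}}(\mathbf{\tilde{V}}^{H}\mathbf{\tilde{V}})^{-1}\mathbf{\tilde{V}}^{H}$ is an orthogonal projector and hence satisfies $\mathbf{P}\preceq\mathbf{I}_{ML}$. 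This yields $\mathbf{\tilde{A}}^{H}\mathbf{P}\mathbf{\tilde{A}}\preceq\mathbf{\tilde{A}}^{H}\mathbf{\tilde{A}}=\mathbf{A}_{k}^{H}\mathbf{B}_{k}^{-1}\mathbf{A}_k$, with equality precisely when the column space of $\mathbf{\tilde{V}}$ contains that of $\mathbf{\tilde{A}}$, i.e. when $\mathbf{V}_k\propto\mathbf{B}_{k}^{-1}\mathbf{A}_k$ up to an invertible right factor. Since $\mathbf{X}\mapsto\log_2|\mathbf{I}_N+\mathbf{X}|$ is monotonically nondecreasing on the positive-semidefinite cone, this bound passes to the integrand, and then, taking expectations termwise, to $\mathrm{SE}_{k,(1)}$ itself.

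Finally I would verify that the combiner \eqref{eq:MMSE_Combining} attains the bound. Writing it as $\mathbf{V}_{k}^{\mathrm{MMSE}}=(\mathbf{B}_k+\mathbf{A}_k\mathbf{A}_{k}^{H})^{-1}\mathbf{A}_k$ and applying the push-through identity $(\mathbf{B}_k+\mathbf{A}_k\mathbf{A}_{k}^{H})^{-1}\mathbf{A}_k=\mathbf{B}_{k}^{-1}\mathbf{A}_k(\mathbf{I}_N+\mathbf{A}_{k}^{H}\mathbf{B}_{k}^{-1}\mathbf{A}_k)^{-1}$ exhibits $\mathbf{V}_{k}^{\mathrm{MMSE}}=\mathbf{B}_{k}^{-1}\mathbf{A}_k\mathbf{T}$ with $\mathbf{T}$ invertible; by the scale invariance from the second step it therefore lies in the optimal class, so $\mathbf{G}(\mathbf{V}_{k}^{\mathrm{MMSE}})=\mathbf{A}_{k}^{H}\mathbf{B}_{k}^{-1}\mathbf{A}_k=\mathbf{F}_{k,\mathrm{u}}^{H}\mathbf{\hat{H}}_{k}^{H}\mathbf{B}_{k}^{-1}\mathbf{\hat{H}}_k\mathbf{F}_{k,\mathrm{u}}$, which is exactly the bracketed quantity in the claimed $\mathrm{SE}_{k,(1)}^{\mathrm{opt}}$. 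I expect the main difficulty to be bookkeeping rather than conceptual depth: one must keep track of the fact that \eqref{eq:MMSE_Combining} uses the full sum over $l$ including the $l=k$ term, so it equals $(\mathbf{B}_k+\mathbf{A}_k\mathbf{A}_{k}^{H})^{-1}\mathbf{A}_k$ rather than $\mathbf{B}_{k}^{-1}\mathbf{A}_k$, and then confirm through the push-through identity that the two differ only by an invertible right factor that the $\log\det$ objective cannot distinguish.
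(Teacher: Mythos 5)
Your proposal is correct and is essentially the standard argument that the paper itself delegates to \cite[Appendix~B]{04962}: rewrite the integrand as a generalized Rayleigh quotient $\mathbf{A}_k^H\mathbf{V}_k(\mathbf{V}_k^H\mathbf{B}_k\mathbf{V}_k)^{-1}\mathbf{V}_k^H\mathbf{A}_k$, bound it by $\mathbf{A}_k^H\mathbf{B}_k^{-1}\mathbf{A}_k$ via whitening and the projector inequality $\mathbf{P}\preceq\mathbf{I}_{ML}$, and show that \eqref{eq:MMSE_Combining} equals $\mathbf{B}_k^{-1}\mathbf{A}_k$ up to an invertible right factor (push-through identity) that the $\log\det$ objective cannot see. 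The only points worth making explicit are that $\mathbf{V}_k$ must have full column rank for $(\mathbf{V}_k^H\mathbf{B}_k\mathbf{V}_k)^{-1}$ to exist and that the pointwise (per channel realization) optimization legitimately passes through the expectation because the combiner is a function of the instantaneous estimates; neither affects the validity of your argument.
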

\begin{proof}
The proof can be found in \cite[Appendix B]{04962} and is therefore omitted.
\end{proof}

\subsubsection{Iteratively WMMSE Precoding Design}
In this part, we design the uplink precoding scheme for the ``fully centralized processing". One popular weighted sum-rate maximization problem is investigated as\footnote{The notation $\mathbf{F}$ is short for $\{ \mathbf{F}_{k,\mathrm{u}} \} _{k=1,\dots ,K}$, denoting all variables $\mathbf{F}_{k,\mathrm{u}}$  with $k=1,\dots ,K$. Similar definitions are applied for $\mathbf{V}$, $\mathbf{A}$, $\mathbf{W}$, $\mathbf{S}$ in the following. In this section, we denote by $\mathbf{F}_{k,\mathrm{u},(1)}$ and $\mathbf{F}_{k,\mathrm{u},(2)}$ the UL precoding matrix of UE $k$ for the fully centralized processing and LSFD scheme, respectively.}
\begin{equation}\label{Sum_SE_Fully}
\begin{aligned}
&\underset{\left\{ \mathbf{F} \right\}}{\max}\sum_{k=1}^K{\mu_{k,(1)}\mathrm{SE}_{k,(1)}}\\
&\mathrm{s}.\mathrm{t}. \left\| \mathbf{F}_{k,\mathrm{u},(1)} \right\| ^2\leqslant p_k\,\,\forall k=1,\dots ,K
\end{aligned}
\end{equation}
where $\mu_{k,(1)}$ represents the priority weight of UE $k$ and $\mathrm{SE}_{k,(1)}$ is given by \eqref{eq:SE_4}.

As \cite{5756489} and \cite{4712693}, the matrix-weighted sum-MSE minimization problem as
\vspace*{-0.1cm}
\begin{equation}\label{Sum_MSE}
\vspace*{-0.1cm}
\begin{aligned}
&\underset{\left\{ \mathbf{F},\mathbf{V},\mathbf{W} \right\}}{\min}\sum_{k=1}^K{\mu _{k,(1)}\left[ \mathrm{tr}\left( \mathbf{W}_{k,(1)}\mathbf{E}_{k,(1)} \right) -\log _2\left| \mathbf{W}_{k,(1)} \right| \right]}\\
&\mathrm{s}.\mathrm{t}. \left\| \mathbf{F}_{k,\mathrm{u},(1)} \right\| ^2\leqslant p_k\,\,\forall k=1,\dots ,K
\end{aligned}
\end{equation}
is equivalent to the weighted sum-rate maximization problem \eqref{Sum_SE_Fully}, where $\mathbf{W}_{k,(1)}\in \mathbb{C} ^{N\times N}$ is the weight matrix for UE $k$. We notice that \eqref{Sum_MSE} is convex over each optimization variable $\mathbf{F}$, $\mathbf{V}$, $\mathbf{W}$ but is not jointly convex over all optimization variables. Following the method in \cite{5756489}, we can solve \eqref{Sum_MSE} by sequentially fixing two of the three optimization variables $\mathbf{F}$, $\mathbf{V}$, $\mathbf{W}$ and updating the third.

Fixing the other variables, the update of $\mathbf{V}_k$ is given by the MMSE solution as \eqref{eq:MMSE_Combining}. Under the MMSE combining, the MSE matrix is given by \eqref{eq:MSE_Matrix_Optimal_1}. Then, note that optimal $\mathbf{W}_{k,(1)}$ for \eqref{Sum_MSE} is
\begin{equation}\label{W_1}
\mathbf{W}_{k,(1)}^{\mathrm{opt}}=\mathbf{E}_{k,(1)} ^{-1},
\end{equation}
which can be easily derived through the first order optimality condition for $\mathbf{W}_{k,(1)}$ by fixing $\mathbf{F}$ and $\mathbf{V}$.
\begin{rem}
When the MMSE combining $\mathbf{V}_{k}^{\mathrm{MMSE}}$ and $\mathbf{W}_{k,(1)}^{\mathrm{opt}}$ for all UEs are implemented in \eqref{Sum_MSE}, we have $\mathrm{tr}( \mathbf{W}_{k,(1)}\mathbf{E}_{k,(1)}) -\log _2\left| \mathbf{W}_{k,(1)} \right|=\mathrm{tr}\left( \mathbf{I}_N \right)-\log _2| ( \mathbf{E}_{k,(1)}^{\mathrm{opt}}) ^{-1} |$. So the matrix-weighted sum-MSE minimization problem in \eqref{Sum_MSE} would reduce to the equivalent optimization problem of \eqref{Sum_SE_Fully} as\footnote{Note that ``SE'' is equivalent to ``rate'' except from having one scaling factor $(\tau_c-\tau_p)/\tau_c$. Since $\tau_c$ and $\tau_p$ are constants in this paper, so we ignore the difference between SE and rate in the optimization problem.}:
\vspace*{-0.1cm}
\begin{equation}\label{SE_MSE}
\vspace*{-0.1cm}
\begin{aligned}
&\underset{\left\{ \mathbf{F}\right\}}{\max}\sum_{k=1}^K{\mu _{k,(1)}\log _2\left| \left( \mathbf{E}_{k,(1)}^{\mathrm{opt}} \right) ^{-1} \right|}\\
&\mathrm{s}.\mathrm{t}. \left\| \mathbf{F}_{k,\mathrm{u},(1)} \right\| ^2\leqslant p_k\,\,\forall k=1,\dots ,K
\end{aligned}
\end{equation}
which is a well-known relationship between $\mathbf{E}_{k,(1)}^{\mathrm{opt}}$ and $\mathrm{SE}_{k,(1)}^{\mathrm{opt}}$.
\end{rem}

Finally, fixing $\mathbf{V}$ and $\mathbf{W}$, the update of $\mathbf{F}_{k,\mathrm{u},(1)}$ for \eqref{Sum_MSE} results in the optimization problem as\footnote{It is worth mentioning that the updates of optimization variables are based on the preliminary of fixing the other optimization variables. For instance, when updating $\mathbf{F}_{k,\mathrm{u},(1)}$, we should fix the other optimization variables ($\mathbf{V}_k$ and $\mathbf{W}_{k,(1)}$) but not only limited to their respective optimal solutions $\mathbf{V}_{k}^{\mathrm{MMSE}}$ and $\mathbf{W}_{k,(1)}^{\mathrm{opt}}$. So we update $\mathbf{F}_{k,\mathrm{u},(1)}$ based on \eqref{F_Problem} with generalized $\mathbf{V}_k$ and $\mathbf{W}_{k,(1)}$ instead of \eqref{SE_MSE} with optimal $\mathbf{V}_{k}^{\mathrm{MMSE}}$ and $\mathbf{W}_{k,(1)}^{\mathrm{opt}}$.}

\begin{equation}\label{F_Problem}
\begin{aligned}
&\underset{\left\{ \mathbf{F} \right\}}{\min}\sum_{k=1}^K{\mu _{k,(1)}\mathrm{tr}\left[ \mathbf{W}_{k,(1)}\left( \mathbf{I}_N-\mathbf{V}_{k}^{H}\mathbf{\hat{H}}_k\mathbf{F}_{k,\mathrm{u},(1)} \right) \left( \mathbf{I}_N-\mathbf{V}_{k}^{H}\mathbf{\hat{H}}_k\mathbf{F}_{k,\mathrm{u},(1)} \right) ^H \right]}\\
&\,\,   +\sum_{k=1}^K{\mu _{k,(1)}\mathrm{tr}\left[ \mathbf{W}_{k,(1)}\mathbf{V}_{k}^{H}\left( \sum_{l\ne k}^K{\mathbf{\hat{H}}_l\mathbf{F}_{l,\mathrm{u},(1)}\mathbf{F}_{l,\mathrm{u},(1)}^H\mathbf{\hat{H}}_{l}^{H}} \right) \mathbf{V}_k \right]}-\sum_{k=1}^K{\mu _{k,(1)}\log _2\left| \mathbf{W}_{k,(1)} \right|}\\
&\,\,  +\sum_{k=1}^K{\mu _{k,(1)}\mathrm{tr}}\left[ \mathbf{W}_{k,(1)}\mathbf{V}_{k}^{H}\left( \sum_{l=1}^K{\mathbb{E} \left\{ \left. \mathbf{\tilde{H}}_l\mathbf{F}_{l,\mathrm{u},(1)} \mathbf{F}_{l,\mathrm{u},(1)}^H\mathbf{\tilde{H}}_{l}^{H} \right|\mathbf{F} \right\}}+\sigma ^2\mathbf{I}_{ML} \right) \mathbf{V}_k \right]\\
&\mathrm{s}.\mathrm{t}. \left\| \mathbf{F}_{k,\mathrm{u},(1)} \right\| ^2\leqslant p_k\,\,\forall k=1,\dots ,K
\end{aligned}
\end{equation}
which is a convex quadratic optimization problem. So the classic Lagrange multipliers methods and Karush-Kuhn-Tucker (KKT) conditions can be applied to derive an optimal solution. The Lagrange function of \eqref{F_Problem} is
\begin{equation}\label{Lagrange_Function_1}
\begin{aligned}
f\left( \mathbf{F}_{1,\mathrm{u},(1)},\dots ,\mathbf{F}_{K,\mathrm{u},(1)} \right) &=\sum_{k=1}^K{\mu _{k,(1)}}\mathrm{tr}\left[ \mathbf{W}_{k,(1)}\left( \mathbf{I}_N-\mathbf{V}_{k}^{H}\mathbf{\hat{H}}_k\mathbf{F}_{k,\mathrm{u},(1)} \right) \left( \mathbf{I}_N-\mathbf{V}_{k}^{H}\mathbf{\hat{H}}_k\mathbf{F}_{k,\mathrm{u},(1)} \right) ^H \right]\\
&+\sum_{k=1}^K{\mu _{k,(1)}\mathrm{tr}\left[ \mathbf{W}_{k,(1)}\mathbf{V}_{k}^{H}\left( \sum_{l\ne k}^K{\mathbf{\hat{H}}_l\mathbf{F}_{l,\mathrm{u},(1)} \mathbf{F}_{l,\mathrm{u},(1)}^H\mathbf{\hat{H}}_{l}^{H}} \right) \mathbf{V}_k \right]}\\
&+\sum_{k=1}^K{\mu _{k,(1)}\mathrm{tr}\left[ \mathbf{W}_{k,(1)}\mathbf{V}_{k}^{H}\left( \sum_{l=1}^K{\mathbb{E} \left\{ \left. \mathbf{\tilde{H}}_l\mathbf{F}_{l,\mathrm{u},(1)} \mathbf{F}_{l,\mathrm{u},(1)} ^H\mathbf{\tilde{H}}_{l}^{H} \right|\mathbf{F} \right\}}+\sigma ^2\mathbf{I}_{ML} \right) \mathbf{V}_k \right]}\\
&+\sum_{k=1}^K{\lambda _{k,(1)}\left[ \mathrm{tr}\left( \mathbf{F}_{k,\mathrm{u},(1)} \mathbf{F}_{k,\mathrm{u},(1)} ^H \right) -p_k \right]}
\end{aligned}
\end{equation}

Finally, we derive the optimal precoding scheme as the following theorem.
\begin{thm}\label{Optimal_Precoding_1}
By fixing other optimization variables and applying the first-order optimality condition of \eqref{Lagrange_Function_1} with respect to each $\mathbf{F}_{k,\mathrm{u},(1)}$, the optimal precoding scheme is given by
\begin{equation}\label{Optimal_F_1}
\begin{aligned}
\mathbf{F}_{k,\mathrm{u},(1)}^{\mathrm{opt}}&=\mu _{k,(1)}\left[ \sum_{l=1}^K{\mu _{l,(1)}\left( \mathbf{\hat{H}}_{k}^{H}\mathbf{V}_l\mathbf{W}_{l,(1)}\mathbf{V}_{l}^{H}\mathbf{\hat{H}}_k+\mathbb{E} \left\{ \left. \mathbf{\tilde{H}}_{k}^{H}\mathbf{V}_l\mathbf{W}_{l,(1)}\mathbf{V}_{l}^{H}\mathbf{\tilde{H}}_k \right|\mathbf{V},\mathbf{W} \right\} \right)}+\lambda _{k,(1)}\mathbf{I}_N \right] ^{-1}\mathbf{\hat{H}}_{k}^{H}\mathbf{V}_k\mathbf{W}_{k,(1)}\\
&=\mu _{k,(1)}\left[ \sum_{l=1}^K{\mu _{l,(1)}\left( \mathbf{\hat{H}}_{k}^{H}\mathbf{V}_l\mathbf{W}_{l,(1)}\mathbf{V}_{l}^{H}\mathbf{\hat{H}}_k+\mathbf{\bar{C}}_{kl} \right)}+\lambda _{k,(1)}\mathbf{I}_N \right] ^{-1}\mathbf{\hat{H}}_{k}^{H}\mathbf{V}_k\mathbf{W}_{k,(1)},
\end{aligned}
\end{equation}
where $\lambda _{k,(1)}\geqslant 0$ is the Lagrangian multiplier and the $(i,n)$-th element of $\mathbf{\bar{C}}_{kl}\triangleq \mathbb{E} \{  \mathbf{\tilde{H}}_{k}^{H}\mathbf{V}_l\mathbf{W}_{l,(1)}\mathbf{V}_{l}^{H}\mathbf{\tilde{H}}_k |\mathbf{V},\mathbf{W}\}$ $\in \mathbb{C} ^{N\times N}$ is $\left[ \mathbf{\bar{C}}_{kl} \right] _{in}=\mathrm{tr(}\mathbf{\bar{V}}_l\mathbb{E} \{\mathbf{\tilde{h}}_{k,n}\mathbf{\tilde{h}}_{k,i}^{H}\})$ $=\mathrm{tr}\left( \mathbf{\bar{V}}_l\mathbf{C}_{k,in} \right)$ with $
\mathbf{\bar{V}}_l\triangleq \mathbf{V}_l\mathbf{W}_{l,(1)}\mathbf{V}_{l}^{H}$ and $\mathbf{C}_{k,ni}\triangleq \mathbb{E} \{ \mathbf{\tilde{h}}_{k,n}\mathbf{\tilde{h}}_{k,i}^{H}\} =\mathrm{diag}\left( \mathbf{C}_{1k}^{ni},\dots ,\mathbf{C}_{Mk}^{ni} \right)\in \mathbb{C} ^{ML\times ML}$. According to the KKT condition, $\lambda _{k,(1)}$ and $\mathbf{F}_{k,\mathrm{u},(1)}$ should also satisfy
\begin{equation}\label{KKT_1}
\begin{aligned}
\left\| \mathbf{F}_{k,\mathrm{u},(1)} \right\| ^2\leqslant p_k,\quad \lambda _{k,(1)}\left( \left\| \mathbf{F}_{k,\mathrm{u},(1)} \right\| ^2-p_k \right) =0,\quad\lambda _{k,(1)}\geqslant 0.
\end{aligned}
\end{equation}
\end{thm}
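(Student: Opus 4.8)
The plan is to exploit the fact, already noted below \eqref{F_Problem}, that the $\mathbf{F}$-update is a convex quadratic problem with a single convex norm constraint per user, so the KKT conditions are both necessary and sufficient; it therefore suffices to drive the (Wirtinger) gradient of the Lagrangian \eqref{Lagrange_Function_1} to zero and append the complementary-slackness conditions. First I would treat $\mathbf{F}_{k,\mathrm{u},(1)}$ and its conjugate as formally independent and compute $\partial f/\partial \mathbf{F}_{k,\mathrm{u},(1)}^{*}$, using the standard identities $\partial\,\mathrm{tr}(\mathbf{X}^{H}\mathbf{B})/\partial\mathbf{X}^{*}=\mathbf{B}$, $\partial\,\mathrm{tr}(\mathbf{X}\mathbf{B})/\partial\mathbf{X}^{*}=\mathbf{0}$, and $\partial\,\mathrm{tr}(\mathbf{C}\mathbf{X}\mathbf{X}^{H})/\partial\mathbf{X}^{*}=\mathbf{C}\mathbf{X}$ for Hermitian $\mathbf{C}$.

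The crucial bookkeeping step is to collect every summand of \eqref{Lagrange_Function_1} in which the target matrix $\mathbf{F}_{k,\mathrm{u},(1)}$ actually appears. It enters: (i) the $k$-th desired-signal MSE term, contributing both the linear piece $-\mu_{k,(1)}\mathbf{\hat{H}}_{k}^{H}\mathbf{V}_{k}\mathbf{W}_{k,(1)}$ and the quadratic piece $\mu_{k,(1)}\mathbf{\hat{H}}_{k}^{H}\mathbf{V}_{k}\mathbf{W}_{k,(1)}\mathbf{V}_{k}^{H}\mathbf{\hat{H}}_{k}\mathbf{F}_{k,\mathrm{u},(1)}$; (ii) as an interferer, i.e.\ the $l=k$ index, in the estimated-channel interference term of every other user $l\neq k$; and (iii) as the $l=k$ index in the estimation-error term of every user. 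Adding the $k$-th quadratic piece in (i) to the $l\neq k$ contributions in (ii) completes the sum into $\sum_{l=1}^{K}\mu_{l,(1)}\mathbf{\hat{H}}_{k}^{H}\mathbf{V}_{l}\mathbf{W}_{l,(1)}\mathbf{V}_{l}^{H}\mathbf{\hat{H}}_{k}\mathbf{F}_{k,\mathrm{u},(1)}$, while the constraint term supplies $\lambda_{k,(1)}\mathbf{F}_{k,\mathrm{u},(1)}$. Setting the total gradient to zero and isolating $\mathbf{F}_{k,\mathrm{u},(1)}$ yields \eqref{Optimal_F_1} after left-multiplying by the bracketed (Hermitian, and positive-definite once $\lambda_{k,(1)}\geq 0$) inverse.

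The step I expect to be most delicate is handling the expectation in contribution (iii): I would interchange expectation and differentiation, which is legitimate since $\mathbf{F}_{k,\mathrm{u},(1)}$ is deterministic given $\mathbf{V},\mathbf{W}$, obtaining the term $\sum_{l=1}^{K}\mu_{l,(1)}\mathbb{E}\{\mathbf{\tilde{H}}_{k}^{H}\mathbf{V}_{l}\mathbf{W}_{l,(1)}\mathbf{V}_{l}^{H}\mathbf{\tilde{H}}_{k}\,|\,\mathbf{V},\mathbf{W}\}\mathbf{F}_{k,\mathrm{u},(1)}$, and then reduce this conditional expectation to the closed form $\mathbf{\bar{C}}_{kl}$. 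For the latter I would write $\mathbf{\tilde{H}}_{k}=[\mathbf{\tilde{h}}_{k,1},\dots,\mathbf{\tilde{h}}_{k,N}]$ columnwise, identify the $(i,n)$-th entry as $\mathbf{\tilde{h}}_{k,i}^{H}\mathbf{\bar{V}}_{l}\mathbf{\tilde{h}}_{k,n}$ with $\mathbf{\bar{V}}_{l}\triangleq\mathbf{V}_{l}\mathbf{W}_{l,(1)}\mathbf{V}_{l}^{H}$, and invoke the scalar-trace identity $\mathbb{E}\{\mathbf{\tilde{h}}_{k,i}^{H}\mathbf{\bar{V}}_{l}\mathbf{\tilde{h}}_{k,n}\}=\mathrm{tr}(\mathbf{\bar{V}}_{l}\,\mathbb{E}\{\mathbf{\tilde{h}}_{k,n}\mathbf{\tilde{h}}_{k,i}^{H}\})=\mathrm{tr}(\mathbf{\bar{V}}_{l}\mathbf{C}_{k,ni})$, which reproduces the stated entries of $\mathbf{\bar{C}}_{kl}$ and is consistent with the block-diagonal $\mathbf{C}_{k,ni}=\mathrm{diag}(\mathbf{C}_{1k}^{ni},\dots,\mathbf{C}_{Mk}^{ni})$. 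Finally, the conditions \eqref{KKT_1} follow as the remaining KKT requirements, namely primal feasibility, dual feasibility $\lambda_{k,(1)}\geq 0$, and complementary slackness, with $\lambda_{k,(1)}$ chosen (e.g.\ by a one-dimensional bisection on the monotone power function) so that the constraint is met with equality precisely when the unconstrained $\lambda_{k,(1)}=0$ solution would violate it.
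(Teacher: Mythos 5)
Your proposal is correct and follows essentially the same route as the paper's Appendix on Theorem~\ref{Optimal_Precoding_1}: a first-order optimality condition on the Lagrangian \eqref{Lagrange_Function_1} (the paper differentiates with respect to $\mathbf{F}_{k,\mathrm{u},(1)}$, you with respect to its conjugate, which is an equivalent Wirtinger convention), followed by the trace identity of the paper's Lemma~\ref{Lemma1} to reduce $\mathbb{E}\{\mathbf{\tilde{H}}_{k}^{H}\mathbf{V}_l\mathbf{W}_{l,(1)}\mathbf{V}_{l}^{H}\mathbf{\tilde{H}}_k\}$ to $\mathbf{\bar{C}}_{kl}$ with block-diagonal $\mathbf{C}_{k,ni}$, and the standard KKT conditions for the single norm constraint. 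Your term-by-term bookkeeping of where $\mathbf{F}_{k,\mathrm{u},(1)}$ enters the objective is exactly what the paper's derivative computation encodes, so there is nothing to add.
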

\begin{IEEEproof}
The proof is given in Appendix~\ref{Appendix3}.
\end{IEEEproof}


We denote by $\mathbf{F}_{k,\mathrm{u},(1)}(\lambda _{k,(1)})$ the right-hand side of \eqref{Optimal_F_1}, when $\sum_{l=1}^K{\mu _{l,(1)}( \mathbf{\hat{H}}_{k}^{H}\mathbf{V}_l\mathbf{W}_{l,(1)}\mathbf{V}_{l}^{H}\mathbf{\hat{H}}_k+\mathbf{\bar{C}}_{kl} )}$ is invertible and $\mathrm{tr}[ \mathbf{F}_{k,\mathrm{u},(1)}( 0 ) \mathbf{F}_{k,\mathrm{u},(1)}(0) ^H ] \leqslant p_k$, then $\mathbf{F}_{k,\mathrm{u},(1)}^{\mathrm{opt}}=\mathbf{F}_{k,\mathrm{u},(1)}\left( 0 \right)$, otherwise we have
\vspace*{-0.5cm}
\begin{equation}\notag
\begin{aligned}
\mathrm{tr}[ \mathbf{F}_{k,\mathrm{u},(1)}( \lambda _{k,(1)} ) \mathbf{F}_{k,\mathrm{u},(1)}( \lambda _{k,(1)} ) ^H ] =p_k
\end{aligned}
\end{equation}
\vspace*{-0.2cm}
to satisfy \eqref{KKT_1}.
\begin{coro}\label{Decreasing_function_1}
$\mathrm{tr}[ \mathbf{F}_{k,\mathrm{u},(1)}( \lambda _{k,(1)} ) \mathbf{F}_{k,\mathrm{u},(1)}( \lambda _{k,(1)} ) ^H ]$ is a monotonically decreasing function of $\lambda _{k,(1)}$.
\end{coro}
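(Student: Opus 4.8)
The plan is to treat the right-hand side of \eqref{Optimal_F_1} as an explicit matrix-valued function of the scalar $\lambda_{k,(1)}$ and to reduce the trace of interest to a sum of elementary scalar functions whose monotonicity is transparent. First I would introduce the shorthand $\mathbf{B}_{k,(1)}\triangleq\sum_{l=1}^K\mu_{l,(1)}\left(\mathbf{\hat{H}}_k^H\mathbf{V}_l\mathbf{W}_{l,(1)}\mathbf{V}_l^H\mathbf{\hat{H}}_k+\mathbf{\bar{C}}_{kl}\right)$ and $\mathbf{G}_k\triangleq\mathbf{\hat{H}}_k^H\mathbf{V}_k\mathbf{W}_{k,(1)}$, so that $\mathbf{F}_{k,\mathrm{u},(1)}(\lambda_{k,(1)})=\mu_{k,(1)}(\mathbf{B}_{k,(1)}+\lambda_{k,(1)}\mathbf{I}_N)^{-1}\mathbf{G}_k$ and the quantity to study is $g(\lambda_{k,(1)})\triangleq\mu_{k,(1)}^2\,\mathrm{tr}[(\mathbf{B}_{k,(1)}+\lambda_{k,(1)}\mathbf{I}_N)^{-1}\mathbf{G}_k\mathbf{G}_k^H(\mathbf{B}_{k,(1)}+\lambda_{k,(1)}\mathbf{I}_N)^{-1}]$, using that $(\mathbf{B}_{k,(1)}+\lambda_{k,(1)}\mathbf{I}_N)^{-1}$ is Hermitian.

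The key preliminary observation is that $\mathbf{B}_{k,(1)}$ is Hermitian positive semidefinite: each $\mathbf{V}_l\mathbf{W}_{l,(1)}\mathbf{V}_l^H$ is Hermitian PSD because $\mathbf{W}_{l,(1)}=\mathbf{E}_{l,(1)}^{-1}$ is the inverse of a PSD MSE matrix, whence $\mathbf{\hat{H}}_k^H\mathbf{V}_l\mathbf{W}_{l,(1)}\mathbf{V}_l^H\mathbf{\hat{H}}_k$ and the expectation $\mathbf{\bar{C}}_{kl}$ of $\mathbf{\tilde{H}}_k^H(\mathbf{V}_l\mathbf{W}_{l,(1)}\mathbf{V}_l^H)\mathbf{\tilde{H}}_k$ are PSD; since the priority weights satisfy $\mu_{l,(1)}\ge 0$, we get $\mathbf{B}_{k,(1)}\succeq\mathbf{0}$. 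Consequently, for $\lambda_{k,(1)}>0$ the matrix $\mathbf{B}_{k,(1)}+\lambda_{k,(1)}\mathbf{I}_N$ is positive definite and invertible, so $g$ is well defined and smooth on $(0,\infty)$.

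Next I would diagonalize $\mathbf{B}_{k,(1)}=\mathbf{U}\mathbf{\Lambda}\mathbf{U}^H$ with $\mathbf{U}$ unitary and $\mathbf{\Lambda}=\mathrm{diag}(d_1,\dots,d_N)$, $d_i\ge 0$. Then $(\mathbf{B}_{k,(1)}+\lambda_{k,(1)}\mathbf{I}_N)^{-1}=\mathbf{U}(\mathbf{\Lambda}+\lambda_{k,(1)}\mathbf{I}_N)^{-1}\mathbf{U}^H$, and writing $\mathbf{P}\triangleq\mathbf{U}^H\mathbf{G}_k\mathbf{G}_k^H\mathbf{U}$, which is Hermitian PSD so that $[\mathbf{P}]_{ii}\ge 0$, the cyclic property of the trace collapses $g$ to the scalar sum $g(\lambda_{k,(1)})=\mu_{k,(1)}^2\sum_{i=1}^N [\mathbf{P}]_{ii}/(d_i+\lambda_{k,(1)})^2$. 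Each summand is non-negative and, since $d_i+\lambda_{k,(1)}>0$, strictly decreasing in $\lambda_{k,(1)}$, so the finite sum is non-increasing, and strictly decreasing whenever $\mathbf{G}_k\ne\mathbf{0}$. Equivalently, one may differentiate directly, using $\tfrac{d}{d\lambda}(\mathbf{B}_{k,(1)}+\lambda\mathbf{I}_N)^{-1}=-(\mathbf{B}_{k,(1)}+\lambda\mathbf{I}_N)^{-2}$ together with trace cyclicity, to obtain $g'(\lambda_{k,(1)})=-2\mu_{k,(1)}^2\,\mathrm{tr}[\mathbf{G}_k^H(\mathbf{B}_{k,(1)}+\lambda_{k,(1)}\mathbf{I}_N)^{-3}\mathbf{G}_k]\le 0$, the last inequality holding because $(\mathbf{B}_{k,(1)}+\lambda_{k,(1)}\mathbf{I}_N)^{-3}$ is PSD.

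I do not expect a genuine obstacle here: the only point requiring care is confirming that the bracketed matrix $\mathbf{B}_{k,(1)}$ is Hermitian PSD, which simultaneously justifies invertibility for $\lambda_{k,(1)}>0$ and the sign of the diagonal entries $[\mathbf{P}]_{ii}$ (equivalently, the PSD-ness of $(\mathbf{B}_{k,(1)}+\lambda_{k,(1)}\mathbf{I}_N)^{-3}$). Beyond that, the statement is a one-line consequence of the monotonicity of $\lambda\mapsto(d_i+\lambda)^{-2}$.
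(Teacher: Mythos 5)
Your proof is correct and follows essentially the same route as the paper: both eigendecompose the weighted sum $\sum_{l}\mu_{l,(1)}(\mathbf{\hat{H}}_{k}^{H}\mathbf{V}_l\mathbf{W}_{l,(1)}\mathbf{V}_{l}^{H}\mathbf{\hat{H}}_k+\mathbf{\bar{C}}_{kl})$, use trace cyclicity to collapse the quantity to $\sum_{n}[\mathbf{\Phi}]_{nn}/([\mathbf{\Lambda}]_{nn}+\lambda_{k,(1)})^{2}$ with $[\mathbf{\Phi}]_{nn}\geqslant 0$, and conclude from the monotonicity of each scalar term. Your explicit verification that the bracketed matrix is Hermitian positive semidefinite (and the derivative-based alternative) is a small amount of added rigor the paper leaves implicit, but it is not a different argument.
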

\begin{IEEEproof}
Let $\mathbf{D\Lambda D}^H$ denote the eigendecomposition of $\sum_{l=1}^K{\mu _{l,(1)}( \mathbf{\hat{H}}_{k}^{H}\mathbf{V}_l\mathbf{W}_{l,(1)}\mathbf{V}_{l}^{H}\mathbf{\hat{H}}_k+\mathbf{\bar{C}}_{kl} )}$. Following the method in \cite{5756489}, we define $
\mathbf{\Phi }=\mu _{k,(1)}^2\mathbf{D}^H\mathbf{\hat{H}}_{k}^{H}\mathbf{V}_k \mathbf{W}_{k,(1)}^2\mathbf{\hat{H}}_k\mathbf{V}_{k}^{H}\mathbf{D}$ and we have
\begin{equation}\label{Eigendecomposition_1}
\begin{aligned}
&\mathrm{tr[}\mathbf{F}_{k,\mathrm{u},(1)}(\lambda _{k,(1)})\mathbf{F}_{k,\mathrm{u},(1)}(\lambda _{k,(1)})^H]=\mathrm{tr}\left\{ \left( \mathbf{D\Lambda D}^H+\lambda _{k,(1)}\mathbf{I}_N \right) ^{-1}\mathbf{D\Phi D}^H\left( \mathbf{D\Lambda D}^H+\lambda _{k,(1)}\mathbf{I}_N \right) ^{-1} \right\}\\
&=\mathrm{tr}\left\{ \left( \mathbf{D\Lambda D}^H+\lambda _{k,(1)}\mathbf{I}_N \right) ^{-2}\mathbf{D\Phi D}^H \right\}=\mathrm{tr}\left\{ \left( \mathbf{\Lambda }+\lambda _{k,(1)}\mathbf{I}_N \right) ^{-2} \right\} =\sum_{n=1}^N{\frac{\left[ \mathbf{\Phi } \right] _{nn}}{\left( \left[ \mathbf{\Lambda } \right] _{nn}+\lambda _{k,(1)} \right) ^2}},
\end{aligned}
\end{equation}
so $\mathrm{tr}[ \mathbf{F}_{k,\mathrm{u},(1)}( \lambda _{k,(1)} ) \mathbf{F}_{k,\mathrm{u},(1)}( \lambda _{k,(1)} ) ^H ]$ is a monotonically decreasing function of $\lambda _{k,(1)}$.
\end{IEEEproof}

Based on Corollary~\ref{Decreasing_function_1}, optimum $\lambda _{k,(1)}$ (denoted by $\lambda _{k,(1)}^{\mathrm{opt}}$) can be easily obtained by a one-dimensional (1-D) bisection algorithm so we derive the solution for $\mathbf{F}_{k,\mathrm{u},(1)}( \lambda _{k,(1)}^{\mathrm{opt}} )$. Furthermore, an iterative optimization algorithm for $\mathbf{F}_{k,\mathrm{u},(1)}$, called ``iteratively WMMSE (I-WMMSE) algorithm", is summarized in Algorithm~\ref{algo:iterative1}\footnote{To balance the efficiency and the computational complexity of the proposed algorithm, we also include the stopping criterion ``$R_{(1)}^{\left( i \right)}<R_{(1)}^{\left( i-1 \right)}$. Moreover, the I-WMMSE precoding scheme is derived at iteration $(i-1)$, which may achieve higher sum SE than the one at iteration $i$.}. The convergence of Algorithm \ref{algo:iterative1} is proven in \cite[Theorem 3]{5756489}.
\begin{algorithm}[t]
\label{algo:iterative1}
\caption{I-WMMSE Algorithm for the Design of $\mathbf{F}_{k,\mathrm{u},(1)}$}
\KwIn{Collective channel estimates $\mathbf{\hat{H}}_k$ for all UEs; Estimation error covariance matrices $\mathbf{C}_{ml}$ for all possible pairs; UE weights $\mu _{k,(1)}$ for all UEs;}
\KwOut{Optimal precoding matrices $\mathbf{F}_{k,\mathrm{u},(1)}$ for all UEs ($\mathbf{F}_{k,\mathrm{u},(1)}^{( i)}$ for the first or third stopping criterion and $\mathbf{F}_{k,\mathrm{u},(1)}^{( i-1)}$ for the second stopping criterion);}

{\bf Initiation:} $i=0$, $\mathbf{F}_{k,\mathrm{u},(1)}^{\left( 0 \right)}$ and $R_{(1)}^{\left( 0 \right)}=\sum_{k=1}^K{\mu _{k,(1)}\mathrm{SE}_{k,(1)}^{\left( 0 \right)}}$ for all UEs; maximum iteration number $I_{(1),\max}$ and threshold $\varepsilon_{(1)}$;\\

\Repeat(){$\left| R_{(1)}^{\left( i \right)}-R_{(1)}^{\left( i-1 \right)} \right|/{R_{(1)}^{\left( i-1 \right)}}\leqslant \varepsilon_{(1)}$ or $R_{(1)}^{\left( i \right)}<R_{(1)}^{\left( i-1 \right)}$ or $i\geqslant I_{(1),\max}$}
{
$i=i+1$\\
Update the MMSE combining scheme $\mathbf{V}_{k}^{\left( i \right)}$ with $\mathbf{F}_{k,\mathrm{u},(1)}^{\left( i-1 \right)}$ based on \eqref{eq:MMSE_Combining};\\
Update optimal MSE matrix $\mathbf{E}_{k,(1)}^{\left( i \right)}$ with $\mathbf{F}_{l,\mathrm{u},(1)}^{\left( i-1 \right)}$ based on \eqref{eq:MSE_Matrix_Optimal_1}, and update $\mathbf{W}_{k,(1)}^{\left( i \right)}$ based on \eqref{W_1};\\
Update optimal precoding matrix $\mathbf{F}_{l,\mathrm{u},(1)}^{\left( i\right)}$ with $\mathbf{V}_{k}^{\left( i \right)}$ and $\mathbf{W}_{k,(1)}^{\left( i \right)}$ based on \eqref{Optimal_F_1}, where $\lambda _{k,(1)}^{\left( i \right)}$ is found by a bisection algorithm; \\
Update sum weighted rate $R_{(1)}^{\left( i \right)}=\sum_{k=1}^K{\mu _{k,(1)}\mathrm{SE}_{k,(1)}^{\left( i \right)}}$;\\
}
\end{algorithm}
\vspace{-0.3cm}
\begin{rem}
Note that the design of $\mathbf{F}_{k,\mathrm{p}}$ is a valuable future direction to further improve the system performance. One valuable optimization problem is to minimize the total MSE of the channel estimators of all UEs as
\begin{equation}
\begin{aligned}
&\underset{\{ \mathbf{F}_{k,\mathrm{p}}\}}{\min}\sum_{k=1}^K{\mathrm{tr}\left( \mathbf{C}_k \right)}\\
&\mathrm{s}.\mathrm{t}.\left\| \mathbf{F}_{k,\mathrm{p}} \right\| ^2\leqslant p_k\,\,\forall k=1,\dots ,K
\end{aligned}
\end{equation}
where the optimization goal is only based on the statistical knowledge so $\mathbf{F}_{k,\mathrm{p}}$ is also based on the statistical knowledge.
\end{rem}
\vspace{-0.6cm}
\subsection{Large-Scale Fading Decoding}
\subsubsection{Spectral Efficiency Analysis}
Another promising processing scheme is ``large-scale fading decoding", which is a two-layer decoding scheme to decode the data symbol. Note that UL precoding matrices ($\mathbf{F}_{k,\mathrm{u}}$ and $\mathbf{F}_{k,\mathrm{p}}$) are assumed to be available at all APs and the CPU. In the first layer, AP $m$ applies an arbitrary combining matrix $\mathbf{V}_{mk}\in \mathbb{C} ^{L\times N}$ to derive local detection of $\mathbf{x}_k$ as
\begin{equation}
\begin{aligned}
\mathbf{\tilde{x}}_{mk}
=\mathbf{V}_{mk}^{H}\mathbf{y}_m=\mathbf{V}_{mk}^{H}\mathbf{H}_{mk}\mathbf{F}_{k,\mathrm{u}}\mathbf{x}_k+\sum_{l=1,l\ne k}^K{\mathbf{V}_{mk}^{H}\mathbf{H}_{ml}\mathbf{F}_{l,\mathrm{u}}\mathbf{x}_l}+\mathbf{V}_{mk}^{H}\mathbf{n}_m.
\end{aligned}
\end{equation}

We notice that $\mathbf{V}_{mk}$ is designed based on local channel estimates at AP $m$ and one handy choice is MR combining $\mathbf{V}_{mk}=\mathbf{\hat{H}}_{mk}$. Moreover, local MMSE (L-MMSE) combining
\vspace*{-0.1cm}
\begin{equation}\label{LMMSE_Com}
\vspace*{-0.1cm}
\mathbf{V}_{mk}=\!\! \left( \sum_{l=1}^K{\left(\! \mathbf{\hat{H}}_{ml}\mathbf{\bar{F}}_{l,\mathrm{u}}\mathbf{\hat{H}}_{ml}^{H}+\mathbf{C}_{ml}^{\prime} \! \right)}+\sigma ^2\mathbf{I}_L \!\right) ^{-1}\!\!\mathbf{\hat{H}}_{mk}\mathbf{F}_{k,\mathrm{u}},
\end{equation}
is also regarded as a promising scheme, since \eqref{LMMSE_Com} can minimize $\mathbb{E} \{\parallel \mathbf{x}_k-\mathbf{V}_{mk}^{H}\mathbf{y}_m\parallel ^2|\{ \mathbf{\hat{H}}_{mk}\} ,\{\mathbf{F}_{k,\mathrm{u}} \} \}$.

In the second layer, the ``LSFD'' method is implemented at the CPU \cite{[162]}. The CPU weights all the local estimates $\mathbf{\tilde{x}}_{mk}$ from all APs by the LSFD coefficient matrix 
as
\begin{equation}\label{Data_Final_LSFD}
\begin{aligned}
\mathbf{\hat{x}}_k=\sum_{m=1}^M{\mathbf{A}_{mk}^{H}\mathbf{\tilde{x}}_{mk}}=\sum_{m=1}^M{\mathbf{A}_{mk}^{H}\mathbf{V}_{mk}^{H}\mathbf{H}_{mk}\mathbf{F}_{k,\mathrm{u}}\mathbf{x}_k}+\sum_{m=1}^M{\sum_{l=1,l\ne k}^K{\mathbf{A}_{mk}^{H}\mathbf{V}_{mk}^{H}\mathbf{H}_{ml}\mathbf{F}_{l,\mathrm{u}}\mathbf{x}_l}+}\mathbf{n}_{k}^{\prime},
\end{aligned}
\end{equation}
where $\mathbf{A}_{mk}\in \mathbb{C} ^{N\times N}$ is the complex LSFD coefficient matrix for AP $m$-UE $k$ and $\mathbf{n}_{k}^{\prime}=\sum_{m=1}^M{\mathbf{A}_{mk}^{H}\mathbf{V}_{mk}^{H}\mathbf{n}_m}$. Moreover, we can rewrite $\mathbf{\hat{x}}_k$ in a more compact form as
\begin{equation}\label{Data_Final_LSFD_Compact}
\begin{aligned}
\mathbf{\hat{x}}_k =\mathbf{A}_{k}^{H}\mathbf{G}_{kk}\mathbf{F}_{k,\mathrm{u}}\mathbf{x}_k+\sum_{l=1,l\ne k}^K{\mathbf{A}_{k}^{H}\mathbf{G}_{kl}\mathbf{F}_{l,\mathrm{u}}\mathbf{x}_l}+\mathbf{n}_{k}^{\prime}=\mathbf{A}_{k}^{H}\underset{\mathbf{\tilde{x}}_k}{\underbrace{\left( \mathbf{G}_{kk}\mathbf{F}_{k,\mathrm{u}}\mathbf{x}_k+\sum_{l=1,l\ne k}^K{\mathbf{G}_{kl}\mathbf{F}_{l,\mathrm{u}}\mathbf{x}_l}+\mathbf{\tilde{n}}_{k}^{\prime} \right) }}
\end{aligned}
\end{equation}
where $\mathbf{A}_k\triangleq [ \mathbf{A}_{1k}^{T},\dots ,\mathbf{A}_{Mk}^{T} ] ^T\in \mathbb{C} ^{MN\times N}$, $\mathbf{G}_{kl}\triangleq [ \mathbf{V}_{1k}^{H}\mathbf{H}_{1l};\dots ;\mathbf{V}_{Mk}^{H}\mathbf{H}_{Ml} ] \in \mathbb{C} ^{MN\times N}$ and\\ $\mathbf{\tilde{n}}_{k}^{\prime}=\left[ \mathbf{V}_{1k}^{H}\mathbf{n}_1;\dots ;\mathbf{V}_{Mk}^{H}\mathbf{n}_M \right] \in \mathbb{C} ^{MN\times N}$.

Note that the CPU does not have the knowledge of channel estimates and is only aware of channel statistics \cite{[162]}. The conditional MSE matrix for UE $k$ $\mathbf{E}_{k,(2)}\triangleq \mathbb{E} \left\{ \left( \mathbf{x}_k-\mathbf{\hat{x}}_k \right) ( \mathbf{x}_k-\mathbf{\hat{x}}_k ) ^H\left| \mathbf{\Theta } \right. \right\}$ is
\vspace*{-0.1cm}
\begin{equation}\label{MSE_Matrix}
\begin{aligned}
\mathbf{E}_{k,(2)}&=\mathbf{I}_N-\mathbf{F}_{k,\mathrm{u}}^{H}\mathbb{E} \{ \mathbf{G}_{kk}^{H} \} \mathbf{A}_k-\mathbf{A}_{k}^{H}\mathbb{E}\{ \mathbf{G}_{kk} \} \mathbf{F}_{k,\mathrm{u}}+\mathbf{A}_{k}^{H}\left( \sum_{l=1}^K{\mathbb{E} \{ \mathbf{G}_{kl}\mathbf{\bar{F}}_{l,\mathrm{u}}\mathbf{G}_{kl}^{H} \}}+\sigma ^2\mathbf{S}_k \right) \mathbf{A}_k,
\vspace*{-0.1cm}
\end{aligned}
\end{equation}
where $\mathbf{\Theta }$ denotes all the channel statistics and $\mathbf{S}_k =\mathrm{diag}( \mathbb{E} \{ \mathbf{V}_{1k}^{H}\mathbf{V}_{1k} \} ,\cdots ,\mathbb{E} \{ \mathbf{V}_{Mk}^{H}\mathbf{V}_{Mk} \} )\in \mathbb{C} ^{MN\times MN}$. Then, we apply classical use-and-then-forget bound to obtain the following ergodic achievable SE.

\begin{coro}
For the ``LSFD" scheme, an achievable SE for UE $k$ can be written as
\vspace*{-0.1cm}
\begin{equation}\label{SE_LSFD_Origin}
\begin{aligned}
\mathrm{SE}_{k,(2)}=\left( 1-\frac{\tau _p}{\tau _c} \right) \log _2\left| \mathbf{I}_N+\mathbf{D}_{k,(2)}^{H}\mathbf{\Sigma }_{k,(2)}^{-1}\mathbf{D}_{k,(2)} \right|,
\end{aligned}
\end{equation}
where $ \mathbf{\Sigma }_{k,(2)}=\sum_{l=1}^K{\mathbf{A}_{k}^{H}\mathbb{E} \{ \mathbf{G}_{kl}\mathbf{\bar{F}}_{l,\mathrm{u}}\mathbf{G}_{kl}^{H} \} \mathbf{A}_k}-\mathbf{D}_{k,(2)}\mathbf{D}_{k,(2)}^{H}+\sigma ^2\mathbf{A}_{k}^{H}\mathbf{S}_k\mathbf{A}_k$ and $\mathbf{D}_{k,(2)}=\mathbf{A}_{k}^{H}\mathbb{E} \{ \mathbf{G}_{kk} \} \mathbf{F}_{k,\mathrm{u}}$.
\end{coro}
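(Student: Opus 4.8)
The plan is to establish this corollary through the standard use-and-then-forget (UatF) argument, in which the effective end-to-end channel seen at the CPU is treated as a deterministic MIMO channel corrupted by an additive noise that is uncorrelated with the data and whose second-order statistics feed a worst-case Gaussian bound. Starting from the compact expression for $\hat{\mathbf{x}}_k$ in \eqref{Data_Final_LSFD_Compact}, the first step I would take is to split the desired term $\mathbf{A}_{k}^{H}\mathbf{G}_{kk}\mathbf{F}_{k,\mathrm{u}}\mathbf{x}_k$ into its deterministic mean part $\mathbf{D}_{k,(2)}\mathbf{x}_k=\mathbf{A}_{k}^{H}\mathbb{E}\{\mathbf{G}_{kk}\}\mathbf{F}_{k,\mathrm{u}}\mathbf{x}_k$ and a fluctuation $\mathbf{A}_{k}^{H}(\mathbf{G}_{kk}-\mathbb{E}\{\mathbf{G}_{kk}\})\mathbf{F}_{k,\mathrm{u}}\mathbf{x}_k$. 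Collecting this fluctuation together with the interference $\sum_{l\ne k}\mathbf{A}_{k}^{H}\mathbf{G}_{kl}\mathbf{F}_{l,\mathrm{u}}\mathbf{x}_l$ and the noise $\mathbf{n}_{k}^{\prime}$ into a single effective-noise vector $\mathbf{w}_k$, I would write $\hat{\mathbf{x}}_k=\mathbf{D}_{k,(2)}\mathbf{x}_k+\mathbf{w}_k$, where $\mathbf{D}_{k,(2)}$ is genuinely deterministic because $\mathbf{A}_k$ depends only on channel statistics.

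The second step is to verify that $\mathbf{w}_k$ is uncorrelated with $\mathbf{x}_k$, i.e. $\mathbb{E}\{\mathbf{x}_k\mathbf{w}_{k}^{H}\}=\mathbf{0}$. This holds because $\mathbf{x}_k$ is independent of all channels, of $\mathbf{x}_l$ for $l\ne k$, and of the noise, so the interference and noise contributions vanish in expectation, while the self-fluctuation contribution vanishes precisely since $\mathbb{E}\{\mathbf{G}_{kk}-\mathbb{E}\{\mathbf{G}_{kk}\}\}=\mathbf{0}$. I would then obtain the covariance of $\mathbf{w}_k$ by forming the total received covariance $\mathbb{E}\{\hat{\mathbf{x}}_k\hat{\mathbf{x}}_{k}^{H}\}$: using independence across users ($\mathbb{E}\{\mathbf{x}_l\mathbf{x}_{l}^{H}\}=\mathbf{I}_N$, with all cross-user terms vanishing) and independence of the noise across APs, every cross term drops out and the total covariance collapses to $\sum_{l=1}^K\mathbf{A}_{k}^{H}\mathbb{E}\{\mathbf{G}_{kl}\mathbf{\bar{F}}_{l,\mathrm{u}}\mathbf{G}_{kl}^{H}\}\mathbf{A}_k+\sigma^2\mathbf{A}_{k}^{H}\mathbf{S}_k\mathbf{A}_k$, where the noise piece reduces to $\sigma^2\mathbf{A}_{k}^{H}\mathbf{S}_k\mathbf{A}_k$ via $\mathbb{E}\{\mathbf{V}_{mk}^{H}\mathbf{n}_m\mathbf{n}_{m}^{H}\mathbf{V}_{mk}\}=\sigma^2\mathbb{E}\{\mathbf{V}_{mk}^{H}\mathbf{V}_{mk}\}$ and the block-diagonal structure of $\mathbf{S}_k$. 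Subtracting the signal covariance $\mathbf{D}_{k,(2)}\mathbf{D}_{k,(2)}^{H}$, which is legitimate precisely because of the decorrelation just established, identifies $\mathbb{E}\{\mathbf{w}_k\mathbf{w}_{k}^{H}\}$ with $\mathbf{\Sigma}_{k,(2)}$.

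The final step is to invoke the worst-case uncorrelated-noise bound: for the deterministic MIMO channel $\hat{\mathbf{x}}_k=\mathbf{D}_{k,(2)}\mathbf{x}_k+\mathbf{w}_k$ with $\mathbf{x}_k\sim\mathcal{N}_{\mathbb{C}}(\mathbf{0},\mathbf{I}_N)$ and $\mathbf{w}_k$ zero-mean, uncorrelated with $\mathbf{x}_k$, and of covariance $\mathbf{\Sigma}_{k,(2)}$, the mutual information $I(\mathbf{x}_k;\hat{\mathbf{x}}_k)$ is lower bounded by $\log_2\left|\mathbf{I}_N+\mathbf{D}_{k,(2)}^{H}\mathbf{\Sigma}_{k,(2)}^{-1}\mathbf{D}_{k,(2)}\right|$, with equality when $\mathbf{w}_k$ is Gaussian. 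Scaling by the pre-log penalty $(1-\tau_p/\tau_c)$ that accounts for the pilot overhead then yields \eqref{SE_LSFD_Origin}.

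I expect the main obstacle to be the careful bookkeeping in the covariance computation rather than any deep step: one must confirm that every cross term — between distinct data streams, between the channel-mean fluctuation and the deterministic signal, and between signal and noise — actually vanishes, so that $\mathbb{E}\{\hat{\mathbf{x}}_k\hat{\mathbf{x}}_{k}^{H}\}$ cleanly factorizes and $\mathbf{\Sigma}_{k,(2)}$ emerges exactly as stated. The only conceptual point requiring care is justifying that the Gaussian worst-case step is valid here, which rests on the decorrelation $\mathbb{E}\{\mathbf{x}_k\mathbf{w}_{k}^{H}\}=\mathbf{0}$ and on $\mathbf{A}_k$ being a function of statistics only, so that $\mathbf{D}_{k,(2)}$ is truly deterministic; both properties hold by construction of the LSFD scheme.
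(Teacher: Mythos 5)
Your proposal is correct and follows essentially the same route the paper takes: the paper explicitly invokes the ``classical use-and-then-forget bound'' and defers the details to \cite[Corollary 2]{04962}, whose argument is exactly your decomposition of $\mathbf{\hat{x}}_k$ into a deterministic mean-channel signal $\mathbf{D}_{k,(2)}\mathbf{x}_k$ plus an uncorrelated effective noise, followed by the worst-case Gaussian bound and the pre-log factor. The covariance bookkeeping and the decorrelation checks you list are precisely the steps the omitted proof carries out, so there is nothing to add beyond the minor remark that equality in the final bound requires the effective noise to be Gaussian \emph{and} independent of $\mathbf{x}_k$, not merely uncorrelated.
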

\begin{proof}
The proof follows similar steps as the proof of \cite[Corollary 2]{04962} and is therefore omitted.
\end{proof}

Note that $\mathbf{A}_k$ can be optimized by the CPU based on channel statistics to maximize the achievable SE in \eqref{SE_LSFD_Origin}. Based on the theory of optimal receivers as in \cite{tse2005fundamentals}, we derive the optimal LSFD coefficient matrix, which not only maximizes the achievable SE but minimizes the conditional MSE, as follows.
\begin{coro}\label{Corollary_Optimal_LSFD}
The achievable SE in \eqref{SE_LSFD_Origin} is maximized by
\vspace*{-0.1cm}
\begin{equation}\label{Optimal_LSFD}
\vspace*{-0.1cm}
\mathbf{A}_{k}^{\mathrm{opt}}=\left( \sum_{l=1}^K{\mathbb{E} \{ \mathbf{G}_{kl}\mathbf{\bar{F}}_{l,\mathrm{u}}\mathbf{G}_{kl}^{H} \}}+\sigma ^2\mathbf{S}_k \right) ^{-1}\mathbb{E} \{ \mathbf{G}_{kk} \} \mathbf{F}_{k,\mathrm{u}},
\end{equation}
leading to the maximum value as
\begin{equation}\label{SE_LSFD_Optimal}
\begin{aligned}
&\mathrm{SE}_{k,(2)}^{\mathrm{opt}}\\
&=\left( 1-\frac{\tau _p}{\tau _c} \right) \log _2 \left| \mathbf{I}_N+\mathbf{F}_{k,\mathrm{u}}^{H}\mathbb{E} \left\{ \mathbf{G}_{kk} \right\} \left( \sum_{l=1}^K{\mathbb{E} \left\{ \mathbf{G}_{kl}\mathbf{\bar{F}}_{l,\mathrm{u}}\mathbf{G}_{kl}^{H} \right\} -\mathbb{E} \left\{ \mathbf{G}_{kk} \right\} \mathbf{\bar{F}}_{k,\mathrm{u}}\mathbb{E} \left\{ \mathbf{G}_{kk}^{H} \right\} +\sigma ^2\mathbf{S}_k} \right) ^{-1}\mathbb{E} \left\{ \mathbf{G}_{kk} \right\} \mathbf{F}_{k,\mathrm{u}} \right|.
\end{aligned}
\end{equation}
Note that the optimal LSFD coefficient matrix in \eqref{Optimal_LSFD} can also minimize the conditional MSE for UE $k$ $\mathrm{MSE}_{k,(2)}=\mathrm{tr}( \mathbf{E}_{k,(2)} )$.
\end{coro}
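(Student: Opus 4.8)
The plan is to reduce the maximization to the classical MMSE-receiver optimality for the effective single-user MIMO channel hidden inside \eqref{SE_LSFD_Origin}. First I would introduce the shorthand $\mathbf{B}_k\triangleq\mathbb{E}\{\mathbf{G}_{kk}\}\mathbf{F}_{k,\mathrm{u}}$ and $\mathbf{T}_k\triangleq\sum_{l=1}^K\mathbb{E}\{\mathbf{G}_{kl}\mathbf{\bar{F}}_{l,\mathrm{u}}\mathbf{G}_{kl}^{H}\}+\sigma^2\mathbf{S}_k$, so that $\mathbf{D}_{k,(2)}=\mathbf{A}_k^{H}\mathbf{B}_k$ and the interference-plus-noise covariance becomes $\mathbf{\Sigma}_{k,(2)}=\mathbf{A}_k^{H}(\mathbf{T}_k-\mathbf{B}_k\mathbf{B}_k^{H})\mathbf{A}_k=\mathbf{A}_k^{H}\mathbf{\Xi}_k\mathbf{A}_k$ with $\mathbf{\Xi}_k\triangleq\mathbf{T}_k-\mathbf{B}_k\mathbf{B}_k^{H}$. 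The argument of the log-determinant in \eqref{SE_LSFD_Origin} then reads $\mathbf{I}_N+\mathbf{B}_k^{H}\mathbf{A}_k(\mathbf{A}_k^{H}\mathbf{\Xi}_k\mathbf{A}_k)^{-1}\mathbf{A}_k^{H}\mathbf{B}_k$, exhibiting the standard receive-combining structure.

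The key step is a matrix Cauchy--Schwarz (orthogonal-projection) inequality: for every $\mathbf{A}_k$ such that $\mathbf{A}_k^{H}\mathbf{\Xi}_k\mathbf{A}_k$ is invertible,
\begin{equation}\notag
\mathbf{B}_k^{H}\mathbf{A}_k(\mathbf{A}_k^{H}\mathbf{\Xi}_k\mathbf{A}_k)^{-1}\mathbf{A}_k^{H}\mathbf{B}_k\preceq\mathbf{B}_k^{H}\mathbf{\Xi}_k^{-1}\mathbf{B}_k.
\end{equation}
I would prove this by setting $\mathbf{P}=\mathbf{\Xi}_k^{1/2}\mathbf{A}_k$ and $\mathbf{Q}=\mathbf{\Xi}_k^{-1/2}\mathbf{B}_k$, so that the left-hand side equals $\mathbf{Q}^{H}\mathbf{P}(\mathbf{P}^{H}\mathbf{P})^{-1}\mathbf{P}^{H}\mathbf{Q}=\mathbf{Q}^{H}\mathbf{\Pi}_{\mathbf{P}}\mathbf{Q}$, where $\mathbf{\Pi}_{\mathbf{P}}$ is the orthogonal projector onto the range of $\mathbf{P}$; since $\mathbf{\Pi}_{\mathbf{P}}\preceq\mathbf{I}$ the bound follows, and the right-hand side is exactly $\mathbf{Q}^{H}\mathbf{Q}$. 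Because $\log_2|\mathbf{I}_N+\mathbf{X}|$ is monotone increasing in $\mathbf{X}$ with respect to the Loewner order, maximizing the SE is equivalent to attaining this Loewner upper bound, and equality holds precisely when $\mathrm{range}(\mathbf{Q})\subseteq\mathrm{range}(\mathbf{P})$, i.e. when $\mathbf{A}_k=\mathbf{\Xi}_k^{-1}\mathbf{B}_k\mathbf{M}$ for some invertible $\mathbf{M}$.

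It then remains to check that the claimed $\mathbf{A}_k^{\mathrm{opt}}=\mathbf{T}_k^{-1}\mathbf{B}_k$ belongs to this equality class. Applying the Woodbury identity to $\mathbf{T}_k=\mathbf{\Xi}_k+\mathbf{B}_k\mathbf{B}_k^{H}$ gives $\mathbf{T}_k^{-1}\mathbf{B}_k=\mathbf{\Xi}_k^{-1}\mathbf{B}_k(\mathbf{I}_N+\mathbf{B}_k^{H}\mathbf{\Xi}_k^{-1}\mathbf{B}_k)^{-1}$, which is exactly of the form $\mathbf{\Xi}_k^{-1}\mathbf{B}_k\mathbf{M}$ with invertible $\mathbf{M}$, so $\mathbf{A}_k^{\mathrm{opt}}$ is optimal. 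Substituting the upper bound into the log-determinant and recalling $\mathbf{\Xi}_k=\sum_{l=1}^K\mathbb{E}\{\mathbf{G}_{kl}\mathbf{\bar{F}}_{l,\mathrm{u}}\mathbf{G}_{kl}^{H}\}-\mathbb{E}\{\mathbf{G}_{kk}\}\mathbf{\bar{F}}_{k,\mathrm{u}}\mathbb{E}\{\mathbf{G}_{kk}^{H}\}+\sigma^2\mathbf{S}_k$ reproduces \eqref{SE_LSFD_Optimal}.

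For the MSE claim I would complete the square in $\mathbf{A}_k$: from \eqref{MSE_Matrix}, $\mathbf{E}_{k,(2)}=\mathbf{I}_N-\mathbf{B}_k^{H}\mathbf{T}_k^{-1}\mathbf{B}_k+(\mathbf{A}_k-\mathbf{T}_k^{-1}\mathbf{B}_k)^{H}\mathbf{T}_k(\mathbf{A}_k-\mathbf{T}_k^{-1}\mathbf{B}_k)$, whose last term is positive semidefinite, so $\mathbf{A}_k^{\mathrm{opt}}=\mathbf{T}_k^{-1}\mathbf{B}_k$ minimizes $\mathbf{E}_{k,(2)}$ in the Loewner order and hence minimizes $\mathrm{tr}(\mathbf{E}_{k,(2)})$. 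Finally, the identity $(\mathbf{I}_N-\mathbf{B}_k^{H}\mathbf{T}_k^{-1}\mathbf{B}_k)^{-1}=\mathbf{I}_N+\mathbf{B}_k^{H}\mathbf{\Xi}_k^{-1}\mathbf{B}_k$ (again Woodbury) ties the minimum-MSE matrix to the maximum-SE expression, confirming that a single choice of $\mathbf{A}_k$ achieves both. The main obstacle is the matrix-valued optimality step: one must argue in the Loewner order rather than for a scalar SINR, so the projection inequality together with the monotonicity of $\log_2|\mathbf{I}_N+\mathbf{X}|$ does the real work, while the $\mathbf{T}_k$-versus-$\mathbf{\Xi}_k$ discrepancy is reconciled only through the Woodbury identity.
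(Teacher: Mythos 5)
Your proof is correct, but it follows a genuinely different route from the paper's. The paper (Appendix B) treats $\mathbb{E}\{\mathbf{G}_{kk}\}\mathbf{F}_{k,\mathrm{u}}$ as a deterministic effective channel, whitens the effective noise with $\mathbf{\Xi}_k^{-1/2}$, projects onto the whitened channel direction, and then invokes the theory of optimal receivers from \cite{tse2005fundamentals} to assert that $\mathbf{A}_k=\mathbf{\Xi}_k^{-1}\mathbb{E}\{\mathbf{G}_{kk}\}\mathbf{F}_{k,\mathrm{u}}$ is SE-optimal; the MSE-minimizing form $\mathbf{T}_k^{-1}\mathbf{B}_k$ is obtained separately by matrix differentiation, and the two are reconciled by observing they differ only by an invertible right factor. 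You instead prove the matrix-valued optimality from scratch: the orthogonal-projection (Cauchy--Schwarz) bound $\mathbf{B}_k^{H}\mathbf{A}_k(\mathbf{A}_k^{H}\mathbf{\Xi}_k\mathbf{A}_k)^{-1}\mathbf{A}_k^{H}\mathbf{B}_k\preceq\mathbf{B}_k^{H}\mathbf{\Xi}_k^{-1}\mathbf{B}_k$ combined with Loewner monotonicity of $\log_2|\mathbf{I}_N+\cdot|$, with the push-through/Woodbury identity $\mathbf{T}_k^{-1}\mathbf{B}_k=\mathbf{\Xi}_k^{-1}\mathbf{B}_k(\mathbf{I}_N+\mathbf{B}_k^{H}\mathbf{\Xi}_k^{-1}\mathbf{B}_k)^{-1}$ exhibiting the invertible factor explicitly, and completion of the square for the MSE claim. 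What your approach buys is self-containedness and rigor on exactly the point the paper leaves implicit --- why the whitened matched filter is optimal in the matrix (Loewner) sense rather than for a scalar SINR --- and it makes transparent, via the explicit invertible factor $\mathbf{M}=(\mathbf{I}_N+\mathbf{B}_k^{H}\mathbf{\Xi}_k^{-1}\mathbf{B}_k)^{-1}$, why the same $\mathbf{A}_k^{\mathrm{opt}}$ simultaneously maximizes the SE and minimizes the MSE (a point the paper's closing sentence in Appendix B states somewhat confusingly). What the paper's route buys is brevity and a direct information-theoretic interpretation (sufficiency of the whitened matched-filter statistic). One minor caveat in your write-up: the ``precisely when'' in your equality condition should be ``in particular when,'' since $\mathrm{range}(\mathbf{Q})\subseteq\mathrm{range}(\mathbf{P})$ is the exact condition and the parametrization $\mathbf{A}_k=\mathbf{\Xi}_k^{-1}\mathbf{B}_k\mathbf{M}$ is only a sufficient family; this does not affect the conclusion, since membership of $\mathbf{A}_k^{\mathrm{opt}}$ in that family is all you need.
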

\begin{proof}
The proof is given in Appendix~\ref{Appendix1}.
\end{proof}

If the optimal LSFD coefficient matrix is applied, the MSE matrix for UE $k$ can be written as
\vspace*{-0.1cm}
\addtocounter{equation}{1}
\begin{equation}\label{MMSE_MSE_Matrix}
\begin{aligned}
\mathbf{E}_{k,(2)}^{\mathrm{opt}}=\mathbf{I}_N-\mathbf{F}_{k,\mathrm{u}}^{H}\mathbb{E} \left\{ \mathbf{G}_{kk}^{H} \right\} \left( \sum_{l=1}^K{\mathbb{E} \{ \mathbf{G}_{kl}\mathbf{\bar{F}}_{l,\mathrm{u}}\mathbf{G}_{kl}^{H} \}}+\sigma ^2\mathbf{S}_k \right) ^{-1}\mathbb{E} \{ \mathbf{G}_{kk} \} \mathbf{F}_{k,\mathrm{u}}.
\end{aligned}
\vspace*{-0.1cm}
\end{equation}
Furthermore, if MR combining $\mathbf{V}_{mk}=\mathbf{\hat{H}}_{mk}$ is applied, we derive closed-form SE expressions as follows.
\begin{thm}\label{Th_Closed_Form}
For MR combining $\mathbf{V}_{mk}=\mathbf{\hat{H}}_{mk}$, \eqref{SE_LSFD_Origin} can be computed in closed-form as
\begin{equation}\label{SE_Closedform}
\mathrm{SE}_{k,(2),\mathrm{c}}=\left( 1-\frac{\tau _p}{\tau _c} \right) \log _2\left| \mathbf{I}_N+\mathbf{D}_{k,(2),\mathrm{c}}^{H}\mathbf{\Sigma }_{k,(2),\mathrm{c}}^{-1}\mathbf{D}_{k,(2),\mathrm{c}} \right|,
\end{equation}
where $\mathbf{\Sigma }_{k,(2),\mathrm{c}}=\mathbf{A}_{k}^{H}( \sum_{l=1}^K{\mathbf{T}_{kl,( 1 )}+\sum_{l\in \mathcal{P} _k}^{}{\mathbf{T}_{kl,( 2 )}}} ) \mathbf{A}_k-\mathbf{D}_{k,(2),\mathrm{c}}\mathbf{D}_{k,(2),\mathrm{c}}^{H}+\sigma ^2\mathbf{A}_{k}^{H}\mathbf{S}_{k,\mathrm{c}}\mathbf{A}_k$ and $\mathbf{D}_{k,(2),\mathrm{c}}=\mathbf{A}_{k}^{H}\mathbf{Z}_k\mathbf{F}_{k,\mathrm{u}}$, with $\mathbb{E} \{ \mathbf{G}_{kk} \} =\mathbf{Z}_k=[ \mathbf{Z}_{1k}^{T},\dots ,\mathbf{Z}_{Mk}^{T} ] ^T$ and $\mathbf{S}_{k,\mathrm{c}}=\mathrm{diag}( \mathbf{Z}_{1k},\cdots ,\mathbf{Z}_{Mk} )$ with the $\left( n,n^{\prime} \right) $-th element of $\mathbf{Z}_{mk}\in \mathbb{C} ^{N\times N}$ being $\left[ \mathbf{Z}_{mk} \right] _{nn^{\prime}}=\mathrm{tr}( \mathbf{\hat{R}}_{mk}^{n^{\prime}n} )$. Moreover, $\mathbf{T}_{kl,\left( 1 \right)}\triangleq \mathrm{diag}( \mathbf{\Gamma }_{kl,1}^{( 1 )},\cdots ,\mathbf{\Gamma }_{kl,M}^{( 1 )} ) \in \mathbb{C} ^{MN\times MN}$ and $\mathbf{T}_{kl,\left( 2 \right)}^{mm^{\prime}}=\mathbf{\Gamma }_{kl,m}^{\left( 2 \right)}-\mathbf{\Gamma }_{kl,m}^{\left( 1 \right)}$ if $m=m^{\prime}$ and $\mathbf{\Lambda }_{mkl}\mathbf{\bar{F}}_{l,\mathrm{u}}\mathbf{\Lambda }_{m^{\prime}lk}$ otherwise, where $\mathbf{T}_{kl,\left( 2 \right)}^{mm^{\prime}}$ denotes $\left( m,m^{\prime}\right) $-submatrix of $\mathbf{T}_{kl,\left( 2 \right)}\in \mathbb{C} ^{MN\times MN}$, the $\left( n,n^{\prime} \right) $-th element of $N\times N$-dimension complex matrices $\mathbf{\Lambda }_{mkl}$, $\mathbf{\Lambda }_{m^{\prime}lk}$, $\mathbf{\Gamma }_{kl,m}^{\left( 1 \right)}$ and $\mathbf{\Gamma }_{kl,m}^{\left( 2 \right)}$ are $[ \mathbf{\Lambda }_{mkl} ] _{nn^{\prime}}=\mathrm{tr}( \mathbf{\Xi }_{mkl}^{n^{\prime}n} ) $, $[ \mathbf{\Lambda }_{m^{\prime}lk} ] _{nn^{\prime}}=\mathrm{tr}( \mathbf{\Xi }_{m^{\prime}lk}^{n^{\prime}n} ) $, $[ \mathbf{\Gamma }_{mkl}^{( 1 )} ] _{nn^{\prime}}=\sum_{i=1}^N{\sum_{i^{\prime}=1}^N{[ \mathbf{\bar{F}}_{l,\mathrm{u}} ] _{i^{\prime}i}\mathrm{tr}( \mathbf{R}_{ml}^{i^{\prime}i}\mathbf{\hat{R}}_{mk}^{n^{\prime}n} )}}$ and $[ \mathbf{\Gamma }_{kl,m}^{( 2 )} ] _{nn^{\prime}}$ given by
\begin{equation}\label{eq:Gamma_2}
\begin{aligned}
\left[ \mathbf{\Gamma }_{kl,m}^{\left( 2 \right)} \right] _{nn^{\prime}}&=\sum_{i=1}^N{\sum_{i^{\prime}=1}^N{\left[ \mathbf{\bar{F}}_l \right] _{i^{\prime}i}\left\{ \mathrm{tr}\left( \mathbf{R}_{ml}^{i^{\prime}i}\mathbf{P}_{mkl,\left( 1 \right)}^{n^{\prime}n} \right) \right.}}\\
&\left. +\tau _{p}^{2}\sum_{q_1=1}^N{\sum_{q_2=1}^N{\left[ \mathrm{tr}\left( \mathbf{\tilde{P}}_{mkl,\left( 2 \right)}^{q_1n}\mathbf{\tilde{R}}_{ml}^{i^{\prime}q_2}\mathbf{\tilde{R}}_{ml}^{q_2i}\mathbf{\tilde{P}}_{mkl,\left( 2 \right)}^{n^{\prime}q_1} \right) +\mathrm{tr}\left( \mathbf{\tilde{P}}_{mkl,\left( 2 \right)}^{q_1n}\mathbf{\tilde{R}}_{ml}^{i^{\prime}q_2} \right) \mathrm{tr}\left( \mathbf{\tilde{P}}_{mkl,\left( 2 \right)}^{n^{\prime}q_2}\mathbf{\tilde{R}}_{ml}^{q_2i} \right) \right]}} \right\}
\end{aligned}
\end{equation}
with $\mathbf{\Xi }_{mkl}=\tau _p\mathbf{R}_{ml}\mathbf{\tilde{F}}_{l,\mathrm{p}}^{H}\mathbf{\Psi }_{mk}^{-1}\mathbf{\tilde{F}}_{k,\mathrm{p}}\mathbf{R}_{mk}$, $\mathbf{\Xi }_{m^{\prime}lk}=\tau _p\mathbf{R}_{m^{\prime}k}\mathbf{\tilde{F}}_{k,\mathrm{p}}^{H}\mathbf{\Psi }_{m^{\prime}k}^{-1}\mathbf{\tilde{F}}_{l,\mathrm{p}}\mathbf{R}_{m^{\prime}l}$, $\mathbf{P}_{mkl,( 1 )}=\tau _p\mathbf{S}_{mk}( \mathbf{\Psi }_{mk}-\tau _p\mathbf{\tilde{F}}_{l,\mathrm{p}}\mathbf{R}_{ml}\mathbf{\tilde{F}}_{l,\mathrm{p}}^{H} ) \mathbf{S}_{mk}^{H}$, $\mathbf{S}_{mk}=\mathbf{R}_{mk}\mathbf{\tilde{F}}_{k,\mathrm{p}}^{H}\mathbf{\Psi }_{mk}^{-1}$, $\mathbf{P}_{mkl,( 2 )}=\mathbf{S}_{mk}\mathbf{\tilde{F}}_{l,\mathrm{p}}\mathbf{R}_{ml}\mathbf{\tilde{F}}_{l,\mathrm{p}}^{H}\mathbf{S}_{mk}^{H}$, $\mathbf{\tilde{R}}_{ml}^{ni}$ and $\mathbf{\tilde{P}}_{mkl,( 2 )}^{ni}$ being $( n,i )$-submatrix of $\mathbf{R}_{ml}^{\frac{1}{2}}$ and $\mathbf{P}_{mkl,( 2 )}^{\frac{1}{2}}$, respectively. Furthermore, the optimal LSFD coefficient matrix in \eqref{Optimal_LSFD} and MSE matrix in \eqref{MMSE_MSE_Matrix} can also be computed in closed-form as
\vspace*{-0.5cm}
\begin{equation}\label{Closed_form_LSFD_MSE}
\begin{aligned}
\begin{cases}
	\mathbf{A}_{k,\mathrm{c}}^{\mathrm{opt}}=\left( \sum_{l=1}^K{\mathbf{T}_{kl,\left( 1 \right)}+\sum_{l\in \mathcal{P} _k}^{}{\mathbf{T}_{kl,\left( 2 \right)}}}+\sigma ^2\mathbf{S}_{k,\mathrm{c}} \right) ^{-1}\mathbf{Z}_k\mathbf{F}_{k,\mathrm{u}},\\
	\mathbf{E}_{k,(2),\mathrm{c}}^{\mathrm{opt}}=\mathbf{I}_N-\mathbf{F}_{k,\mathrm{u}}^{H}\mathbf{Z}_{k}^{H}\left( \sum_{l=1}^K{\mathbf{T}_{kl,\left( 1 \right)}+\sum_{l\in \mathcal{P} _k}^{}{\mathbf{T}_{kl,\left( 2 \right)}}}+\sigma ^2\mathbf{S}_{k,\mathrm{c}} \right) ^{-1}\mathbf{Z}_k\mathbf{F}_{k,\mathrm{u}}.\\
\end{cases}
\end{aligned}
\end{equation}
\end{thm}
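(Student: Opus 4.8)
The plan is to specialize the general achievable SE in \eqref{SE_LSFD_Origin} to the MR choice $\mathbf{V}_{mk}=\hat{\mathbf{H}}_{mk}$ by evaluating in closed form the three statistical objects on which $\mathbf{D}_{k,(2)}$, $\mathbf{\Sigma}_{k,(2)}$ and $\mathbf{S}_k$ depend, namely $\mathbb{E}\{\mathbf{G}_{kk}\}$, $\mathbf{S}_k$ and $\sum_{l}\mathbb{E}\{\mathbf{G}_{kl}\bar{\mathbf{F}}_{l,\mathrm{u}}\mathbf{G}_{kl}^{H}\}$. Under MR combining the $m$-th block row of $\mathbf{G}_{kl}$ is $\hat{\mathbf{H}}_{mk}^{H}\mathbf{H}_{ml}$, so the first two terms are purely second-order and follow quickly: using that the MMSE estimate is independent of its error, the $m$-th block of $\mathbb{E}\{\mathbf{G}_{kk}\}$ reduces to $\mathbb{E}\{\hat{\mathbf{H}}_{mk}^{H}\hat{\mathbf{H}}_{mk}\}$, whose $(n,n')$ entry is $\mathrm{tr}(\hat{\mathbf{R}}_{mk}^{n'n})$; this yields $\mathbf{Z}_k$, and since $\mathbb{E}\{\mathbf{V}_{mk}^{H}\mathbf{V}_{mk}\}$ is the very same matrix, $\mathbf{S}_k=\mathbf{S}_{k,\mathrm{c}}$.

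The substance of the proof is the matrix $\mathbb{E}\{\mathbf{G}_{kl}\bar{\mathbf{F}}_{l,\mathrm{u}}\mathbf{G}_{kl}^{H}\}$, which I would evaluate one $(m,m')$ submatrix $\mathbb{E}\{\hat{\mathbf{H}}_{mk}^{H}\mathbf{H}_{ml}\bar{\mathbf{F}}_{l,\mathrm{u}}\mathbf{H}_{m'l}^{H}\hat{\mathbf{H}}_{m'k}\}$ at a time. For $m\neq m'$, I invoke independence of the channels across distinct APs to factor the expectation into $\mathbb{E}\{\hat{\mathbf{H}}_{mk}^{H}\mathbf{H}_{ml}\}\,\bar{\mathbf{F}}_{l,\mathrm{u}}\,\mathbb{E}\{\mathbf{H}_{m'l}^{H}\hat{\mathbf{H}}_{m'k}\}$. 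Because $\hat{\mathbf{H}}_{mk}$ depends on the received pilot only through the UEs sharing UE $k$'s pilot, each factor vanishes unless $l\in\mathcal{P}_k$; when $l\in\mathcal{P}_k$ the pilot-contamination cross-covariance $\mathbb{E}\{\mathbf{h}_{ml}\hat{\mathbf{h}}_{mk}^{H}\}=\mathbf{\Xi}_{mkl}$ produces the entries $[\mathbf{\Lambda}_{mkl}]_{nn'}=\mathrm{tr}(\mathbf{\Xi}_{mkl}^{n'n})$, giving the off-diagonal form $\mathbf{\Lambda}_{mkl}\bar{\mathbf{F}}_{l,\mathrm{u}}\mathbf{\Lambda}_{m'lk}$ of $\mathbf{T}_{kl,(2)}$ and explaining why coherent cross-AP interference survives only among pilot-sharing UEs.

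The hard part is the diagonal block $m=m'$, i.e. the fourth-order moment $\mathbb{E}\{\hat{\mathbf{H}}_{mk}^{H}\mathbf{H}_{ml}\bar{\mathbf{F}}_{l,\mathrm{u}}\mathbf{H}_{ml}^{H}\hat{\mathbf{H}}_{mk}\}$, since at a common AP the estimate $\hat{\mathbf{H}}_{mk}$ and the channel $\mathbf{H}_{ml}$ are generally correlated. I would split the contribution into a non-coherent part, present for every $l$, and a coherent part surviving only for $l\in\mathcal{P}_k$. For the non-coherent part I replace the inner factor by $\mathbb{E}\{\mathbf{H}_{ml}\bar{\mathbf{F}}_{l,\mathrm{u}}\mathbf{H}_{ml}^{H}\}$ and sandwich with $\hat{\mathbf{H}}_{mk}$, reducing each entry to $\sum_{i,i'}[\bar{\mathbf{F}}_{l,\mathrm{u}}]_{i'i}\mathrm{tr}(\mathbf{R}_{ml}^{i'i}\hat{\mathbf{R}}_{mk}^{n'n})$, which is exactly $\mathbf{\Gamma}_{kl,m}^{(1)}$ and enters through $\mathbf{T}_{kl,(1)}$ for all $l$. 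For $l\in\mathcal{P}_k$ I would decompose $\hat{\mathbf{h}}_{mk}=\mathbf{S}_{mk}\mathbf{y}_{mk}^{\mathrm{p}}$ into the component coherently aligned with $\mathbf{h}_{ml}$, whose covariance is captured by $\mathbf{P}_{mkl,(2)}$, plus a residual independent of $\mathbf{h}_{ml}$, whose covariance is captured by $\mathbf{P}_{mkl,(1)}$. Writing the two correlated Gaussians through the square-root submatrices $\tilde{\mathbf{R}}_{ml}^{ni}$ and $\tilde{\mathbf{P}}_{mkl,(2)}^{ni}$ and applying the Gaussian (Wick/Isserlis) fourth-moment identity then produces both the $\tau_p^2$-weighted double sum over $q_1,q_2$ and the product-of-traces term in \eqref{eq:Gamma_2}, i.e. $\mathbf{\Gamma}_{kl,m}^{(2)}$. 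Since $\mathbf{T}_{kl,(1)}$ is summed over all $l$ whereas the coherent correction applies only when $l\in\mathcal{P}_k$, the diagonal entry of $\mathbf{T}_{kl,(2)}$ must equal $\mathbf{\Gamma}_{kl,m}^{(2)}-\mathbf{\Gamma}_{kl,m}^{(1)}$ to avoid double counting. The main obstacle is precisely this bookkeeping: isolating the coherent and residual components of the estimate, tracking the powers of $\tau_p$, and carrying out the Wick contractions without dropping or duplicating cross terms. Once the blocks are assembled into $\sum_{l}\mathbf{T}_{kl,(1)}+\sum_{l\in\mathcal{P}_k}\mathbf{T}_{kl,(2)}$, substitution into \eqref{SE_LSFD_Origin}, \eqref{Optimal_LSFD} and \eqref{MMSE_MSE_Matrix} yields \eqref{SE_Closedform} and \eqref{Closed_form_LSFD_MSE} directly.
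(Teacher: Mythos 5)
Your proposal is correct and follows essentially the same route as the paper's Appendix C: compute $\mathbb{E}\{\mathbf{G}_{kk}\}$ and $\mathbf{S}_k$ from second-order statistics of $\mathbf{\hat{H}}_{mk}$, then evaluate $\mathbb{E}\{\mathbf{G}_{kl}\mathbf{\bar{F}}_{l,\mathrm{u}}\mathbf{G}_{kl}^{H}\}$ block by block over the four cases $(m=m'$ or not$)\times(l\in\mathcal{P}_k$ or not$)$, with the $\mathbf{\Gamma}^{(2)}-\mathbf{\Gamma}^{(1)}$ diagonal correction arising exactly as you describe to avoid double counting. The only difference is that for the hard case $m=m'$, $l\in\mathcal{P}_k$ the paper simply invokes the method of \cite{04962}, whereas you spell out the underlying coherent-plus-residual decomposition of $\mathbf{\hat{h}}_{mk}$ and the Gaussian fourth-moment identity --- which is precisely the computation being outsourced.
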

\begin{IEEEproof}
The proof is given in Appendix~\ref{Appendix2}.
\end{IEEEproof}

\subsubsection{Iteratively WMMSE Precoding Design}
For the LSFD scheme, we also investigate a weighted sum-rate maximization problem as
\begin{equation}\label{Sum_SE_2}
\begin{aligned}
&\underset{\left\{ \mathbf{F} \right\}}{\max}\sum_{k=1}^K{\mu_{k,(2)}\mathrm{SE}_{k,(2)}}\\
&\mathrm{s}.\mathrm{t}. \left\| \mathbf{F}_{k,\mathrm{u},(2)} \right\| ^2\leqslant p_k\,\,\forall k=1,\dots ,K
\end{aligned}
\end{equation}
where $\mu_{k,(2)}$ represents the priority weight of UE $k$ for the ``LSFD" scheme and $\mathrm{SE}_{k,(2)}$ is given in \eqref{SE_LSFD_Origin} with arbitrary combining structure in the first decoding layer.

Similarly, the matrix-weighted sum-MSE minimization problem as\footnote{The notation $\mathbf{G}$ denotes all $\mathbf{G}$-relevant variables, like $\mathbb{E} \{ \mathbf{G}_{kl}\mathbf{\bar{F}}_{l,\mathrm{u},(2)}\mathbf{G}_{kl}^{H} \} $ and $\mathbb{E} \{ \mathbf{G}_{kk} \}$, etc.}
\vspace*{-0.1cm}
\begin{equation}\label{Sum_MSE_2}
\vspace*{-0.1cm}
\begin{aligned}
&\underset{\left\{ \mathbf{F},\mathbf{A},\mathbf{W},\mathbf{G},\mathbf{S} \right\}}{\min}\sum_{k=1}^K{\mu _{k,(2)}}\left[ \mathrm{tr}\left( \mathbf{W}_{k,(2)}\mathbf{E}_{k,(2)} \right) -\log _2\left| \mathbf{W}_{k,(2)} \right| \right]\\
&\mathrm{s}.\mathrm{t}. \left\| \mathbf{F}_{k,\mathrm{u},(2)} \right\| ^2\leqslant p_k\,\,\forall k=1,\dots ,K
\end{aligned}
\end{equation}
is equivalent to the weighted sum-rate maximization problem \eqref{Sum_SE_2}, where $\mathbf{W}_{k,(2)}$ is the weight matrix for UE $k$. Note that \eqref{Sum_MSE_2} is convex over each optimization variable $\mathbf{F}$, $\mathbf{A}$, $\mathbf{W}$, $\mathbf{G}$, $\mathbf{S}$ but is not jointly convex over all optimization variables. So we can solve \eqref{Sum_MSE_2} by sequentially fixing four of the five optimization variables $\mathbf{F}$, $\mathbf{A}$, $\mathbf{W}$, $\mathbf{G}$, $\mathbf{S}$ and updating the fifth.\footnote{As for $\mathbf{G}$ and $\mathbf{S}$, if L-MMSE combining scheme applied, $\mathbb{E} \left\{ \mathbf{G}_{kk} \right\}$ and $\mathbf{S}_k$ are relevant to $\mathbf{F}_{k,\mathrm{u},(2)}$ so we should also update them. On the contrary, $\mathbb{E} \{ \mathbf{G}_{kk} \}$ and $\mathbf{S}_k$ with MR combining structure are irrelevant to $\mathbf{F}$ so we only need to update $\mathbb{E} \{ \mathbf{G}_{kl}\mathbf{\bar{F}}_{l,\mathrm{u},(2)}\mathbf{G}_{kl}^{H} \} $.}

The update of $\mathbf{A}_k$ and $\mathbf{E}_{k,(2)}$ are given by the optimal LSFD scheme \eqref{Optimal_LSFD} and MSE matrix with optimal LSFD scheme \eqref{MMSE_MSE_Matrix}. Note that optimal $\mathbf{W}_{k,(2)}$ for \eqref{Sum_MSE_2} is $\mathbf{W}_{k,(2)}^{\mathrm{opt}}=\mathbf{E}_{k,(2)}^{-1}$.
\begin{rem}
When $\mathbf{A}_{k}^{\mathrm{opt}}$ and $\mathbf{W}_{k,(2)}^{\mathrm{opt}}$ for all UEs are applied in \eqref{Sum_MSE_2}, we notice that \eqref{Sum_MSE_2} becomes to the equivalent optimization problem of \eqref{Sum_SE_2} as
\vspace*{-0.1cm}
\begin{equation}\label{SE_MSE_2}
\vspace*{-0.1cm}
\begin{aligned}
&\underset{\left\{ \mathbf{F},\mathbf{G},\mathbf{S} \right\}}{\max}\sum_{k=1}^K{\mu _{k,(2)}\log _2\left| \left( \mathbf{E}_{k,(2)}^{\mathrm{opt}} \right) ^{-1} \right|}\\
&\mathrm{s}.\mathrm{t}. \left\| \mathbf{F}_{k,\mathrm{u},(2)} \right\| ^2\leqslant p_k\,\,\forall k=1,\dots ,K
\end{aligned}
\end{equation}which is a well-known relationship between $\mathbf{E}_{k,(2)}^{\mathrm{opt}}$ and $\mathrm{SE}_{k,(2)}^{\mathrm{opt}}$ and proven in Appendix~\ref{MSE_SE}.
\end{rem}

Last but not least, fixing other variables, the update of $\mathbf{F}_{k,\mathrm{u},(2)}$ for \eqref{Sum_MSE_2} results in the optimization problem as
\begin{equation}\label{F_Problem_2}
\begin{aligned}
&\underset{\left\{ \mathbf{F} \right\}}{\min}\sum_{k=1}^K{\mu _{k,(2)}\left[ \mathrm{tr}\left( \mathbf{W}_k\left( \mathbf{I}_N-\mathbf{F}_{k,\mathrm{u},(2)}^{H}\mathbb{E} \left\{ \mathbf{G}_{kk}^{H} \right\} \mathbf{A}_k \right) \left( \mathbf{I}_N-\mathbf{F}_{k,\mathrm{u},(2)}^{H}\mathbb{E} \left\{ \mathbf{G}_{kk}^{H} \right\} \mathbf{A}_k \right) ^H \right) \right]}\\
&+\sum_{k=1}^K{\mu _{k,(2)}\left[ \mathrm{tr}\left( \mathbf{W}_{k,(2)}\mathbf{A}_{k}^{H}\left( \sum_{l\ne k}^K{\mathbb{E} \left\{ \mathbf{G}_{kl}\mathbf{\bar{F}}_{l,\mathrm{u},(2)}\mathbf{G}_{kl}^{H} \right\}}+\sigma ^2\mathbf{S}_k \right) \mathbf{A}_k \right) \right]}\\ &\mathrm{s}.\mathrm{t}. \left\| \mathbf{F}_{k,\mathrm{u},(2)} \right\| ^2\leqslant p_k\,\,\forall k=1,\dots ,K
\end{aligned}
\end{equation}
which is a convex quadratic optimization problem. Thus, we can also derive the optimal precoding scheme by applying classic Lagrange multipliers methods and KKT conditions. The Lagrange function of \eqref{F_Problem_2} is
\begin{equation}\label{Lagrange_Function_2}
\begin{aligned}
f\left( \mathbf{F}_{1,\mathrm{u},(2)},\dots ,\mathbf{F}_{K,\mathrm{u},(2)} \right) &=\sum_{k=1}^K{\mu _{k,(2)}\left[ \mathrm{tr}\left( \mathbf{W}_{k,(2)}\left( \mathbf{I}_N-\mathbf{F}_{k,\mathrm{u},(2)}^{H}\mathbb{E} \left\{ \mathbf{G}_{kk}^{H} \right\} \mathbf{A}_k \right) \left( \mathbf{I}_N-\mathbf{F}_{k,\mathrm{u},(2)}^{H}\mathbb{E} \left\{ \mathbf{G}_{kk}^{H} \right\} \mathbf{A}_k \right) ^H \right) \right]}\\
&+\sum_{k=1}^K{\mu _{k,(2)}\left[ \mathrm{tr}\left( \mathbf{W}_{k,(2)}\mathbf{A}_{k}^{H}\left( \sum_{l\ne k}^K{\mathbb{E} \left\{ \mathbf{G}_{kl}\mathbf{\bar{F}}_{l,\mathrm{u},(2)}\mathbf{G}_{kl}^{H} \right\}}+\sigma ^2\mathbf{S}_k \right) \mathbf{A}_k \right) \right]}\\
&+\sum_{k=1}^K{\lambda _{k,(2)}\left( \mathrm{tr}\left( \mathbf{F}_{k,\mathrm{u},(2)}\mathbf{F}_{k,\mathrm{u},(2)}^{H} \right) -p_k \right)}.
\end{aligned}
\end{equation}

\begin{thm}\label{Optimal_Precoding_2}
By applying the first-order optimality condition of \eqref{Lagrange_Function_2} with respect to each $\mathbf{F}_{k,\mathrm{u},(2)}$ and fixing other optimization variables, we obtain the optimal precoding scheme as
\begin{equation}
\begin{aligned}\label{Optimal_F_2}
\mathbf{F}_{k,\mathrm{u},(2)}^{\mathrm{opt}}=\mu _{k,(2)}\left( \sum_{l=1}^K{\mu _{l,(2)}\mathbb{E} \left\{ \mathbf{G}_{lk}^{H}\mathbf{A}_l\mathbf{E}_{l,(2)}^{-1}\mathbf{A}_{l}^{H}\mathbf{G}_{lk} \right\}}+ \lambda _{k,(2)}\mathbf{I}_N \right) ^{-1}\mathbb{E} \left\{ \mathbf{G}_{kk}^{H} \right\} \mathbf{A}_k\mathbf{E}_{k,(2)}^{-1},
\end{aligned}
\end{equation}
where $\lambda _{k,(2)}\geqslant 0$ is the Lagrangian multiplier during the phase of ``LSFD" scheme. According to the KKT condition, $\lambda _{k,(2)}$ and $\mathbf{F}_{k,\mathrm{u},(2)}$ should also satisfy
\begin{equation}\label{KKT_2}
\begin{aligned}
\left\| \mathbf{F}_{k,\mathrm{u},(2)} \right\| ^2\leqslant p_k,\quad\lambda _{k,(2)}\left( \left\| \mathbf{F}_{k,\mathrm{u},(2)} \right\| ^2-p_k \right) =0,\quad\lambda _{k,(2)}\geqslant 0.
\end{aligned}
\end{equation}
\end{thm}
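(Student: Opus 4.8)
The plan is to treat \eqref{F_Problem_2} as what it is: a convex, differentiable quadratic program in $\mathbf{F}_{k,\mathrm{u},(2)}$ with the single convex constraint $\|\mathbf{F}_{k,\mathrm{u},(2)}\|^2\leqslant p_k$. For such a program the KKT system is both necessary and sufficient for global optimality (for the fixed $\mathbf{A},\mathbf{W},\mathbf{G},\mathbf{S}$), so I would form the Lagrangian \eqref{Lagrange_Function_2}, impose the first-order stationarity condition in $\mathbf{F}_{k,\mathrm{u},(2)}$, solve the resulting linear matrix equation, and then append the complementary-slackness and dual-feasibility relations \eqref{KKT_2}.

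The crucial bookkeeping step, carried out before any differentiation, is to collect every summand of the objective that depends on $\mathbf{F}_{k,\mathrm{u},(2)}$. Equivalently reading each per-user objective as $\mathrm{tr}(\mathbf{W}_{k,(2)}\mathbf{E}_{k,(2)})$ with $\mathbf{E}_{k,(2)}$ as in \eqref{MSE_Matrix}, the precoder $\mathbf{F}_{k,\mathrm{u},(2)}$ enters in two distinct ways. First, it appears in UE $k$'s own term through the linear cross terms $\mathbf{A}_k^H\mathbb{E}\{\mathbf{G}_{kk}\}\mathbf{F}_{k,\mathrm{u},(2)}$ and through the self-interference summand $\mathbb{E}\{\mathbf{G}_{kk}\bar{\mathbf{F}}_{k,\mathrm{u},(2)}\mathbf{G}_{kk}^{H}\}$ (the $l=k$ term of the quadratic part). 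Second -- and this coupling must not be missed -- $\mathbf{F}_{k,\mathrm{u},(2)}$ also appears inside the interference term $\mathbb{E}\{\mathbf{G}_{lk}\bar{\mathbf{F}}_{k,\mathrm{u},(2)}\mathbf{G}_{lk}^{H}\}$ of \emph{every} other UE $l\neq k$, since each UE's MSE sums over all interfering precoders via $\bar{\mathbf{F}}_{l,\mathrm{u},(2)}=\mathbf{F}_{l,\mathrm{u},(2)}\mathbf{F}_{l,\mathrm{u},(2)}^{H}$.

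Next I would differentiate with respect to $\mathbf{F}_{k,\mathrm{u},(2)}^{*}$ using Wirtinger matrix calculus. The linear term contributes $-\mu_{k,(2)}\mathbb{E}\{\mathbf{G}_{kk}^{H}\}\mathbf{A}_k\mathbf{W}_{k,(2)}$; each quadratic term $\mu_{l,(2)}\mathrm{tr}(\mathbf{W}_{l,(2)}\mathbf{A}_l^{H}\mathbb{E}\{\mathbf{G}_{lk}\mathbf{F}_{k,\mathrm{u},(2)}\mathbf{F}_{k,\mathrm{u},(2)}^{H}\mathbf{G}_{lk}^{H}\}\mathbf{A}_l)$ is recast by cyclicity and linearity of $\mathbb{E}\{\cdot\}$ as $\mu_{l,(2)}\mathrm{tr}(\mathbf{F}_{k,\mathrm{u},(2)}^{H}\mathbb{E}\{\mathbf{G}_{lk}^{H}\mathbf{A}_l\mathbf{W}_{l,(2)}\mathbf{A}_l^{H}\mathbf{G}_{lk}\}\mathbf{F}_{k,\mathrm{u},(2)})$, whose derivative is $\mu_{l,(2)}\mathbb{E}\{\mathbf{G}_{lk}^{H}\mathbf{A}_l\mathbf{W}_{l,(2)}\mathbf{A}_l^{H}\mathbf{G}_{lk}\}\mathbf{F}_{k,\mathrm{u},(2)}$; and the penalty gives $\lambda_{k,(2)}\mathbf{F}_{k,\mathrm{u},(2)}$. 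Setting the gradient to zero and merging the self ($l=k$) and cross ($l\neq k$) quadratic coefficients into the single sum $\sum_{l=1}^K\mu_{l,(2)}\mathbb{E}\{\mathbf{G}_{lk}^{H}\mathbf{A}_l\mathbf{W}_{l,(2)}\mathbf{A}_l^{H}\mathbf{G}_{lk}\}$ produces a linear matrix equation for $\mathbf{F}_{k,\mathrm{u},(2)}$; its coefficient matrix is positive semidefinite and invertible (positive definite once $\lambda_{k,(2)}>0$), so inverting it and substituting the optimal weight $\mathbf{W}_{l,(2)}=\mathbf{E}_{l,(2)}^{-1}$ delivers \eqref{Optimal_F_2}. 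The relations in \eqref{KKT_2} are then just the standard primal-feasibility, complementary-slackness, and dual-feasibility conditions attached to the lone inequality constraint.

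The main obstacle is precisely the cross-user coupling in the collection step together with the ordering of expectation and differentiation. One must keep the precoder inside the expectation $\mathbb{E}\{\mathbf{G}_{lk}(\cdot)\mathbf{G}_{lk}^{H}\}$ of each UE and differentiate without factorizing that expectation; this is what yields the full-product terms $\mathbb{E}\{\mathbf{G}_{lk}^{H}\mathbf{A}_l\mathbf{E}_{l,(2)}^{-1}\mathbf{A}_l^{H}\mathbf{G}_{lk}\}$ in the answer rather than the separated $\mathbb{E}\{\mathbf{G}_{lk}^{H}\}(\cdots)\mathbb{E}\{\mathbf{G}_{lk}\}$. Once the index bookkeeping over $l$ and the expectation-versus-derivative order are handled correctly, the remaining trace-derivative manipulations are entirely routine.
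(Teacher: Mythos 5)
Your proposal is correct and takes essentially the same route as the paper, which states Theorem~\ref{Optimal_Precoding_2} without a separate proof precisely because it parallels the Lagrangian/first-order-optimality/KKT argument given for Theorem~\ref{Optimal_Precoding_1} in Appendix~\ref{Appendix3}: differentiate the Lagrangian in $\mathbf{F}_{k,\mathrm{u},(2)}^{*}$, merge the self ($l=k$) and cross-user ($l\ne k$) quadratic contributions into the single sum $\sum_{l}\mu_{l,(2)}\mathbb{E}\{\mathbf{G}_{lk}^{H}\mathbf{A}_l\mathbf{W}_{l,(2)}\mathbf{A}_{l}^{H}\mathbf{G}_{lk}\}$, invert, substitute $\mathbf{W}_{l,(2)}=\mathbf{E}_{l,(2)}^{-1}$, and append \eqref{KKT_2}. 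Your insistence on reading the $l=k$ quadratic term through \eqref{MSE_Matrix} as the joint expectation $\mathbb{E}\{\mathbf{G}_{kk}\mathbf{\bar{F}}_{k,\mathrm{u},(2)}\mathbf{G}_{kk}^{H}\}$ and on differentiating before factorizing any expectation is exactly what is needed for \eqref{Optimal_F_2} to emerge in the stated form (the literal $l=k$ term displayed in \eqref{F_Problem_2}--\eqref{Lagrange_Function_2} uses separated expectations and would otherwise leave a Sylvester-type residual), so no gap remains.
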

Note that when $\sum_{l=1}^K{\mu _{l,(2)}\mathbb{E} \{ \mathbf{G}_{lk}^{H}\mathbf{A}_l\mathbf{E}_{l,(2)}^{-1}\mathbf{A}_{l}^{H}\mathbf{G}_{lk} \}}$ is invertible and $\mathrm{tr}\left[ \mathbf{F}_{k,\mathrm{u},(2)}( 0 ) \mathbf{F}_{k,\mathrm{u},(2)}( 0 \right) ^H ] \leqslant p_k$, then $\mathbf{F}_{k,\mathrm{u},(2)}^{\mathrm{opt}}=\mathbf{F}_{k,\mathrm{u},(2)}\left( 0 \right) $, otherwise we must have $\mathrm{tr}[ \mathbf{F}_{k,\mathrm{u},(2)}( \lambda _{k,(2)} ) \mathbf{F}_{k,\mathrm{u,(2)}}( \lambda _{k,(2)} ) ^H ] =p_k$. Following the similar method in Corollary~\ref{Decreasing_function_1}, we notice that $\lambda _{k,(2)}$ can be easily found by a 1-D bisection algorithm since $\mathrm{tr}[ \mathbf{F}_{k,\mathrm{u},(2)}( \lambda _{k,(2)} ) \mathbf{F}_{k,\mathrm{u},(2)}( \lambda _{k,(2)} ) ^H ] $ is a monotonically decreasing function of $\lambda _{k,(2)}$.

Moreover, if MR combining $\mathbf{V}_{mk}=\mathbf{\hat{H}}_{mk}$ is applied in the first layer, we can compute expectations in \eqref{Optimal_F_2} in closed-form as following theorem.
\begin{thm}\label{F_Th_Closed_Form}
With MR combining $\mathbf{V}_{mk}=\mathbf{\hat{H}}_{mk}$ and the optimal LSFD scheme applied, we can compute $\mathbb{E} \{ \mathbf{G}_{kk}^{H} \}$, $\mathbf{A}_k^{\mathrm{opt}}$, and $\mathbf{E}_{k,(2)}^{\mathrm{opt}}$ in closed-form as Theorem~\ref{Th_Closed_Form}. Moreover, we have $\mathbf{\bar{T}}_{lk}=\mathbb{E} \{\mathbf{G}_{lk}^{H}\mathbf{A}_l\mathbf{E}_{l,(2)}^{-1}\mathbf{A}_{l}^{H}\mathbf{G}_{lk}\}\in\mathbb{C} ^{N\times N}$ where the $(i,n)$-th element of $\mathbf{\bar{T}}_{lk}$ is $\mathrm{tr}( \mathbf{\bar{A}}_l\mathbf{\bar{G}}_{lk,ni} ) $ with $\mathbf{\bar{A}}_l\triangleq \mathbf{A}_l\mathbf{E}_{l,(2)}^{-1}\mathbf{A}_{l}^{H}$ and the $[ \left( m-1 \right) N+p,\left( m^{\prime}-1 \right) N+p^{\prime}] $-th (or $[o,j]$-th briefly) entry of $\mathbf{\bar{G}}_{lk,ni}\triangleq \mathbb{E} \{ \mathbf{g}_{lk,n}\mathbf{g}_{lk,i}^{H} \} \in \mathbb{C} ^{MN\times MN}$ being
\begin{equation}\label{gg}
\begin{aligned}
\mathbb{E} \{\mathbf{g}_{lk,n}\mathbf{g}_{lk,i}^{H}\}_{oj}=\begin{cases}
	0,\quad l\notin \mathcal{P} _k,m\ne m^{\prime}\\
	\mathrm{tr(}\mathbf{R}_{mk}^{ni}\mathbf{\hat{R}}_{ml}^{p^{\prime}p}),\quad l\notin \mathcal{P} _k,m=m^{\prime}\\
	\mathrm{tr(}\mathbf{\Xi }_{mlk}^{np})\mathrm{tr(}\mathbf{\Xi }_{m^{\prime}kl}^{p^{\prime}i}),\quad l\in \mathcal{P} _k,m\ne m^{\prime}\\
	\mathrm{tr}\left( \mathbf{R}_{mk}^{ni}\mathbf{P}_{mlk,\left( 1 \right)}^{p^{\prime}p} \right) +\tau _{p}^{2}\sum_{q_1=1}^N{\sum_{q_2=1}^N{\mathrm{tr}\left( \mathbf{\tilde{P}}_{mlk,\left( 2 \right)}^{q_1p}\mathbf{\tilde{R}}_{mk}^{nq_2}\mathbf{\tilde{R}}_{mk}^{q_2i}\mathbf{\tilde{P}}_{mlk,\left( 2 \right)}^{p^{\prime}q_1} \right)}}\\
	\quad \quad\quad +\tau _{p}^{2}\sum_{q_1=1}^N{\sum_{q_2=1}^N{\mathrm{tr}\left( \mathbf{\tilde{P}}_{mlk,\left( 2 \right)}^{q_1n}\mathbf{\tilde{R}}_{mk}^{nq_1} \right) \mathrm{tr}\left( \mathbf{\tilde{P}}_{mlk,\left( 2 \right)}^{p^{\prime}q_2}\mathbf{\tilde{R}}_{mk}^{q_2i} \right)}}, \quad l\in \mathcal{P} _k,m=m^{\prime}\quad\\
\end{cases}
\end{aligned}
\end{equation}
where $\mathbf{\Xi }_{mlk}=\tau _p\mathbf{R}_{mk}\mathbf{\tilde{F}}_{k,\mathrm{p}}^{H}\mathbf{\Psi }_{mk}^{-1}\mathbf{\tilde{F}}_{l,\mathrm{p}}\mathbf{R}_{ml}$, $\mathbf{\Xi }_{m^{\prime}kl}=\tau _p\mathbf{R}_{m^{\prime}l}\mathbf{\tilde{F}}_{l,\mathrm{p}}^{H}\mathbf{\Psi }_{m^{\prime}l}^{-1}\mathbf{\tilde{F}}_{k,\mathrm{p}}\mathbf{R}_{m^{\prime}k}$, $\mathbf{S}_{ml}=\mathbf{R}_{ml}\mathbf{\tilde{F}}_{l,\mathrm{p}}^{H}\mathbf{\Psi }_{ml}^{-1}$, $\mathbf{P}_{mlk,\left( 1 \right)}=\tau _p\mathbf{S}_{ml}( \mathbf{\Psi }_{ml}-\tau _p\mathbf{\tilde{F}}_{k,\mathrm{p}}\mathbf{R}_{mk}\mathbf{\tilde{F}}_{k,\mathrm{p}}^{H} ) \mathbf{S}_{ml}^{H}$ and $\mathbf{P}_{mlk,\left( 2 \right)}=\mathbf{S}_{ml}\mathbf{\tilde{F}}_{k,\mathrm{p}}\mathbf{R}_{mk}\mathbf{\tilde{F}}_{k,\mathrm{p}}^{H}\mathbf{S}_{ml}^{H}$. Plugging the derived results into \eqref{Optimal_F_2}, we can compute $\mathbf{F}_{k,\mathrm{u},(2)}^{\mathrm{opt}}$ in closed-form as
\begin{equation}\label{F_2_Closed_form}
\begin{aligned}
\mathbf{F}_{k,\mathrm{u},(2),\mathrm{c}}^{\mathrm{opt}}=\mu _{k,(2)}\left( \sum_{l=1}^K{\mu _{l,(2)} \mathbf{\bar{T}}_{lk} }+ \lambda _{k,(2)}\mathbf{I}_N \right) ^{-1} \mathbf{Z}_{k}^{H}  \mathbf{A}_{k,\mathrm{c}}^{\mathrm{opt}}\mathbf{E}_{k,(2),\mathrm{c}}^{\mathrm{opt},-1}.
\end{aligned}
\end{equation}
\end{thm}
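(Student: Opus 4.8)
The plan is to reduce the theorem to three ingredients: (i) the closed-form quantities already furnished by Theorem~\ref{Th_Closed_Form}, (ii) a closed-form evaluation of the matrix $\mathbf{\bar{T}}_{lk}=\mathbb{E}\{\mathbf{G}_{lk}^{H}\mathbf{\bar{A}}_l\mathbf{G}_{lk}\}$, and (iii) a direct substitution into \eqref{Optimal_F_2}. For part (i), since MR combining is used, $\mathbb{E}\{\mathbf{G}_{kk}^{H}\}=\mathbf{Z}_k^{H}$ and the optimal LSFD matrix $\mathbf{A}_{k,\mathrm{c}}^{\mathrm{opt}}$ and error matrix $\mathbf{E}_{k,(2),\mathrm{c}}^{\mathrm{opt}}$ are exactly those in \eqref{Closed_form_LSFD_MSE}; these are quoted from Theorem~\ref{Th_Closed_Form}, so no new work is required there.

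The core of the proof is part (ii). First I would note that, under the LSFD scheme, $\mathbf{A}_l$ and $\mathbf{E}_{l,(2)}$ depend only on channel statistics and are therefore deterministic; hence $\mathbf{\bar{A}}_l=\mathbf{A}_l\mathbf{E}_{l,(2)}^{-1}\mathbf{A}_{l}^{H}$ can be pulled outside the expectation, giving $[\mathbf{\bar{T}}_{lk}]_{in}=\mathbb{E}\{\mathbf{g}_{lk,i}^{H}\mathbf{\bar{A}}_l\mathbf{g}_{lk,n}\}=\mathrm{tr}(\mathbf{\bar{A}}_l\mathbf{\bar{G}}_{lk,ni})$ with $\mathbf{\bar{G}}_{lk,ni}=\mathbb{E}\{\mathbf{g}_{lk,n}\mathbf{g}_{lk,i}^{H}\}$. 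It then remains to evaluate the $[o,j]$-entry of $\mathbf{\bar{G}}_{lk,ni}$, which with $\mathbf{V}_{ml}=\mathbf{\hat{H}}_{ml}$ equals $\mathbb{E}\{\mathbf{\hat{h}}_{ml,p}^{H}\mathbf{h}_{mk,n}\mathbf{h}_{m'k,i}^{H}\mathbf{\hat{h}}_{m'l,p'}\}$, and I would split this according to the four cases in \eqref{gg}. For the cross-AP cases $m\neq m'$, independence of channels across APs factors the expectation into $\mathbb{E}\{\mathbf{\hat{h}}_{ml,p}^{H}\mathbf{h}_{mk,n}\}\,\mathbb{E}\{\mathbf{h}_{m'k,i}^{H}\mathbf{\hat{h}}_{m'l,p'}\}$; if $l\notin\mathcal{P}_k$ both factors vanish since estimate and channel are independent and zero-mean, whereas if $l\in\mathcal{P}_k$ each factor is a trace of a pilot-contamination cross-correlation block, yielding $\mathrm{tr}(\mathbf{\Xi}_{mlk}^{np})\,\mathrm{tr}(\mathbf{\Xi}_{m'kl}^{p'i})$ after writing $\mathbf{\Xi}_{mlk}=\mathbb{E}\{\mathbf{h}_{mk}\mathbf{\hat{h}}_{ml}^{H}\}$ and $\mathbf{\Xi}_{m'kl}=\mathbb{E}\{\mathbf{\hat{h}}_{m'l}\mathbf{h}_{m'k}^{H}\}$ in their MMSE forms. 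For the same-AP case $m=m'$ with $l\notin\mathcal{P}_k$, independence of $\mathbf{\hat{h}}_{ml}$ and $\mathbf{h}_{mk}$ collapses the moment via the trace identity to $\mathrm{tr}(\mathbf{R}_{mk}^{ni}\mathbf{\hat{R}}_{ml}^{p'p})$.

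The main obstacle is the same-AP, co-pilot case $m=m'$, $l\in\mathcal{P}_k$, where $\mathbf{\hat{h}}_{ml}$ and $\mathbf{h}_{mk}$ are correlated and a genuine fourth-order moment of correlated complex Gaussians must be computed. Here I would write $\mathbf{\hat{h}}_{ml}=\mathbf{S}_{ml}\mathbf{y}_{ml}^{\mathrm{p}}$ and decompose the observation $\mathbf{y}_{ml}^{\mathrm{p}}=\tau_p\mathbf{\tilde{F}}_{k,\mathrm{p}}\mathbf{h}_{mk}+\mathbf{w}$ into the part proportional to $\mathbf{h}_{mk}$ and an independent remainder $\mathbf{w}$, so that $\mathbf{P}_{mlk,(2)}$ captures the $\mathbf{h}_{mk}$-correlated covariance and $\mathbf{P}_{mlk,(1)}$ the independent one. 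Substituting this split and applying the Gaussian (Isserlis/Wick) fourth-moment expansion separates the moment into a product of second-order terms plus a self-correlation term; expressing each through the block square roots $\mathbf{\tilde{R}}_{mk}^{ni}$ and $\mathbf{\tilde{P}}_{mlk,(2)}^{ni}$ produces the $\tau_p^2$ double sums in the fourth line of \eqref{gg}. This computation is structurally identical to the derivation of $\mathbf{\Gamma}_{kl,m}^{(2)}$ in \eqref{eq:Gamma_2} within the proof of Theorem~\ref{Th_Closed_Form}, so the same moment identities apply with the roles of $k$ and $l$ interchanged.

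Finally, assembling $\mathbf{\bar{T}}_{lk}$ from the four cases and inserting it together with $\mathbf{Z}_k$, $\mathbf{A}_{k,\mathrm{c}}^{\mathrm{opt}}$ and $\mathbf{E}_{k,(2),\mathrm{c}}^{\mathrm{opt}}$ into \eqref{Optimal_F_2} yields \eqref{F_2_Closed_form}, completing part (iii). I expect the bookkeeping of the four cases and the Gaussian fourth-moment expansion in the co-pilot same-AP term to be the only genuinely delicate steps, while the remaining manipulations are routine given Theorem~\ref{Th_Closed_Form}.
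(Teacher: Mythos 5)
Your proposal is correct and follows essentially the same route as the paper's Appendix F: pull the deterministic $\mathbf{\bar{A}}_l$ out of the expectation (the paper invokes its Lemma~\ref{Lemma1} for the same trace identity), enumerate the four cases by pilot sharing and same/different AP using cross-AP independence and zero means, handle the co-pilot same-AP case by splitting $\mathbf{y}_{ml}^{\mathrm{p}}$ into the $\mathbf{h}_{mk}$-correlated part (covariance $\mathbf{P}_{mlk,(2)}$) and an independent remainder (covariance $\mathbf{P}_{mlk,(1)}$) followed by the Gaussian fourth-moment expansion exactly as in the derivation of $\mathbf{\Gamma}_{kl,m}^{(2)}$, and finally substitute into \eqref{Optimal_F_2}. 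No gaps; your case assignments even match the theorem statement \eqref{gg} more cleanly than the appendix's prose, which appears to transpose the labels of the second and third cases.
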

\begin{IEEEproof}
The proof is given in Appendix~\ref{F_Closed_Form}.
\end{IEEEproof}

Furthermore, an iterative optimization algorithm for $\mathbf{F}_{k,\mathrm{u},(2)}$ is summarized in Algorithm \ref{algo:iterative}. The convergence of Algorithm \ref{algo:iterative} is proven in \cite[Theorem 3]{5756489}.
\vspace*{-0.4cm}
\begin{rem}
Relying on the iterative minimization of weighted MSE, two efficient uplink I-WMMSE precoding schemes to maximize the weighted sum SE are proposed. The I-WMMSE precoding schemes for the ``FCP" and ``LSFD" schemes are investigated in Algorithm~\ref{algo:iterative1} and Algorithm~\ref{algo:iterative}, respectively. Note that the design of I-WMMSE precoding scheme for the FCP/LSFD is based on instantaneous CSI/channel statistics, respectively. More importantly, we can compute I-WMMSE precoding schemes in novel closed-form only for the LSFD scheme with MR combining based on Theorm~\ref{F_Th_Closed_Form}.
\end{rem}

\begin{algorithm}[t]
\label{algo:iterative}
\caption{I-WMMSE Algorithm for the Design of $\mathbf{F}_{k,\mathrm{u},(2)}$}
\KwIn{Channel statistics $\mathbf{\Theta }$ for all possible pairs; UE weights $\mu _{k,(2)}$ for all UEs;}
\KwOut{Optimal precoding matrices $\mathbf{F}_{k,\mathrm{u},(2)}$ for all UEs ($\mathbf{F}_{k,\mathrm{u},(2)}^{( i)}$ for the first or third stopping criterion and $\mathbf{F}_{k,\mathrm{u},(2)}^{( i-1)}$ for the second stopping criterion);}

{\bf Initiation:} $i=0$, $\mathbf{F}_{k,\mathrm{u},(2)}^{\left( 0 \right)}$ and $R_{(2)}^{\left( 0 \right)}=\sum_{k=1}^K{\mu _{k,(2)}\mathrm{SE}_{k,(2)}^{\left( 0 \right)}}$ for all UEs; maximum iteration number $I_{(2),\max}$ and threshold $\varepsilon_{(2)} $;\\

\Repeat(){$| R_{(2)}^{\left( i \right)}-R_{(2)}^{\left( i-1 \right)} |/{R_{(2)}^{\left( i-1 \right)}}\leqslant \varepsilon_{(2)} $ or $R_{(2)}^{\left( i \right)}<R_{(2)}^{\left( i-1 \right)}$ or $i\geqslant I_{(2),\max}$}
{
$i=i+1$\\
Update channel statistics $\mathbf{\Theta }^{\left( i \right)}$, such as $\mathbb{E} \{ \mathbf{G}_{kk}^{\left( i \right)} \} $, $\mathbb{E} \{ \mathbf{G}_{kl}^{\left( i \right)}\mathbf{\bar{F}}_{l,\mathrm{u}}^{\left( i-1 \right)}( \mathbf{G}_{kl}^{\left( i \right)} ) ^H \}$ and $\mathbf{S}_{k}^{\left( i \right)}$;\\
Update optimal LSFD matrix $\mathbf{A}_{k}^{\left( i \right)}$ with $\mathbf{F}_{l,\mathrm{u},(2)}^{\left( i-1 \right)}$ and $\mathbf{\Theta }^{\left( i \right)}$ based on \eqref{Optimal_LSFD};\\
Update optimal MSE matrix $\mathbf{E}_{k,(2)}^{\left( i \right)}$ with $\mathbf{F}_{l,\mathrm{u},(2)}^{\left( i-1 \right)}$, $\mathbf{A}_{k}^{\left( i \right)}$ and $\mathbb{E} \{ \mathbf{G}_{kk}^{\left( i \right)} \} $ based on \eqref{MMSE_MSE_Matrix} and update $\mathbf{W}_{k,(2)}^{\left( i \right)}$;\\
Update optimal precoding matrix $\mathbf{F}_{k,\mathrm{u},(2)}^{\left( i \right)}$ with $\mathbf{A}_{k}^{\left( i \right)}$, $\mathbf{W}_{k,(2)}^{\left( i \right)}$ and $\mathbf{\Theta }^{\left( i \right)}$ based on \eqref{Optimal_F_2}, where $\lambda _{k,(2)}^{\left( i \right),}$ is found by a bisection algorithm; \\
Update sum weighted rate $R_{(2)}^{\left( i \right)}=\sum_{k=1}^K{\mu _{k,(2)}\mathrm{SE}_{k,(2)}^{\left( i \right)}}$;\\
}
\end{algorithm}

\vspace*{-0.6cm}
\section{Precoding Implementation and Complexity Analysis}\label{Precoding Implementation}
In this section, we discuss the practical implementation and analyze computational complexity for the UL precoding schemes investigated in Section~\ref{sec:Iterative Optimization}.
\subsection{Precoding Implementation}
\vspace{-0.3cm}
\subsubsection{Precoding Characteristics}

As described above, we investigate a standard block fading model, where the channel response is constant and frequency flat in a coherence block, which contains $\tau_c$ channel uses. For the ``fully centralized processing" scheme, we notice that the I-WMMSE precoding design is implemented at the CPU based on the instantaneous CSI as \eqref{Optimal_F_1}. Moreover, to guarantee the convergence of Algorithm~\ref{algo:iterative1}, only MMSE combining as \eqref{eq:MMSE_Combining} is advocated to detect the UL data since the equivalent relationship between $\mathbf{E}_{k,(1)}^{\mathrm{opt}}$ and $\mathrm{SE}_{k,(1)}^{\mathrm{opt}}$, which only satisfies with MMSE combining, should be guaranteed. As for the LSFD scheme, the optimal design of $\mathbf{F}_{k,\mathrm{u},(2)}^{\mathrm{opt}}$ as \eqref{Optimal_F_2} can only be implemented at the CPU, but relies only on channel statistics. Besides, L-MMSE or MR combining can be applied at each AP. When MR combining is applied, all terms in Algorithm~\ref{algo:iterative} can be computed in closed-form as Theorem~\ref{F_Th_Closed_Form}.\\
\vspace*{-0.7cm}
\subsubsection{Fronthaul Requirements}
For the FCP scheme with the I-WMMSE precoding, in each coherence block, all APs should relay their received signals to the CPU and the CPU requires precoding matrices $\mathbf{F}_{k,\mathrm{u},(1)}$ feedback to all UEs. All APs need to send $\tau _cML$ complex scalars ($\tau _pML$ complex scalars for the pilot signals and $(\tau _c-\tau _p)ML$ complex scalars for the received data signals). Besides, the full correlation matrices $\{\mathbf{R}_{mk}\}$ are available at the CPU, which contains ${{MKL^2N^2}/{2}}$ complex scalars for each realization of the AP/UE locations/statistics\footnote{Note that the channel statistics remain constant for each realization of the AP/UE locations and each realization of the AP/UE locations contains $N_r$ channel realizations (coherence blocks).}. Moreover, the CPU transmits optimal precoding matrices to all UEs, which are described by $KN^2$ complex scalars per coherence block. In summary, for the FCP scheme with the I-WMMSE precoding implemented, total $\tau _cMLN_r+{{MKL^2N^2}/{2}}+KN^2N_r$ complex scalars are transmitted via fronthaul links for each realization of the AP/UE locations. For comparison, when the FCP scheme without the I-WMMSE precoding is implemented, all APs should also transmit $\tau _cML$ complex scalars for the received signals to the CPU in each coherence block and ${{MKL^2N^2}/{2}}$ complex scalars for $\{\mathbf{R}_{mk}\}$ to the CPU for each realization of the AP/UE locations. So for the CFP scheme without the I-WMMSE precoding, total $\tau _cMLN_r+{{MKL^2N^2}/{2}}$ complex scalars are transmitted via fronthaul links for each realization of the AP/UE locations.

As for the LSFD scheme with the I-WMMSE precoding, all APs transmit their local data estimates $\mathbf{\tilde{x}}_{mk}$, described by $(\tau _c-\tau _p)MKN$ complex scalars, to the CPU per coherence block. Besides, $\mathbb{E}\{ \mathbf{G}_{kk} \}\in \mathbb{C}^{MN\times N}$, described by $MKN^2$ complex scalars for each realization of the AP/UE locations, are also required at the CPU. As for $\mathbb{E} \{ \mathbf{G}_{kl}\mathbf{\bar{F}}_{l,\mathrm{u},(2)}\mathbf{G}_{kl}^{H} \}\in \mathbb{C}^{MN\times MN}$, following the formulation method investigated in Appendix~\ref{F_Closed_Form}, the optimization of $\mathbf{\bar{F}}_{l,\mathrm{u},(2)}$ requires the knowledge of $\left\{ \mathbb{E} \left[ \mathbf{v}_{ml,p}^{H}\mathbf{h}_{mk,n}\mathbf{h}_{m^{\prime}k,i}^{H}\mathbf{v}_{m^{\prime}l,p^{\prime}} \right] \right\}$, described by $M^2K^2N^4/2$ complex scalars for each realization of the AP/UE locations, where $\mathbf{v}_{ml,p}$ denotes the $p$-th column of $\mathbf{V}_{ml}$. Moreover, the CPU requires optimal precoding matrices $\mathbf{F}_{k,\mathrm{u},(2)}$ feedback to all APs and UEs only for each realization of the AP/UE locations, which are $KN^2$ complex scalars. As for the LSFD scheme without the I-WMMSE precoding, local data estimates $\mathbf{\tilde{x}}_{mk}$, described by $(\tau _c-\tau _p)MKN$ complex scalars per coherence block, $\mathbb{E}\{ \mathbf{G}_{kk} \}$, described by $MKN^2$ complex scalars for each realization of the AP/UE locations, and $\mathbb{E} \{ \mathbf{G}_{kl}\mathbf{G}_{kl}^{H} \}\in \mathbb{C}^{MN\times MN}$, described by $M^2K^2N^2/2$ complex scalars for each realization of the AP/UE locations, are required. That is total $( \tau _c-\tau _p ) MKNN_r+MKN^2+M^2K^2N^2/2$ complex scalars transmitted via fronthaul links for each realization of the AP/UE locations.

\subsubsection{Practical Implementation}
Note that the basic motivation of the investigated I-WMMSE precoding schemes is to achieve as good the sum uplink SE performance as possible so we ignore some practical issues, which are vital for the realistic implementation of the investigated precoding schemes. When the precoding schemes are implemented in practice, these realistic issues should be considered.

$\bullet$  Capacity-constrained fronthaul network

As discussed above, the I-WMMSE precoding require more fronthaul requirements than the case without the I-WMMSE precoding. It is quite vital to consider a more practical capacity-constrained fronthaul network \cite{8756286}. Moreover, the wireless fronthaul \cite{8283646}, which is more flexible than the conventional wire fronthaul, would also be regarded as a promising solution to boost the practical implementation of the I-WMMSE precoding.

$\bullet$  Scalability aspects with dynamic cooperation clusters

When the precoding schemes are implemented in practice, a more realistic network architecture with multiple CPUs and dynamic cooperation clusters should be advocated, where each UE is only served by a cluster of APs (that a is user-centric cluster) and the APs are grouped into cell-centric clusters as shown in Fig. \ref{System_Model}. Note a user-centric cluster might consist of APs connecting with different CPUs. Based on the signal processing schemes in \cite{9064545,8761828}, the analytical framework in this paper can be implemented in a scalable paradigm where the fronthaul requirements and computational complexity can be relieved with an anticipated modest performance loss compared with canonical architecture. The I-WMMSE precoding design with these two practical aspects is left in future work. To bring valuable technical insights for the study of I-WMMSE precoding schemes with the DCC strategy and the capacity-constrained fronthaul link, we provide two tutorials for the FCP and LSFD in Fig. \ref{Practical_Tutorial} based on \cite{9064545,9174860,8756286}.

\begin{figure}[ht]
\centering
\includegraphics[scale=0.47]{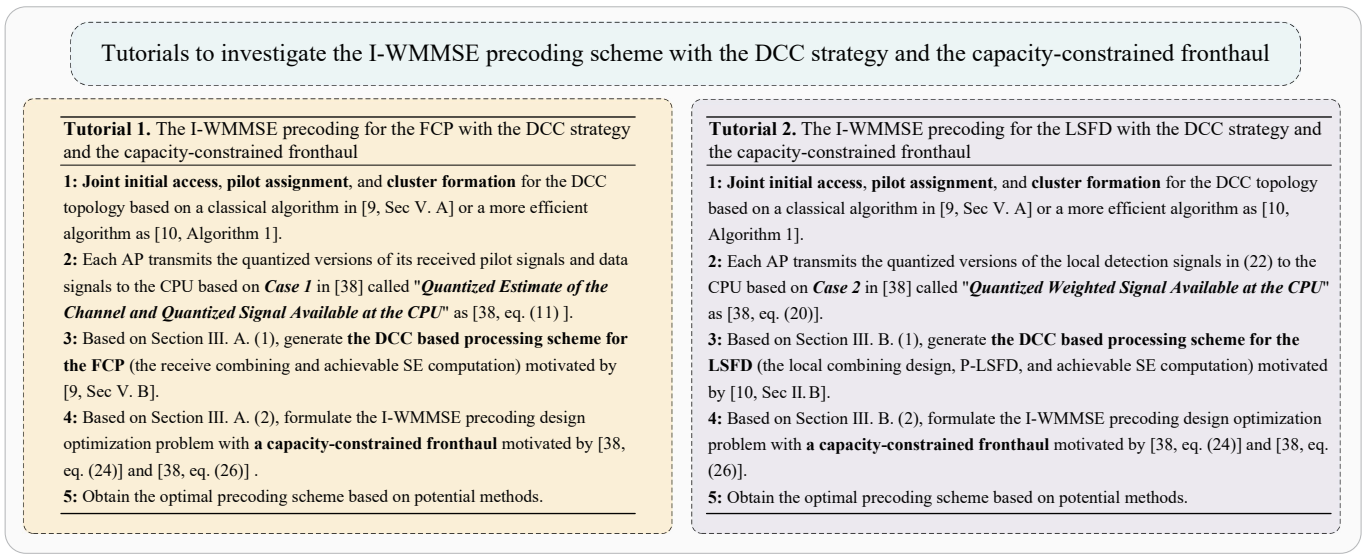}
\caption{Two tutorials to investigate the I-WMMSE precoding schemes with the DCC strategy and the capacity-constrained fronthaul.
\label{Practical_Tutorial}}
\vspace{-0.4cm}
\end{figure}

\begin{table*}[tp]
  \centering
  \fontsize{9}{10}\selectfont
  \caption{Comparison of two precoding schemes in this paper. The number of complex scalars is computed for each realization of the AP/UE locations. The sum SE improvement is computed with $M=20$, $K=10$, $L=1$ and $N=4$.}
  \label{Paper_comparison}
    \begin{tabular}{  !{\vrule width1.2pt} m{4cm}<{\centering}  !{\vrule width1.2pt} m{4cm}<{\centering} !{\vrule width1.2pt} m{6.5cm}<{\centering} !{\vrule width1.2pt} }

    \Xhline{1.2pt}
         \cellcolor{gray!30} \bf  &  \cellcolor{gray!30} \bf FCP  & \cellcolor{gray!30} \bf LSFD\cr
    \Xhline{1.2pt}

         CSI  & \makecell[c]{Instantaneous CSI}  & \makecell[c]{Statistical CSI} \cr\hline
        \Xhline{0.8pt}
         Detection scheme  & \makecell[c]{MMSE combining}  & \makecell[c]{L-MMSE/MR combining + Optimal LSFD scheme} \cr\hline
        \Xhline{0.8pt}
        Number of complex scalars sent from APs to the CPU with I-WMMSE precoding & \makecell[c]{ $\tau _cMLN_r+{{MKL^2N^2}/{2}}$}  & \makecell[c]{$( \tau _c-\tau _p ) MKNN_r+MKN^2+M^2K^2N^4/2$} \cr\hline
        \Xhline{0.8pt}
        Number of complex scalars sent from APs to the CPU without I-WMMSE precoding & \makecell[c]{ $\tau _cMLN_r+{{MKL^2N^2}/{2}}$}  & \makecell[c]{$( \tau _c-\tau _p ) MKNN_r+MKN^2+M^2K^2N^2/2$} \cr\hline
        \Xhline{0.8pt}
         Number of complex scalars feedback sent from the CPU  & \makecell[c]{$KN^2N_r$}  & \makecell[c]{$KN^2$} \cr\hline
        \Xhline{0.8pt}

        Per-iteration computational complexity  & \makecell[c]{$\mathcal{O} \left( M^3K^2N^5N_r \right) $}  & \makecell[c]{L-MMSE: $\mathcal{O} \left( M^2K^2N^3N_r \right) $\\MR (Monte-Carlo): $\mathcal{O} \left( M^2K^2N^3N_r+M^3KN^3 \right) $ \\MR (Analytical): $\mathcal{O} \left( M^3K^2N^5 \right) $} \cr\hline
        \Xhline{0.8pt}

        Sum SE improvement  & \makecell[c]{$28.93\%$}  & \makecell[c]{L-MMSE: $46.74\%$ \\MR: $15.13\%$} \cr\hline

    \Xhline{1.2pt}
    \end{tabular}
  \vspace{0cm}
\end{table*}

\vspace{-0.6cm}
\subsection{Complexity Analysis}\label{comAna}
In this subsection, we analyze the computational complexity of two precoding schemes investigated. Since the bisection step for $\lambda _{k,\{(1),(2)\}}$ generally takes few iterations compared with other steps, we ignore bisection steps for $\lambda _{k,\{(1),(2)\}}$ in the complexity analysis. For the fully centralized processing scheme and each realization of the AP/UE locations, the per-iteration complexity of iterative optimization is $\mathcal{O} \left( M^3K^2N^5N_r \right) $. For the LSFD scheme and each realization of the AP/UE locations, the per-iteration complexity of iterative optimization based on L-MMSE combining with the Monte-Carlo method, MR combining with the Monte-Carlo method and MR combining with the closed-form expressions are $\mathcal{O} \left( M^2K^2N^3N_r \right) $, $\mathcal{O} \left( M^2K^2N^3N_r+M^3KN^3 \right) $ and $\mathcal{O} \left( M^3K^2N^5 \right) $, respectively. To further reduce the computation complexity, it's quite necessary to apply the asymptotic analysis method \cite{6415388,6172680} to compute the terms, which cannot be computed in closed-form, in approximation results.

\section{Numerical Results}\label{Numerical_Results}

%

%

In this paper, a CF mMIMO system is investigated, where all APs and UEs are uniformly distributed in a $1\times1\,\text{km}^2$ area with a wrap-around scheme \cite{zhang2021improving}. The pathloss and shadow fading are modeled similarly as \cite{04962}. In practice, $\mathbf{U}_{mk,\mathrm{r}}$, $\mathbf{U}_{mk,\mathrm{t}}$ and $\mathbf{\Omega }_{mk}$ are estimated through measurements \cite{1576533}. However, we generate them randomly in this paper, where the coupling matrix $\mathbf{\Omega }_{mk}$ consists of one strong transmit eigendirection capturing dominant power \cite{1459054}\footnote{In this paper, we choose one eigendirection capturing dominant channel power (randomly accounting for $80\% \sim 95\%$ of the total channel power) and other eigendirections contain the remaining power.}. Besides, we have $\mathbf{F}_{k,\mathrm{p}}=\mathbf{F}_{k,\mathrm{u},\{(1),(2)\}}^{\left( 0 \right)}=\sqrt{\frac{p_k}{N}}\mathbf{I}_N$. As for Algorithm \ref{algo:iterative1} and Algorithm \ref{algo:iterative}, balancing the convergence and accuracy, we assume that $I_{(1),\max}=I_{(2),\max}=20$, $\mathrm{\varepsilon}_{(1)}=\mathrm{\varepsilon}_{(2)}=5\times 10^{-4}$, and weights for all UEs are equal ($\mu _{k,(1)}=\mu _{k,(2)}=1$) without losing generality, respectively. Moreover, we consider communication with $20\,\text{MHz}$ bandwidth and $\sigma ^2=-94\,\text{dBm}$ noise power. All UEs transmit with $200\,\text{mW}$ power constraint. Each coherence block contains $\tau _c=200$ channel uses and $\tau _p=KN/2$. Besides, a pilot assignment approach similar as that in \cite{04962} is investigated.

\begin{figure}[t]
\begin{minipage}[t]{0.48\linewidth}
\centering
\includegraphics[scale=0.5]{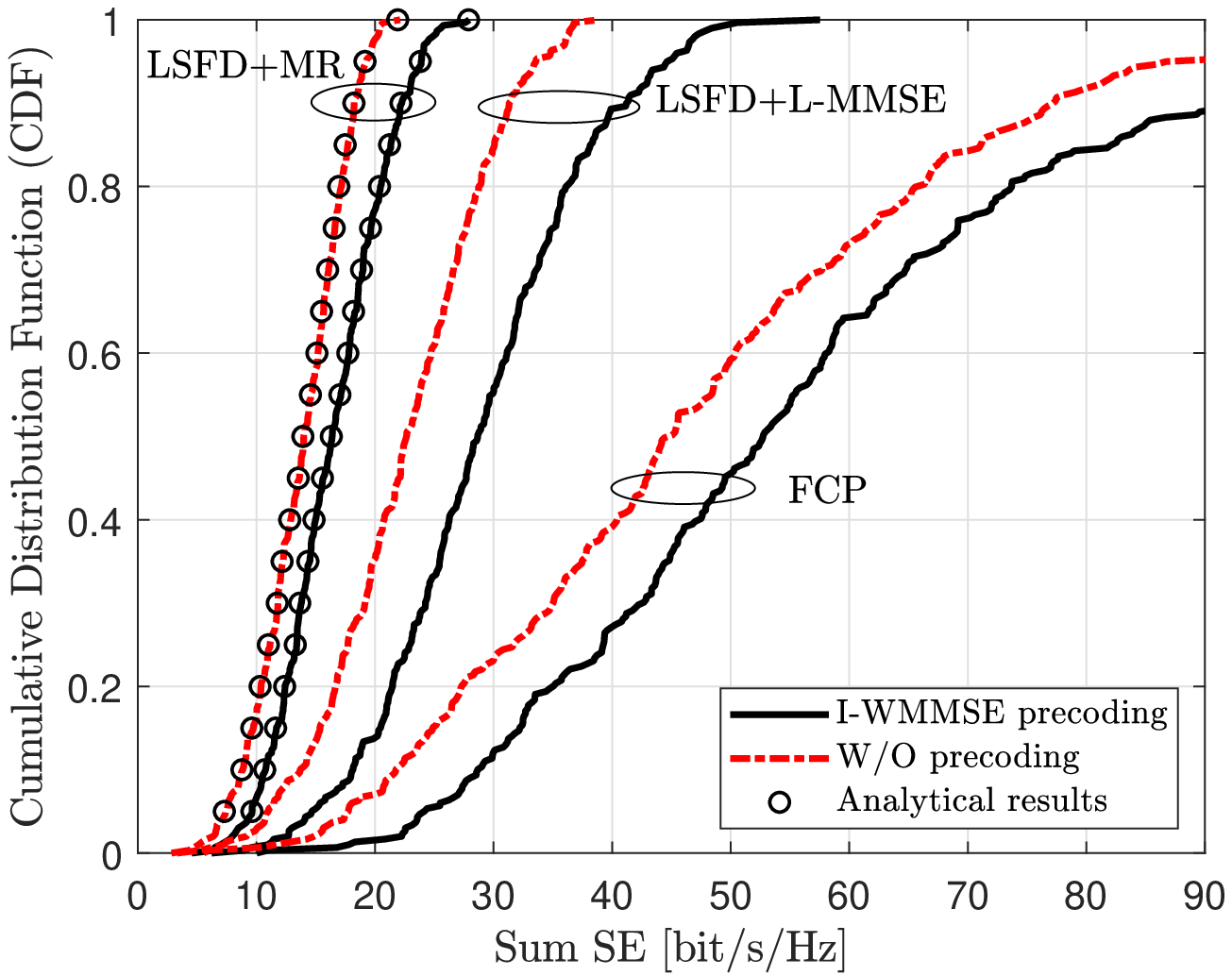}
\caption{CDF of the sum SE over different processing schemes and precoding schemes with $M=20$, $K=10$, $L=2$, and $N=4$.
\label{fig1:SE_CDF}}
\end{minipage}
\hfill
\begin{minipage}[t]{0.48\linewidth}
\centering
\includegraphics[scale=0.5]{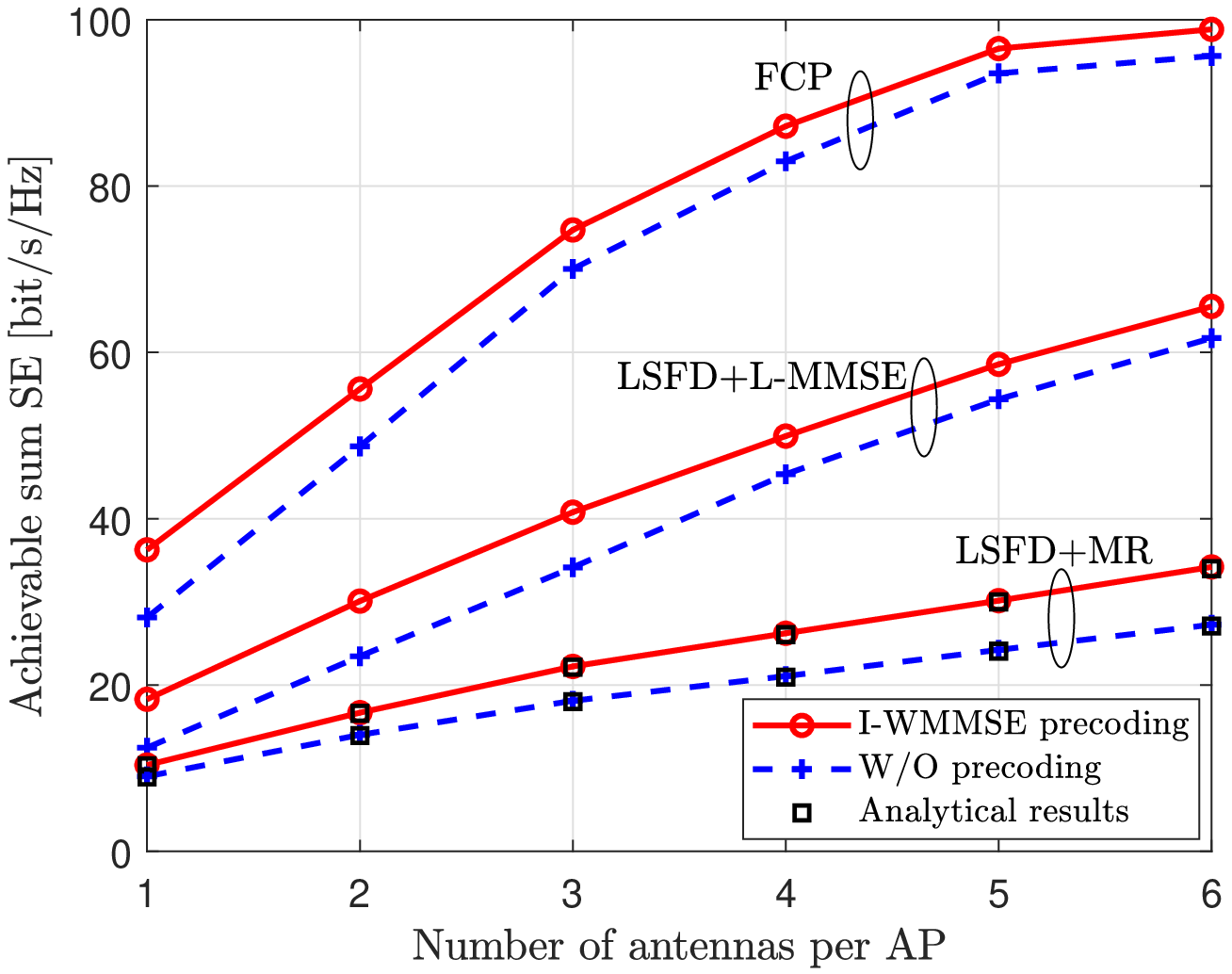}
\caption{Sum SE against the number antennas per AP $L$ over different processing schemes and precoding schemes with $M=20$, $K=10$, and $N=4$.
\label{fig1:SE_L}}
\end{minipage}
\end{figure}

\begin{figure}[t]
\begin{minipage}[t]{0.48\linewidth}
\centering
\includegraphics[scale=0.5]{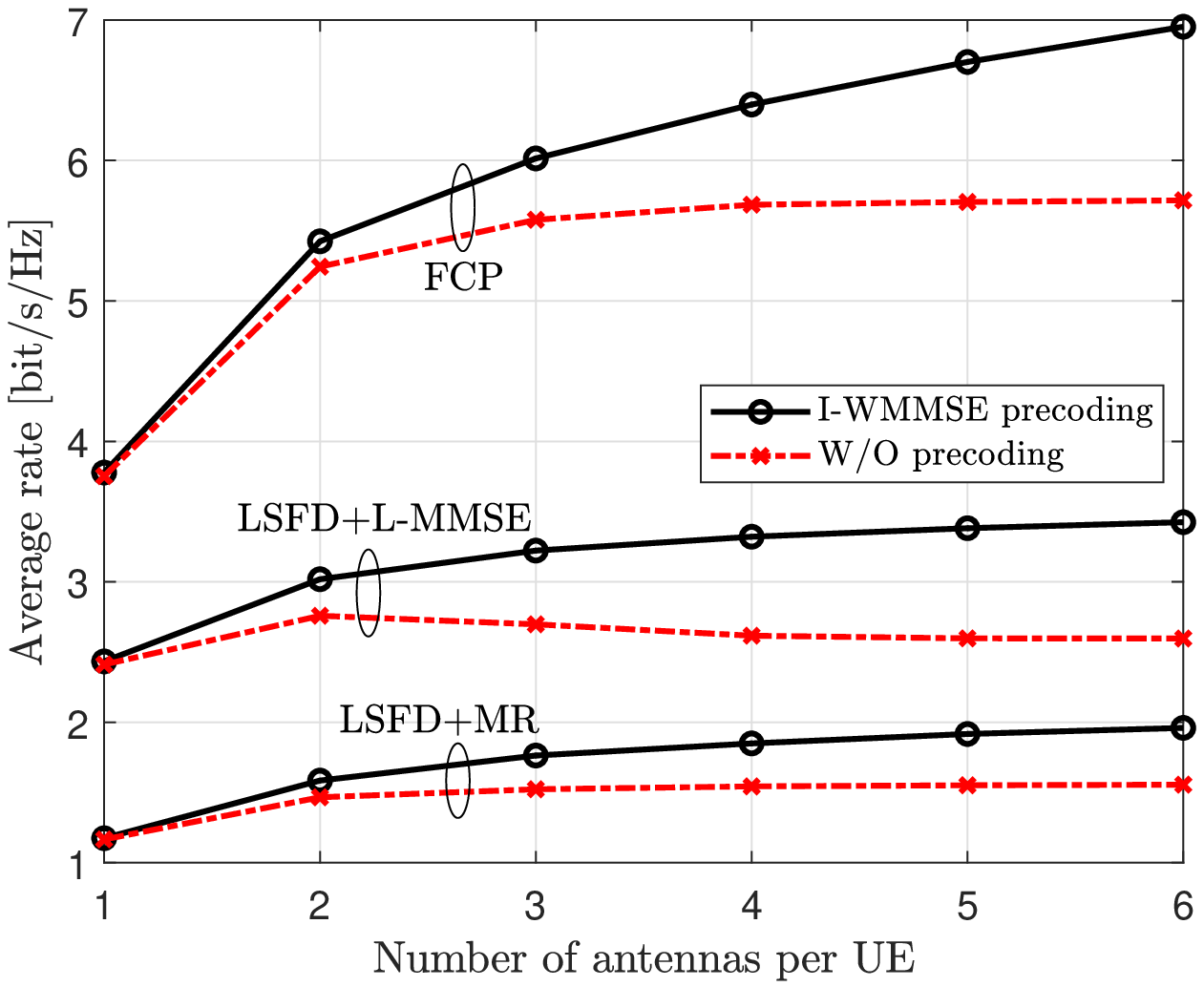}
\caption{Average rate against the number of antennas per UE $N$ over different processing schemes and precoding schemes with $M=20$, $K=10$, and $L=2$.\label{Rate_N}}
\end{minipage}
\hfill
\begin{minipage}[t]{0.48\linewidth}
\centering
\includegraphics[scale=0.5]{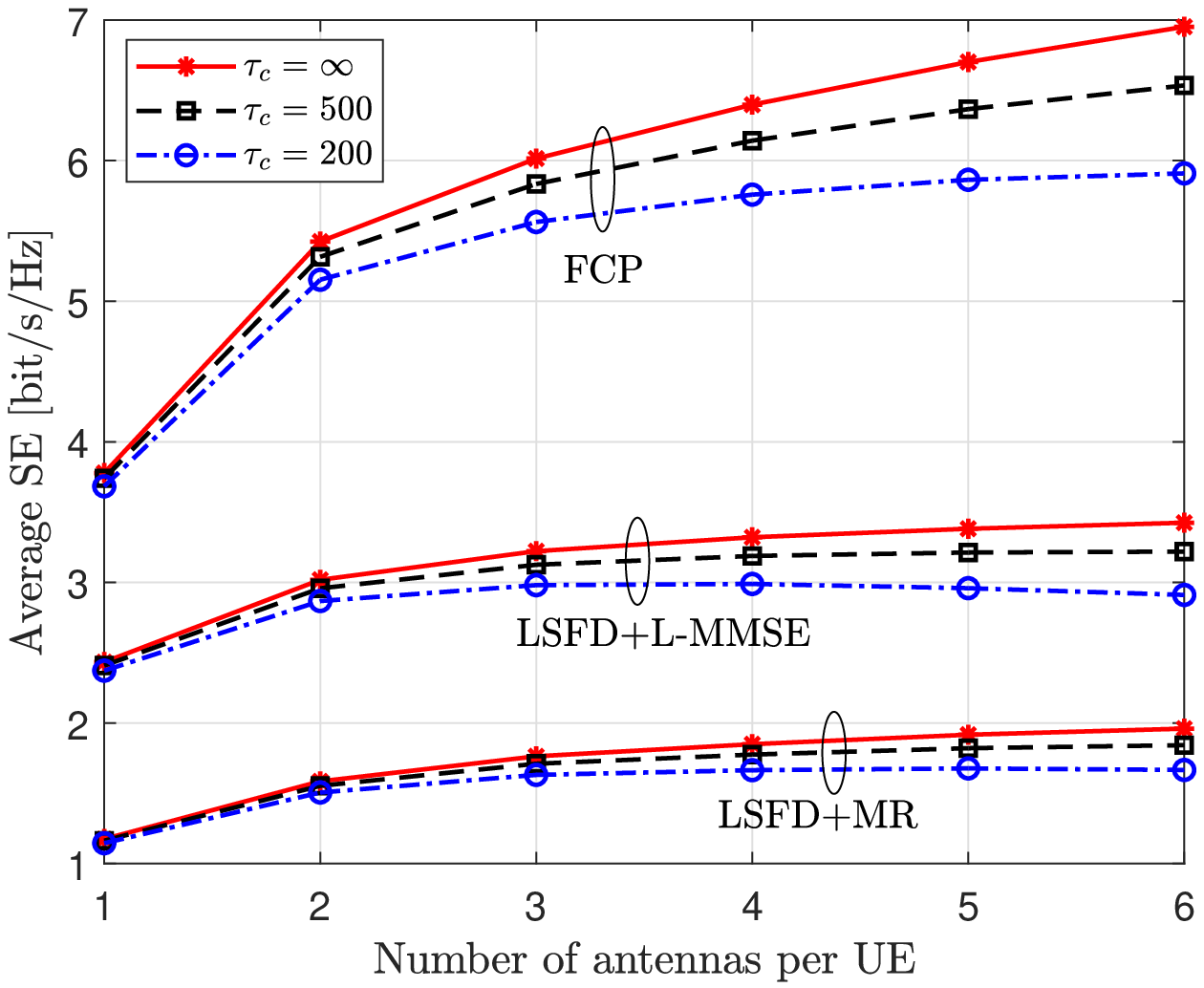}
\caption{Average SE with I-WMMSE precoding schemes against the number of antennas per UE $N$ over different $\tau_c$ with $M=20$, $K=10$, and $L=2$.\label{SE_tau_c}}
\end{minipage}
\vspace{-0.4cm}
\end{figure}

\begin{figure}[t]
\begin{minipage}[t]{0.48\linewidth}
\centering
\includegraphics[scale=0.5]{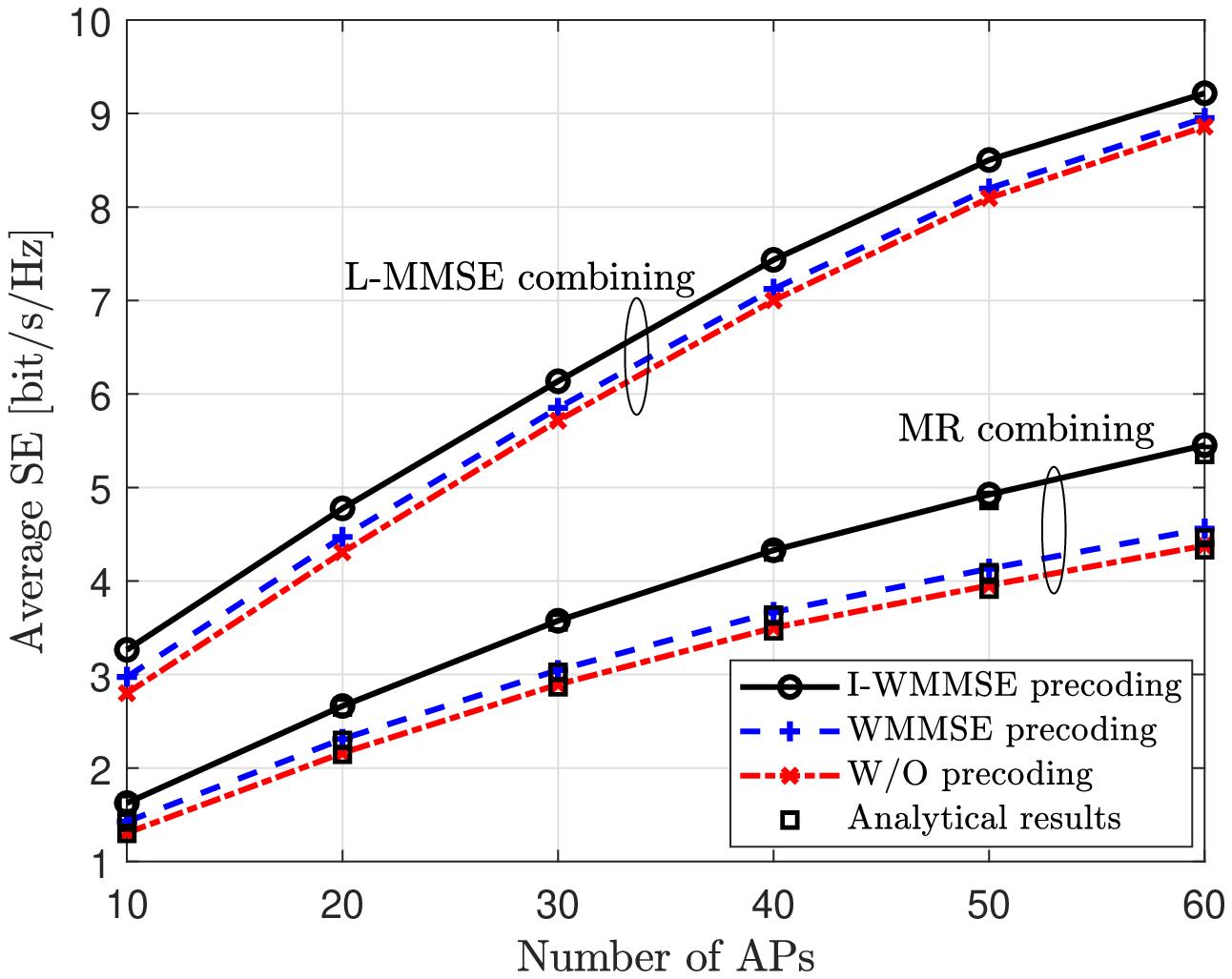}
\caption{Average SE against the number of APs $M$ for the LSFD scheme with $K=10$, $L=4$, and $N=4$.
\label{SE_M}}
\end{minipage}
\hfill
\begin{minipage}[t]{0.48\linewidth}
\centering
\includegraphics[scale=0.5]{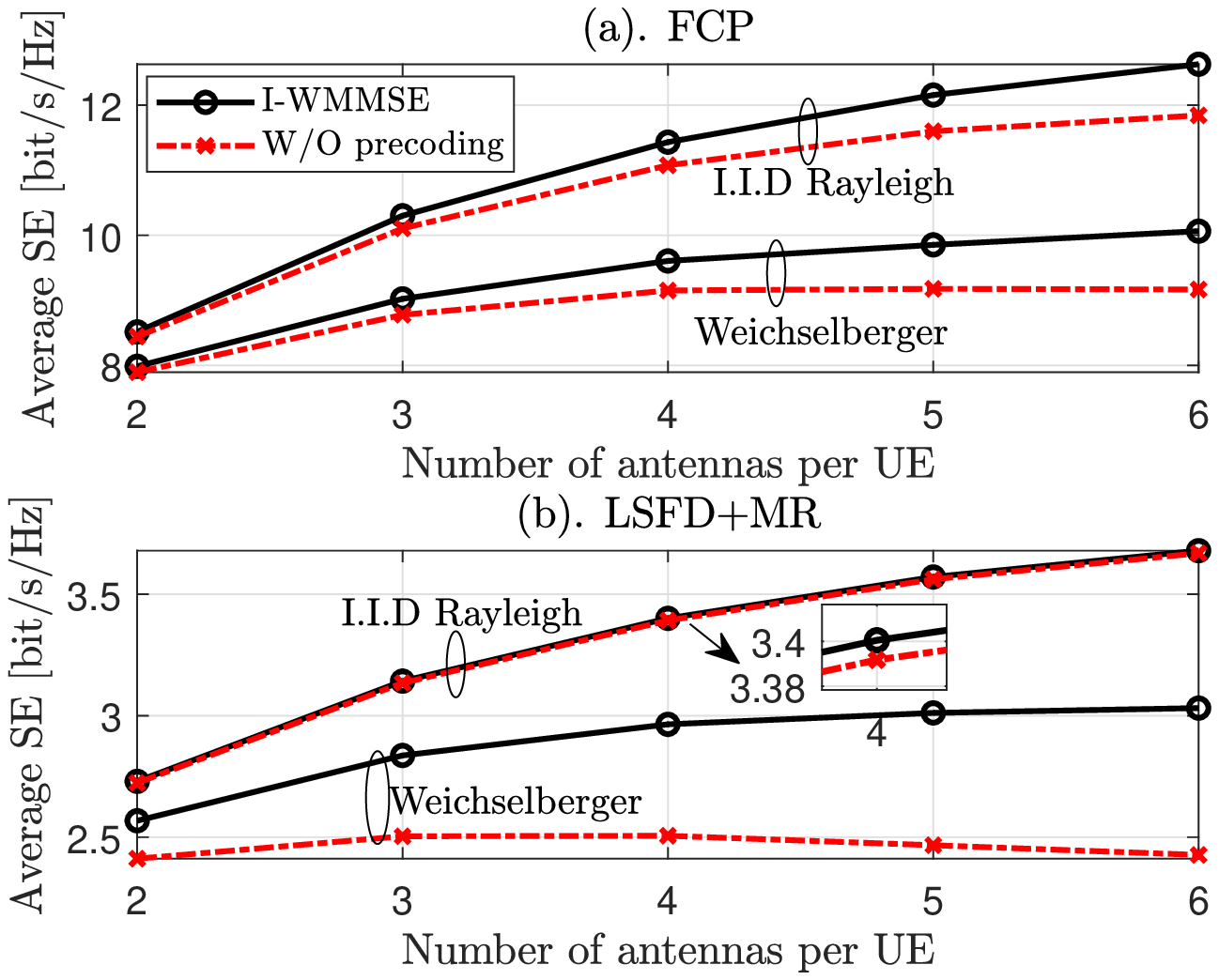}
\caption{Average SE against the number of antennas per UE $N$ for different channel models with $M=40$, $K=8$, and $L=2$.\label{SE_channel}}
\end{minipage}
\vspace{-0.4cm}
\end{figure}

\begin{figure}[t]\centering
\subfigure[FCP]{
\begin{minipage}{8cm}\centering
\includegraphics[scale=0.5]{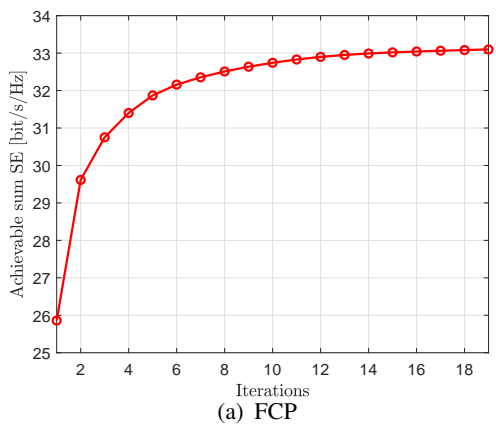}
\end{minipage}}
\subfigure[LSFD]{
\begin{minipage}{8cm}\centering
\includegraphics[scale=0.5]{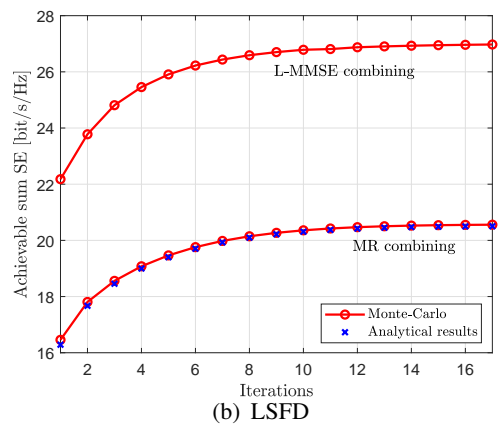}
\end{minipage}}
\caption{Convergence examples of the I-WMMSE algorithm for the FCP and LSFD with $M=20$, $K=10$, $L=2$, and $N=4$.
\label{Convergence}}
\vspace{-0.3cm}
\end{figure}

\begin{figure}[t]
\centering
\includegraphics[scale=0.5]{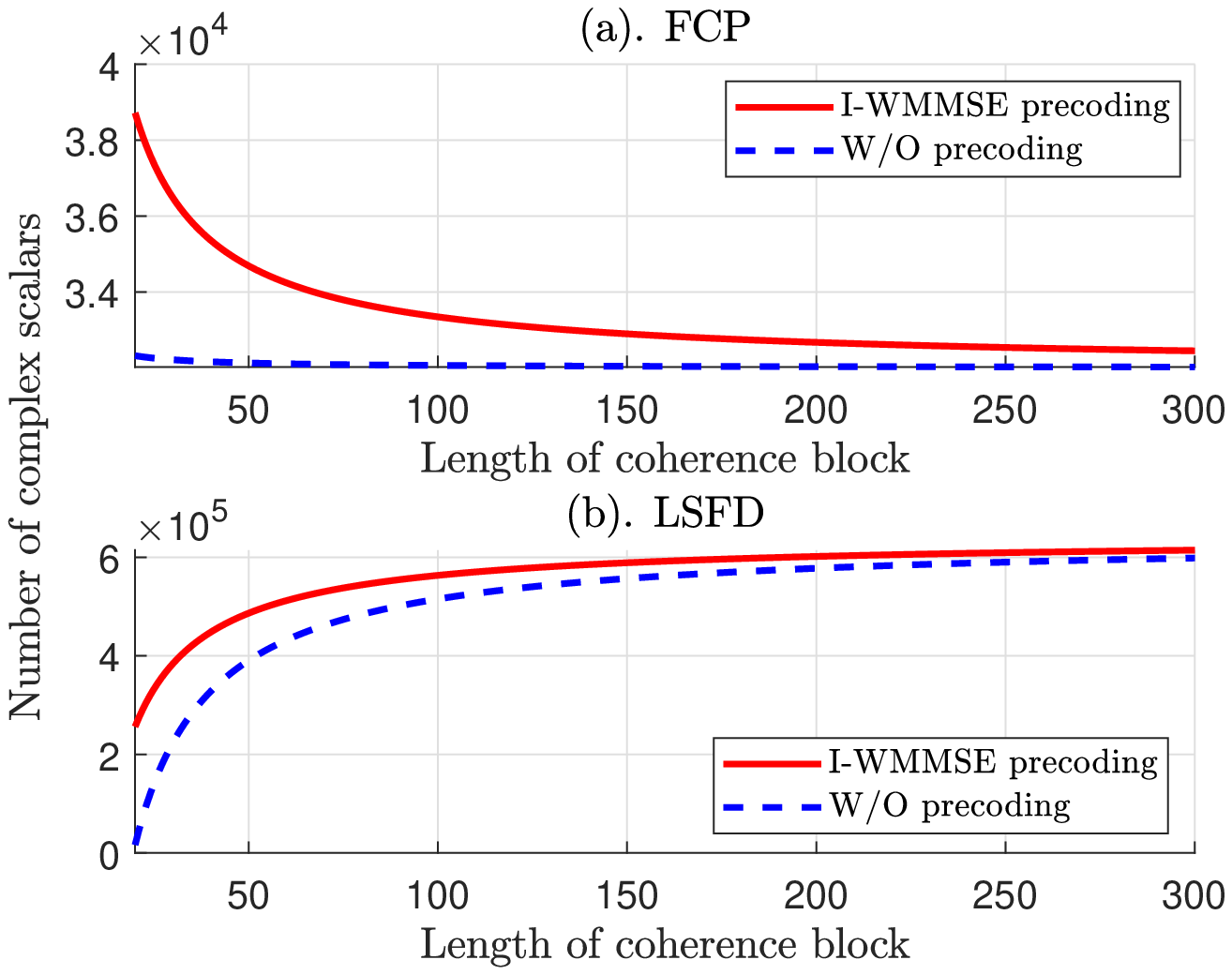}
\caption{Total number of complex scalars sent via the fronthaul per channel use for each realization of the AP/UE locations with $M=20$, $K=10$, $L=2$, and $N=4$.
\label{Complex_scalars}}
\vspace{-0.7cm}
\end{figure}

Figure \ref{fig1:SE_CDF} shows the cumulative distribution function (CDF) of the achievable sum SE over different realizations of the AP/UE locations for two processing schemes investigated (we shortly call ``fully centralized processing" as ``FCP" in the following) over ``I-WMMSE precoding" or ``w/o precoding"\footnote{The ``w/o precoding" scenario denotes that identity precoding matrices $\mathbf{F}_{k,\mathrm{u},\{(1),(2)\}}=\sqrt{\frac{p_k}{N}}\mathbf{I}_N$ are implemented without optimization.}. We notice that the FCP scheme undoubtedly achieves higher SE than that of the LSFD scheme since the FCP with MMSE combining is a competitive scheme in CF mMIMO \cite{[162]}. More importantly, the proposed I-WMMSE schemes are efficient to improve the respective achievable sum SE performance, e.g., $12.78\%$, $19.54\%$ and $28.13\%$ sum SE improvement for the FCP, the LSFD with MR combining and the LSFD with L-MMSE combining, respectively. Besides, for the LSFD scheme with MR combining, markers ``$\circ$"  generated by analytical results overlap with the curves generated by simulations, respectively, validating our derived closed-form expressions.

Figure \ref{fig1:SE_L} shows the achievable sum SE as a function of the number of antennas per AP with two processing schemes investigated and different precoding schemes\footnote{Note that the achievable sum SE investigated is the average sum SE value taken over many AP/UE locations.}. We notice that, for the FCP or LSFD with (L-)MMSE combining, the performance gap between the ``I-WMMSE" and ``w/o precoding" becomes smaller with the increase of $L$, which implies that (L-)MMSE combining can use all antennas on each AP to suppress interference and achieve excellent SE performance even without any precoding scheme. For instance, the performance gap between the ``I-WMMSE" and ``w/o precoding" for the LSFD with L-MMSE combining is $46.74\%$ and $6.17\%$ over $L=1$ and $L=6$, respectively. Meanwhile, for the LSFD with MR combining, the performance gap between the ``I-WMMSE" and ``w/o precoding" becomes large with the increase of L, e.g. $15.13\%$ and $25.48\%$ for $L=1$ and $L=6$, respectively. Besides, for the LSFD scheme with MR combining, markers ``$\Box $"  generated by analytical results overlap with the curves generated by simulations, respectively, validating our derived closed-form expressions.

To further show the advantage of the proposed I-WMMSE precoding schemes, Fig. \ref{Rate_N} shows the average rate\footnote{Note that one main reason for the phenomenon that additional UE antennas may give rise to the SE degradation is that increasing $N$ will increase the channel estimation overhead and reduce the pre-log factor ``$\left( \tau _c-\tau _p \right)/ \tau_c$" in all SE expressions \cite{194,04962}. So we investigate ``the average rate" in Fig. \ref{Rate_N}, ignoring the effect of ``$\left( \tau _c-\tau _p \right)/ \tau_c$".} as a function of the number of antennas per UE. We find that the average rates for all schemes with I-WMMSE precoding schemes grow with $N$ and the average rates for the case without UL precoding may also suffer the degradation with the increase of $N$. The implementation of the I-WMMSE precoding schemes undoubtedly makes UEs benefit from multiple antennas and achieve excellent rate performance. Moreover, we observe that the I-WMMSE precoding schemes perform more efficiently with a larger number of UE antennas. For instance, the average rate improvements achieved by the I-WMMSE precoding for the LSFD with L-MMSE combining are $31.91\%$ and $9.43\%$ for $N=6$ and $N=2$, respectively. However, the average SE (with scaling factor $(\tau_c-\tau_p)/\tau_c$) with I-WMMSE precoding implemented may also degrade with the increase of $N$ as the Fig. 2 in \cite{2022arXiv220111299W} since, with the increase of $N$, the prerequisite of ``mutually orthogonal pilot matrices" still requires huge channel uses for the pilot transmission and the inter-user interference also increases. So the design of non-orthogonal pilot matrices and per-antenna power control scheme are quite necessary, which are regarded as promising ways to reduce the cost of pilot transmission and further improve the SE performance \cite{9169906}.

Figure \ref{SE_tau_c} discusses the average SE with I-WMMSE precoding schemes against $N$ over different $\tau_c$. Note that Fig. \ref{Rate_N} can be viewed as a special case in Fig. \ref{SE_tau_c} with the coherence block with infinite length $\tau _c=\infty$. We observe that the average SE with I-WMMSE precoding schemes increases with $N$ over $\tau _c=500$ or $\infty$, which means the SE performance can benefit from having additional UE antennas when the coherence block resource is abundant.

Figure \ref{SE_M} investigates the average SE as a function of $M$ for the LSFD scheme over different precoding schemes\footnote{The ``WMMSE precoding" denotes the precoding schemes generated by the I-WMMSE algorithm with only single iteration.}. For MR combining, markers ``$\Box $"  generated by analytical results overlap with the curves generated by simulations, respectively, validating our derived closed-form expressions again. Besides, the I-WMMSE algorithm is more efficient to improve the SE performance for MR combining than that of L-MMSE combining for the scenario over large $L$ and $M$, e.g., $4.03\%$ and $24.21\%$ SE improvement for L-MMSE combining and MR combining with $M=60$, respectively, implying that the L-MMSE combining based on large $L$ and $M$ can achieve excellent SE performance even without any precoding scheme and the proposed I-WMMSE precoding scheme is handy to mitigate the weakness of MR combining\footnote{MR combining is a simple combining scheme but cannot efficiently suppress the interference.}.

Figure \ref{SE_channel} considers the average SE as a function of $N$ over the i.i.d. and the Weichselberger Rayleigh fading channel. As observed, the proposed I-WMMSE precoding schemes are more efficient over the Weichselberger Rayleigh fading channel. For instance, $24.89\%$ and $9.77\%$ average SE improvement can be achieved when $N=6$ over the ``Weichselberger" scenario for the LSFD scheme with MR combining and the FCP scheme, respectively, but only $0.29\%$ and $6.63\%$ average SE improvement can be achieved for ``I.I.D. Rayleigh channel". Moreover, compared with Fig. \ref{Rate_N}, we notice that the I-WMMSE precoding scheme for the FCP scheme is more efficient in the highly loaded system (the scenario in Fig. \ref{Rate_N}) where the number of total AP-antennas is comparable with the number of total UE-antennas.

Figure \ref{Convergence} illustrates the convergence behavior of the I-WMMSE algorithms for the FCP scheme and the LSFD scheme with L-MMSE/MR combining. Note the convergence example in Fig. \ref{Convergence} (a)  for the FCP is given by a particular channel realization and the convergence example for the LSFD in Fig. \ref{Convergence} (b) is given by a particular realization of the AP/UE locations. Note that the algorithms investigated can be guaranteed to converge and are efficient to achieve excellent sum SE performance. Besides, Fig. \ref{Convergence} (b) for the LSFD scheme with MR combining validates our derived closed-form expressions in Algorithm~\ref{algo:iterative}.

Figure \ref{Complex_scalars} investigates the total number of complex scalars sent via the fronthaul per channel use against $\tau_c$ for each realization of the AP/UE locations. As observed, total number of complex scalars per channel use for the FCP/LSFD scheme becomes smaller/larger, which can also be easily found from Table~\ref{Paper_comparison}. Besides, the LSFD scheme requires more fronthaul signaling than the FCP scheme since APs under the LSFD scheme need to transmit all received data signals to the CPU, which requires a huge fronthaul load. More importantly, with the increase of $\tau_c$, the gap between ``I-WMMSE precoding" and ``W/O precoding" becomes smaller for either the FCP scheme or the LSFD scheme. Considering the SE performance improvement of the I-WMMSE precoding, additional fronthaul loads can be acceptable, especially when the coherence resource is abundant. Although the computational complexity of Algorithm~\ref{algo:iterative1} for the FCP scheme is much higher than that of Algorithm~\ref{algo:iterative} for the LSFD scheme, the FCP scheme needs much less fronthaul signaling than that of the LSFD scheme and can achieve better SE performance. So two processing schemes and their respective precoding schemes can be chosen based on different requirements.



\section{Conclusion}\label{sec:conclusion}
We consider a CF mMIMO system with both APs and UEs equipped with multiple antennas over the Weichselberger Rayleigh fading channel. The FCP scheme and LSFD scheme are implemented. To further improve the sum SE performance, efficient UL precoding schemes based on iteratively WMMSE algorithms are investigated to maximize weighted sum SE for the two processing schemes. Note that we compute achievable SE expressions and optimal precoding schemes in novel closed-form for the LSFD scheme with MR combining. Numerical results show that the investigated I-WMMSE precoding schemes are efficient to achieve excellent sum SE performance. More importantly, it can be seen that the proposed I-WMMSE precoding schemes are more efficient with a larger number of UE antennas, which means the I-WMMSE precoding schemes can achieve excellent performance even with a large number of UE antennas. The derived results undoubtedly provides vital insights for the practical implementation of multi-antenna UEs in CF mMIMO systems. In future work, we will investigate the design of UL precoding scheme for the phase of pilot transmission and consider the practical implementation of the investigated I-WMMSE precoding schemes with capacity-constrained fronthaul network and dynamic cooperation clusters. Moreover, the non-orthogonal pilot matrix design will also be considered to further improve the performance for the CF mMIMO system with multi-antenna UEs. Last but not least, the UL precoding performance over a more practical Rician fading channel with phase-shifts will also be analyzed.
\vspace{-0.6cm}

\begin{appendices}

\section{Some useful Lemmas}
\begin{lemm}\label{Lemma1}
Let $\mathbf{X}\in \mathbb{C} ^{M\times N}$ be a random matrix and $\mathbf{Y}$ is a deterministic $M\times M$ matrix. The $(n,i)$-th element of $\mathbb{E} \left\{ \mathbf{X}^H\mathbf{YX} \right\}$ is $\mathrm{tr}\left( \mathbf{Y}\cdot \mathbb{E} \left\{ \mathbf{x}_i\mathbf{x}_{n}^{H} \right\} \right)$ where $\mathbf{x}_i$ and $\mathbf{x}_n$ are the $i$-th and $n$-th column of $\mathbf{X}$.
\end{lemm}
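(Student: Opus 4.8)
The plan is to reduce the matrix identity to a scalar statement entry-by-entry, and then exploit the elementary fact that a $1\times 1$ quantity equals its own trace, together with the cyclic invariance of the trace. First I would write $\mathbf{X}=[\mathbf{x}_1,\dots,\mathbf{x}_N]$ in terms of its columns, so that the $(n,i)$-th entry of the $N\times N$ matrix $\mathbf{X}^H\mathbf{Y}\mathbf{X}$ is simply the scalar $\mathbf{x}_n^H\mathbf{Y}\mathbf{x}_i$. Taking expectations and using linearity, the $(n,i)$-th entry of $\mathbb{E}\{\mathbf{X}^H\mathbf{Y}\mathbf{X}\}$ is $\mathbb{E}\{\mathbf{x}_n^H\mathbf{Y}\mathbf{x}_i\}$.

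The key step is to observe that $\mathbf{x}_n^H\mathbf{Y}\mathbf{x}_i$ is a scalar, hence equal to its trace, and then apply the cyclic property $\mathrm{tr}(\mathbf{A}\mathbf{B})=\mathrm{tr}(\mathbf{B}\mathbf{A})$ to move the rightmost factor to the front:
$$\mathbf{x}_n^H\mathbf{Y}\mathbf{x}_i=\mathrm{tr}(\mathbf{x}_n^H\mathbf{Y}\mathbf{x}_i)=\mathrm{tr}(\mathbf{Y}\mathbf{x}_i\mathbf{x}_n^H).$$
Substituting into the expectation and pulling the deterministic matrix $\mathbf{Y}$ outside, again by linearity of both the expectation and the trace, yields $\mathbb{E}\{\mathbf{x}_n^H\mathbf{Y}\mathbf{x}_i\}=\mathrm{tr}(\mathbf{Y}\,\mathbb{E}\{\mathbf{x}_i\mathbf{x}_n^H\})$, which is exactly the claimed expression.

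There is essentially no hard part in this lemma; the only point requiring a little care is the index bookkeeping, since the conjugate transpose $\mathbf{X}^H$ places the column index $n$ into the Hermitian factor $\mathbf{x}_n^H$, so that the outer product surviving in the trace is $\mathbf{x}_i\mathbf{x}_n^H$, with $i$ and $n$ in that order, rather than $\mathbf{x}_n\mathbf{x}_i^H$. Keeping this ordering straight is what guarantees the matched indices on the right-hand side. No assumptions on the distribution of $\mathbf{X}$ are needed beyond existence of the relevant second moments, so the argument is purely algebraic and the lemma will be invoked repeatedly to convert expectations of quadratic forms into traces against the covariance matrices $\mathbb{E}\{\mathbf{x}_i\mathbf{x}_n^H\}$ appearing throughout the closed-form derivations.
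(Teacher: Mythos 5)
Your proof is correct and complete: the identification of the $(n,i)$ entry as $\mathbf{x}_n^H\mathbf{Y}\mathbf{x}_i$, the scalar-equals-its-trace step, the cyclic permutation, and pulling the deterministic $\mathbf{Y}$ out of the expectation are exactly the standard argument, and your care with the index ordering ($\mathbf{x}_i\mathbf{x}_n^H$ rather than $\mathbf{x}_n\mathbf{x}_i^H$) is the only subtle point. The paper itself states this lemma without proof, treating it as elementary, so your write-up supplies precisely the omitted routine verification.
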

\begin{lemm}\label{Lemma2}
For matrices $\mathbf{A}\in \mathbb{C} ^{N_1\times N_1}$, $\mathbf{B}\in \mathbb{C} ^{N_1\times N_2}$, $\mathbf{C}\in \mathbb{C} ^{N_2\times N_2}$, and $\mathbf{D}\in \mathbb{C} ^{N_2\times N_1}$, we have
$
\left( \mathbf{A}+\mathbf{BCD} \right) ^{-1}=\mathbf{A}^{-1}-\mathbf{A}^{-1}\mathbf{B}\left( \mathbf{DA}^{-1}\mathbf{B}+\mathbf{C}^{-1} \right) ^{-1}\mathbf{D}\mathbf{A}^{-1},
$
which is a well-known matrix inversion lemma \cite[Lemma B.3]{8187178}.
\end{lemm}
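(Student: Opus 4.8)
The plan is to establish the identity by direct verification: I would show that the expression on the right-hand side is a genuine inverse of $\mathbf{A}+\mathbf{BCD}$ rather than derive it from scratch. Because $\mathbf{A}+\mathbf{BCD}$ is a square $N_1\times N_1$ matrix, a right inverse is automatically two-sided, so it suffices to check a single product. For brevity I would abbreviate $\mathbf{S}\triangleq(\mathbf{D}\mathbf{A}^{-1}\mathbf{B}+\mathbf{C}^{-1})^{-1}$ and write the candidate inverse as $\mathbf{P}\triangleq\mathbf{A}^{-1}-\mathbf{A}^{-1}\mathbf{B}\mathbf{S}\mathbf{D}\mathbf{A}^{-1}$; the goal then reduces to verifying $(\mathbf{A}+\mathbf{BCD})\mathbf{P}=\mathbf{I}_{N_1}$.

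First I would expand the product and use $\mathbf{A}\mathbf{A}^{-1}=\mathbf{I}_{N_1}$ to obtain $\mathbf{I}_{N_1}+\mathbf{BCD}\mathbf{A}^{-1}-(\mathbf{B}+\mathbf{BCD}\mathbf{A}^{-1}\mathbf{B})\mathbf{S}\mathbf{D}\mathbf{A}^{-1}$. The next step is to factor $\mathbf{B}$ out of the bracketed coefficient as $\mathbf{B}(\mathbf{I}_{N_2}+\mathbf{CD}\mathbf{A}^{-1}\mathbf{B})$ and then to pull $\mathbf{C}$ to the front, using $\mathbf{I}_{N_2}+\mathbf{CD}\mathbf{A}^{-1}\mathbf{B}=\mathbf{C}(\mathbf{C}^{-1}+\mathbf{D}\mathbf{A}^{-1}\mathbf{B})=\mathbf{C}\mathbf{S}^{-1}$. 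After this identification the subtracted term collapses to $\mathbf{B}\mathbf{C}\mathbf{S}^{-1}\mathbf{S}\mathbf{D}\mathbf{A}^{-1}=\mathbf{BCD}\mathbf{A}^{-1}$, which exactly cancels the second term and leaves $\mathbf{I}_{N_1}$, completing the verification.

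The computation is elementary, so there is no serious obstacle; the one point I would flag as the crux is recognizing that the factor $\mathbf{C}^{-1}+\mathbf{D}\mathbf{A}^{-1}\mathbf{B}$ hidden inside $\mathbf{S}^{-1}$ is exactly what appears after factoring $\mathbf{C}$ out of $\mathbf{I}_{N_2}+\mathbf{CD}\mathbf{A}^{-1}\mathbf{B}$, since this is what forces the two $\mathbf{BCD}\mathbf{A}^{-1}$ contributions to cancel. I would also make explicit the implicit invertibility hypotheses, namely that $\mathbf{A}$, $\mathbf{C}$, and $\mathbf{C}^{-1}+\mathbf{D}\mathbf{A}^{-1}\mathbf{B}$ are nonsingular so that all the inverses above are well defined. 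As this is the classical Woodbury matrix identity, in the paper itself it is enough to cite \cite[Lemma B.3]{8187178} in place of the full calculation.
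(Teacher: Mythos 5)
Your verification is correct: expanding $(\mathbf{A}+\mathbf{BCD})\mathbf{P}$, factoring the coefficient of $\mathbf{S}\mathbf{D}\mathbf{A}^{-1}$ as $\mathbf{B}\mathbf{C}(\mathbf{C}^{-1}+\mathbf{D}\mathbf{A}^{-1}\mathbf{B})=\mathbf{B}\mathbf{C}\mathbf{S}^{-1}$, and cancelling against $\mathbf{BCD}\mathbf{A}^{-1}$ is exactly the standard direct check of the Woodbury identity, and your remark that a one-sided inverse of a square matrix is two-sided closes the argument. The paper itself supplies no proof --- it only cites a standard reference --- so your calculation simply fills in what the authors delegate to that citation; you are also right to make explicit the invertibility of $\mathbf{A}$, $\mathbf{C}$, and $\mathbf{C}^{-1}+\mathbf{D}\mathbf{A}^{-1}\mathbf{B}$, which the lemma statement leaves tacit.
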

\vspace{-0.9cm}
\section{Proof Corollary~\ref{Corollary_Optimal_LSFD}}\label{Appendix1}
Since the CPU is only aware of channel statistics, we need to treat $\mathbb{E} \{ \mathbf{G}_{kk}\} \mathbf{F}_{k,\mathrm{u}}$ as the true deterministic channel and rewrite ${\mathbf{\tilde{x}}_k}$ in \eqref{Data_Final_LSFD_Compact} as
$
\mathbf{\tilde{x}}_k=\mathbb{E} \left\{ \mathbf{G}_{kk} \right\} \mathbf{F}_{k,\mathrm{u}}\mathbf{x}_k+\underset{\mathbf{v}}{\underbrace{\left( \mathbf{G}_{kk}\mathbf{F}_{k,\mathrm{u}}-\mathbb{E} \left\{ \mathbf{G}_{kk} \right\} \mathbf{F}_{k,\mathrm{u}} \right) \mathbf{x}_k+\sum_{l=1,l\ne k}^K{\mathbf{G}_{kl}\mathbf{F}_{l,\mathrm{u}}\mathbf{x}_l}+\mathbf{n}_{k}^{\prime}}}
$
where $\mathbf{v}$ is a complex circular symmetric noise with an invertible covariance matrix
$
\mathbf{\Xi }_k=\mathbb{E} \{ \mathbf{vv}^H |\mathbf{\Theta }\} =\sum_{l=1}^K{\mathbb{E} \{ \mathbf{G}_{kl}\mathbf{F}_{k,\mathrm{u}}\mathbf{F}_{k,\mathrm{u}}^{H}\mathbf{G}_{kl}^{H} \}}-\mathbb{E} \{ \mathbf{G}_{kk} \} \mathbf{F}_{k,\mathrm{u}}\mathbf{F}_{k,\mathrm{u}}^{H}\mathbb{E} \{ \mathbf{G}_{kk}^{H} \} +\sigma ^2\mathbf{S}_k.
$
Firstly, we whiten the noise as
$
\mathbf{\Xi }_{k}^{-\frac{1}{2}}\mathbf{\hat{x}}_k=\mathbf{\Xi }_{k}^{-\frac{1}{2}}\mathbb{E} \left\{ \mathbf{G}_{kk} \right\} \mathbf{F}_{k,\mathrm{u}}\mathbf{x}_k+\mathbf{\tilde{v}},
$
where $\mathbf{\tilde{v}}\triangleq \mathbf{\Xi }_{k}^{-\frac{1}{2}}\mathbf{v}$ becomes white.
Next, we project $\mathbf{\Xi }_{k}^{-\frac{1}{2}}\mathbf{\hat{x}}_k$ in the direction of $\mathbf{\Xi }_{k}^{-\frac{1}{2}}\mathbb{E} \left\{ \mathbf{G}_{kk} \right\} \mathbf{F}_{k,\mathrm{u}}$ to obtain an effective scalar channel as
\begin{equation}
\begin{aligned}
\left( \mathbf{\Xi }_{k}^{-\frac{1}{2}}\mathbb{E} \left\{ \mathbf{G}_{kk} \right\} \mathbf{F}_{k,\mathrm{u}} \right) ^H\mathbf{\Xi }_{k}^{-\frac{1}{2}}\mathbf{\hat{x}}_k=\left( \mathbb{E} \left\{ \mathbf{G}_{kk} \right\} \mathbf{F}_{k,\mathrm{u}} \right) ^H\mathbf{\Xi }_{k}^{-1}\mathbb{E} \left\{ \mathbf{G}_{kk} \right\} \mathbf{F}_{k,\mathrm{u}}\mathbf{x}_k+\left(\mathbb{E} \left\{ \mathbf{G}_{kk} \right\} \mathbf{F}_{k,\mathrm{u}} \right) ^H\mathbf{\Xi }_{k}^{-1}\mathbf{v}.
\end{aligned}
\end{equation}

Based on theories of optimal receivers \cite{tse2005fundamentals}, we derive optimal LSFD matrix $\mathbf{A}_k \!\!=\!\mathbf{\Xi }_{k}^{-1}\mathbb{E} \left\{ \mathbf{G}_{kk} \right\} \!\mathbf{F}_{k,\mathrm{u}}$ as
\begin{equation}\label{Optimal_LSFD_Proof1}
\begin{aligned}
\mathbf{A}_k=\left( \sum_{l=1}^K{\mathbb{E} \left\{ \mathbf{G}_{kl}\mathbf{F}_{k,\mathrm{u}}\mathbf{F}_{k,\mathrm{u}}^{H}\mathbf{G}_{kl}^{H} \right\}}-\mathbb{E} \left\{ \mathbf{G}_{kk} \right\} \mathbf{F}_{k,\mathrm{u}}\mathbf{F}_{k,\mathrm{u}}^{H}\mathbb{E} \left\{ \mathbf{G}_{kk}^{H} \right\} +\sigma ^2\mathbf{S}_k \right) ^{-1}\mathbb{E} \left\{ \mathbf{G}_{kk} \right\} \mathbf{F}_{k,\mathrm{u}}.
\end{aligned}
\end{equation}

Moreover, based on the the standard results of matrix derivation in \cite{hjorungnes2011complex}, we can easily obtain the LSFD matrix minimizing the conditional MSE for UE $k$ $\mathrm{MSE}_{k}^{(2)}=\mathrm{tr}( \mathbf{E}_{k}^{(2)})$ as
\begin{equation}\label{Optimal_LSFD_Proof2}
\begin{aligned}
\mathbf{A}_{k}=\left( \sum_{l=1}^K{\mathbb{E} \{ \mathbf{G}_{kl}\mathbf{\bar{F}}_{l,\mathrm{u}}\mathbf{G}_{kl}^{H} \}}+\sigma ^2\mathbf{S}_k \right) ^{-1}\mathbb{E} \{ \mathbf{G}_{kk} \} \mathbf{F}_{k,\mathrm{u}}.
\end{aligned}
\end{equation}

We notice that the LSFD matrix in \eqref{Optimal_LSFD_Proof1} is equivalent to the LSFD matrix in \eqref{Optimal_LSFD_Proof2}, except from having another scaling matrix $\mathbf{I}_N-\left( \mathbf{C}^H\mathbf{B}^{-1}\mathbf{C}+\mathbf{I}_{N} \right) ^{-1}\mathbf{C}^H\mathbf{B}^{-1}\mathbf{C}$ on the right side, which would not affect the value of \eqref{SE_LSFD_Origin}, where $\mathbf{B}=\sum_{l=1}^K{\mathbb{E} \{ \mathbf{G}_{kl}\mathbf{F}_{k,\mathrm{u}}\mathbf{F}_{k,\mathrm{u}}^{H}\mathbf{G}_{kl}^{H} \}}+\sigma ^2\mathbf{S}_k$ and $\mathbf{C}=\mathbb{E} \{ \mathbf{G}_{kk} \} \mathbf{F}_{k,\mathrm{u}}$. So the LSFD matrix in \eqref{Optimal_LSFD_Proof2} cannot maximize the achievable SE but minimize the MSE for UE $k$.

\section{Proof ot Theorem~\ref{Th_Closed_Form}}\label{Appendix2}
In this part, we compute terms of \eqref{SE_LSFD_Origin} in closed-form for the LSFD scheme with MR combining $\mathbf{V}_{mk}=\mathbf{\hat{H}}_{mk}$. For the first term $\mathbf{D}_{k,(2)}=\mathbf{A}_{k}^{H}\mathbb{E} \{ \mathbf{G}_{kk} \} \mathbf{F}_{k,\mathrm{u}}$, we have $\mathbb{E}\{ \mathbf{G}_{kk} \} =[ \mathbb{E}\{ \mathbf{V}_{1k}^{H}\mathbf{H}_{1k} \} ;\dots ;\mathbb{E}\{ \mathbf{V}_{Mk}^{H}\mathbf{H}_{Mk} \} ] =[ \mathbf{Z}^{T}_{1k},\dots ,\mathbf{Z}^{T}_{Mk} ]^{T}\triangleq \mathbf{Z}_k$, where $\mathbf{Z}_{mk}=\mathbb{E}\{ \mathbf{V}_{mk}^{H}\mathbf{H}_{mk} \} =\mathbb{E}\{ \mathbf{\hat{H}}_{mk}^{H}\mathbf{\hat{H}}_{mk} \} \in \mathbb{C}^{N\times N}$ and the $\left( n,n^{\prime} \right) $-th element of $\mathbf{Z}_{mk}$ can be denoted as
$[ \mathbf{Z}_{mk} ] _{nn^{\prime}}=\mathbb{E}\{ \mathbf{\hat{h}}_{mk,n}^{H}\mathbf{\hat{h}}_{mk,n^{\prime}}\} =\mathrm{tr}( \mathbf{\hat{R}}_{mk}^{n^{\prime}n} )$. So we derive the closed-form for $\mathbf{D}_{k,(2)}$ as
$\mathbf{D}_{k,(2),\mathrm{c}}=\mathbf{A}_{k}^{H}\mathbf{Z}_k\mathbf{F}_{k,\mathrm{u}}$. As for the second term $\mathbf{S}_k\in \mathbb{C}^{MN\times MN} $, we have
$
\mathbf{S}_k =\mathrm{diag}( \mathbb{E}\{ \mathbf{V}_{1k}^{H}\mathbf{V}_{1k} \} ,\dots ,\mathbb{E}\{ \mathbf{V}_{Mk}^{H}\mathbf{V}_{Mk}\} )=\mathrm{diag}( \mathbf{Z}_{1k},\dots ,\mathbf{Z}_{Mk}).
$
For $\mathbb{E} \{ \mathbf{G}_{kl}\mathbf{\bar{F}}_{l,\mathrm{u}}\mathbf{G}_{kl}^{H} \}$, we notice that the $( m,m^{\prime})$-submatrix of $\mathbb{E} \{ \mathbf{G}_{kl}\mathbf{\bar{F}}_{l,\mathrm{u}}\mathbf{G}_{kl}^{H} \}$ is $\mathbb{E}\{ \mathbf{V}_{mk}^{H}\mathbf{H}_{ml}\mathbf{\bar{F}}_{l,\mathrm{u}}\mathbf{H}_{m^{\prime}l}^{H}\mathbf{V}_{m^{\prime}k} \} $.

 Based on \cite{04962}, we compute $\mathbb{E}\{ \mathbf{V}_{mk}^{H}\mathbf{H}_{ml}\mathbf{\bar{F}}_{l,\mathrm{u}}\mathbf{H}_{m^{\prime}l}^{H}\mathbf{V}_{m^{\prime}k} \} $ for four possible AP-UE combinations. For ``$m\ne m^{\prime},l\notin \mathcal{P}_k$", we have $\mathbb{E}\{ \mathbf{V}_{mk}^{H}\mathbf{H}_{ml}\mathbf{\bar{F}}_{l,\mathrm{u}}\mathbf{H}_{m^{\prime}l}^{H}\mathbf{V}_{m^{\prime}k} \} =0$ for the independence between $\mathbf{V}_{mk}$ and $\mathbf{H}_{ml}$. For ``$m\ne m^{\prime},l\in \mathcal{P}_k$", we have $\mathbb{E}\{ \mathbf{V}_{mk}^{H}\mathbf{H}_{ml}\mathbf{\bar{F}}_{l,\mathrm{u}}\mathbf{H}_{m^{\prime}l}^{H}\mathbf{V}_{m^{\prime}k} \} =\mathbb{E}\{ \mathbf{V}_{mk}^{H}\mathbf{H}_{ml} \} \mathbf{\bar{F}}_{l,\mathrm{u}}\mathbb{E}\{ \mathbf{H}_{m^{\prime}l}^{H}\mathbf{V}_{m^{\prime}k}\} =\mathbf{\Lambda }_{mkl}\mathbf{\bar{F}}_{l,\mathrm{u}}\mathbf{\Lambda }_{m^{\prime}lk}$, where the $\left( n,n^{\prime} \right) $-th element of $N\times N$-dimension complex matrices $\mathbf{\Lambda }_{mkl}\triangleq \mathbb{E}\{ \mathbf{V}_{mk}^{H}\mathbf{H}_{ml} \}$, $\mathbf{\Lambda }_{m^{\prime}lk} \triangleq  \mathbb{E}\{ \mathbf{H}_{m^{\prime}l}^{H}\mathbf{V}_{m^{\prime}k}\}$ are $[\mathbf{\Lambda }_{mkl}]_{nn^{\prime}}=\mathbb{E}\{ \mathbf{\hat{h}}_{mk,n}^{H}\mathbf{\hat{h}}_{ml,n^{\prime}} \} =\mathrm{tr(}\mathbf{\Xi }_{mkl}^{n^{\prime}n})$ and $[\mathbf{\Lambda }_{m^{\prime}lk}]_{nn^{\prime}}=\mathbb{E} \{\mathbf{\hat{h}}_{m^{\prime}l,n}^{H}\mathbf{\hat{h}}_{mk,n^{\prime}}\}=\mathrm{tr(}\mathbf{\Xi }_{m^{\prime}lk}^{n^{\prime}n})$
with $\mathbf{\Xi }_{mkl}\triangleq \mathbb{E}\{ \mathbf{\hat{h}}_{ml}\mathbf{\hat{h}}_{mk}^{H} \}=\tau _p\mathbf{R}_{ml}\mathbf{\tilde{F}}_{l,\mathrm{p}}^{H}\mathbf{\Psi }_{mk}^{-1}\mathbf{\tilde{F}}_{k,\mathrm{p}}\mathbf{R}_{mk}$, $\mathbf{\Xi }_{m^{\prime}lk}\triangleq \mathbb{E} \{ \mathbf{\hat{h}}_{m^{\prime}k}\mathbf{\hat{h}}_{m^{\prime}l}^{H}\}=\tau _p\mathbf{R}_{m^{\prime}k}\mathbf{\tilde{F}}_{k,\mathrm{p}}^{H}\mathbf{\Psi }_{m^{\prime}k}^{-1}\mathbf{\tilde{F}}_{l,\mathrm{p}}\mathbf{R}_{m^{\prime}l}$.
For ``$m=m^{\prime},l\notin \mathcal{P}_k$", we define
$\mathbf{\Gamma }_{mkl}^{\left( 1 \right)}\triangleq \mathbb{E}\{ \mathbf{V}_{mk}^{H}\mathbf{H}_{ml}\mathbf{\bar{F}}_{l,\mathrm{u}}\mathbf{H}_{ml}^{H}\mathbf{V}_{mk}\} \in \mathbb{C}^{N\times N}$ with the $\left( n,n^{\prime}\right)$-th element $[\mathbf{\Gamma }_{mkl}^{\left( 1 \right)}]_{nn^{\prime}}=\sum_{i=1}^N{\sum_{i^{\prime}=1}^N{[\mathbf{\bar{F}}_{l,\mathrm{u}}]_{i^{\prime}i}\mathbb{E} \{\mathbf{\hat{h}}_{mk,n}^{H}\mathbf{h}_{ml,i^{\prime}}\mathbf{h}_{ml,i}^{H}\mathbf{\hat{h}}_{mk,n^{\prime}}\}}}$ being
\begin{equation}
\begin{aligned}
[\mathbf{\Gamma }_{mkl}^{\left( 1 \right)}]_{nn^{\prime}}=\sum_{i=1}^N{\sum_{i^{\prime}=1}^N{[\mathbf{\bar{F}}_{l,\mathrm{u}}]_{i^{\prime}i}\mathrm{tr}( \mathbb{E} \left\{ \mathbf{h}_{ml,i^{\prime}}\mathbf{h}_{ml,i}^{H} \right\} \mathbb{E} \{ \mathbf{\hat{h}}_{mk,n^{\prime}}\mathbf{\hat{h}}_{mk,n}^{H} \})}}=\sum_{i=1}^N{\sum_{i^{\prime}=1}^N{[\mathbf{\bar{F}}_{l,\mathrm{u}}]_{i^{\prime}i}\mathrm{tr}( \mathbf{R}_{ml}^{i^{\prime}i}\mathbf{\hat{R}}_{mk}^{n^{\prime}n} )}}
\end{aligned}
\end{equation}
since $\mathbf{\hat{H}}_{mk}$ and $\mathbf{H}_{ml}$ are independent. Finally, for ``$m=m^{\prime},l\in \mathcal{P}_k$", $\mathbf{\hat{H}}_{mk}$ and $\mathbf{H}_{ml}$ are no longer independent. We define $\mathbf{\Gamma }_{mkl}^{\left( 2 \right)}\triangleq \mathbb{E}\{ \mathbf{V}_{mk}^{H}\mathbf{H}_{ml}\mathbf{\bar{F}}_{l,\mathrm{u}}\mathbf{H}_{ml}^{H}\mathbf{V}_{mk} \} \in \mathbb{C}^{N\times N}$ whose $( n,n^{\prime})$-th element is
$
[ \mathbf{\Gamma }_{mkl}^{\left( 2 \right)}] _{nn^{\prime}}=\sum_{i=1}^N{\sum_{i^{\prime}=1}^N{[\mathbf{\bar{F}}_{l,\mathrm{u}}] _{i^{\prime}i}\mathbb{E} \{ \mathbf{\hat{h}}_{mk,n}^{H}\mathbf{h}_{ml,i^{\prime}}\mathbf{h}_{ml,i}^{H}\mathbf{\hat{h}}_{mk,n^{\prime}}\}}}.
$
We follow the similar method in \cite{04962} and derive\\
$
[\mathbf{\Gamma }_{kl,m}^{\left( 2 \right)}]_{nn^{\prime}}=\sum_{i=1}^N{\sum_{i^{\prime}=1}^N{[\mathbf{\bar{F}}_{l,\mathrm{u}}]_{i^{\prime}i}\mathrm{tr(}\mathbf{R}_{ml}^{i^{\prime}i}\mathbf{P}_{mkl,\left( 1 \right)}^{n^{\prime}n})}}+\tau _{p}^{2}\sum_{q_1=1}^N{\sum_{q_2=1}^N{[\mathbf{\bar{F}}_{l,\mathrm{u}}]_{i^{\prime}i}[\mathrm{tr(}\mathbf{\tilde{P}}_{mkl,\left( 2 \right)}^{q_1n}\mathbf{\tilde{R}}_{ml}^{i^{\prime}q_2}\mathbf{\tilde{R}}_{ml}^{q_2i}\mathbf{\tilde{P}}_{mkl,\left( 2 \right)}^{n^{\prime}q_1})]}}.
+\tau _{p}^{2}\sum_{q_1=1}^N{\sum_{q_2=1}^N{[\mathbf{\bar{F}}_{l,\mathrm{u}}]_{i^{\prime}i}\mathrm{tr(}\mathbf{\tilde{P}}_{mkl,\left( 2 \right)}^{q_1n}\mathbf{\tilde{R}}_{ml}^{i^{\prime}q_2})\mathrm{tr(}\mathbf{\tilde{P}}_{mkl,\left( 2 \right)}^{n^{\prime}q_2}\mathbf{\tilde{R}}_{ml}^{q_2i})}},
$
%
where $\mathbf{P}_{mkl,( 1 )}\triangleq\tau _p\mathbf{S}_{mk}( \mathbf{\Psi }_{mk}-\tau _p\mathbf{\tilde{F}}_{l,\mathrm{p}}\mathbf{R}_{ml}\mathbf{\tilde{F}}_{l,\mathrm{p}}^{H} ) \mathbf{S}_{mk}^{H}$, $\mathbf{S}_{mk}\triangleq\mathbf{R}_{mk}\mathbf{\tilde{F}}_{k,\mathrm{p}}^{H}\mathbf{\Psi }_{mk}^{-1}$ and $\mathbf{P}_{mkl,( 2 )}\triangleq\mathbf{S}_{mk}\mathbf{\tilde{F}}_{l,\mathrm{p}}\mathbf{R}_{ml}\mathbf{\tilde{F}}_{l,\mathrm{p}}^{H}\mathbf{S}_{mk}^{H}$, respectively. Besides, $\mathbf{\tilde{R}}_{ml}^{ni}$ and $\mathbf{\tilde{P}}_{mkl,( 2 )}^{ni}$ denote $( n,i )$-submatrix of $\mathbf{R}_{ml}^{\frac{1}{2}}$ and $\mathbf{P}_{mkl,( 2 )}^{\frac{1}{2}}$, respectively.

In summary, combining all the cases, we have $\mathbb{E} \{ \mathbf{G}_{kl}\mathbf{\bar{F}}_{l,\mathrm{u}}\mathbf{G}_{kl}^{H} \} =\mathbf{T}_{kl,( 1 )}+\mathbf{T}_{kl,( 2 )}$ if $l\in \mathcal{P} _k$ and $\mathbf{T}_{kl,( 1 )}$ otherwise, where $\mathbf{T}_{kl,\left( 1 \right)}\triangleq \mathrm{diag}( \mathbf{\Gamma }_{kl,1}^{( 1 )},\dots ,\mathbf{\Gamma }_{kl,M}^{( 1 )} ) \in \mathbb{C} ^{MN\times MN}$ and $\mathbf{T}_{kl,\left( 2 \right)}^{mm^{\prime}}=\mathbf{\Gamma }_{kl,m}^{\left( 2 \right)}-\mathbf{\Gamma }_{kl,m}^{\left( 1 \right)}$ if $m=m^{\prime}$ and $\mathbf{\Lambda }_{mkl}\mathbf{\bar{F}}_{l,\mathrm{u}}\mathbf{\Lambda }_{m^{\prime}lk}$ otherwise. Plugging the derived results into \eqref{Optimal_LSFD} and \eqref{MMSE_MSE_Matrix}, we can easily compute the optimal LSFD coefficient matrix and MSE matrix in closed-form as \eqref{Closed_form_LSFD_MSE}. So we have finished the proof of Theorem~\ref{Th_Closed_Form}. For more details on the derived expression, please refer to \cite[Appendix D]{04962}.
\vspace{-0.5cm}
\section{Proof of Theorem~\ref{Optimal_Precoding_1}}\label{Appendix3}
When other optimization variables are fixed, we derive the partial derivative of \eqref{Lagrange_Function_1} w.r.t $\mathbf{F}_{k,\mathrm{u}}^{(1)}$ as
\begin{equation}
\begin{aligned}
\frac{\partial f\left( \mathbf{F}_{1,\mathrm{u},(1)},\dots ,\mathbf{F}_{K,\mathrm{u},(1)} \right)}{\partial \mathbf{F}_{k,\mathrm{u},(1)}}&=\sum_{l=1}^K{\mu _{l,(1)}\left( \mathbf{\hat{H}}_{k}^{H}\mathbf{V}_l\mathbf{W}_{l,(1)}\mathbf{V}_{l}^{H}\mathbf{\hat{H}}_k+\mathbb{E} \left\{ \mathbf{\tilde{H}}_{k}^{H}\mathbf{V}_l\mathbf{W}_{l,(1)}\mathbf{V}_{l}^{H}\left. \mathbf{\tilde{H}}_k \right|\mathbf{V},\mathbf{W} \right\} \right)}+\lambda _{k,(1)}\mathbf{I}_N\\
&-\mu _{k,(1)}\mathbf{\hat{H}}_{k}^{H}\mathbf{V}_{k}^{H}\mathbf{W}_{k,(1)}.
\end{aligned}
\end{equation}
By applying the first-order optimality condition and setting $\frac{\partial f\left( \mathbf{F}_{1,\mathrm{u},(1)},\dots ,\mathbf{F}_{K,\mathrm{u},(1)} \right)}{\partial \mathbf{F}_{k,\mathrm{u},(1)}}=0$, we can easily obtain the optimal precoding scheme. Besides, $\lambda _{k,(1)}$ and $\mathbf{F}_{k,\mathrm{u},(1)}$ should also satisfy KKT condition as \eqref{KKT_1}.

As for $\mathbf{\bar{C}}_{kl}\triangleq \mathbb{E} \{ \mathbf{\tilde{H}}_{k}^{H}\mathbf{V}_l\mathbf{W}_{l,(1)}\mathbf{V}_{l}^{H}\mathbf{\tilde{H}}_k |\mathbf{V},\mathbf{W} \} \in \mathbb{C} ^{N\times N}$, by applying Lemma~\ref{Lemma1}, the $(i,n)$-th element of $\mathbf{\bar{C}}_{kl}$ is $\mathrm{tr}( \mathbf{\bar{V}}_l\mathbb{E} \{ \mathbf{\tilde{h}}_{k,n}\mathbf{\tilde{h}}_{k,i}^{H}\})$ where $\mathbf{\bar{V}}_l\triangleq \mathbf{V}_l\mathbf{W}_{l,(1)}\mathbf{V}_{l}^{H}$ and $\mathbf{\tilde{h}}_{k,n}=[ \mathbf{\tilde{h}}_{1k,n}^{T},\dots ,\mathbf{\tilde{h}}_{Mk,n}^{T}] ^T\in \mathbb{C} ^{ML}$ is the $n$-th column of $\mathbf{\tilde{H}}_{k}$. Finally, we derive $\mathbf{C}_{k,ni}\triangleq \mathbb{E} \{ \mathbf{\tilde{h}}_{k,n}\mathbf{\tilde{h}}_{k,i}^{H}\} =\mathrm{diag}\left( \mathbf{C}_{1k}^{ni},\dots ,\mathbf{C}_{Mk}^{ni} \right)\in \mathbb{C} ^{ML\times ML}$ since $\mathbf{\tilde{h}}_{mk,n}$ and $\mathbf{\tilde{h}}_{m^{\prime}k,n}$ for $m\ne m^{\prime}$ are independent and both have zero mean. So $\mathbf{C}_{k,ni}$ is a block-diagonal matrix with the square matrices $\mathbf{C}_{1k}^{ni}=\mathbb{E} \{ \mathbf{\tilde{h}}_{1k,n}\mathbf{\tilde{h}}_{1k,i}^{H} \} ,\dots ,\mathbf{C}_{Mk}^{ni}=\mathbb{E} \{ \mathbf{\tilde{h}}_{Mk,n}\mathbf{\tilde{h}}_{Mk,i}^{H}\}$ on the diagonal.

\section{Proof of \eqref{SE_MSE}}\label{MSE_SE}
For the LSFD scheme, the conditional MSE matrix for UE $k$ can be written as \eqref{MSE_Matrix}. Based on \cite[Appendix C]{04962}, we prove that \eqref{Optimal_LSFD} can also minimize $\mathrm{MSE}_{k,(2)}=\mathrm{tr}\left( \mathbf{E}_{k,(2)} \right)$. With \eqref{Optimal_LSFD} implemented, $\mathbf{E}_{k,(2)} $ is given by \eqref{MMSE_MSE_Matrix}. Then, by applying Lemma~\ref{Lemma2}, we have
\begin{equation}
\begin{aligned}\label{E2_Inverse}
\left( \mathbf{E}_{k,(2)}^{\mathrm{opt}} \right) ^{-1}=\mathbf{I}_N+\mathbf{F}_{k,\mathrm{u},(2)}^{H}\mathbb{E} \left\{ \mathbf{G}_{kk}^{H} \right\} &\left( \sum_{l=1}^K{\mathbb{E} \left\{ \mathbf{G}_{kl}\mathbf{\bar{F}}_{l,\mathrm{u},(2)}\mathbf{G}_{kl}^{H} \right\}}-\mathbb{E} \left\{ \mathbf{G}_{kk} \right\} \mathbf{\bar{F}}_{k,\mathrm{u},(2)}\mathbb{E} \left\{ \mathbf{G}_{kk}^{H} \right\} +\sigma ^2\mathbf{S}_k \right) ^{-1}\notag \\
&\times \mathbb{E} \left\{ \mathbf{G}_{kk} \right\} \mathbf{F}_{k,\mathrm{u},(2)},
\end{aligned}
\end{equation}
where $\mathbf{A} \triangleq \mathbf{I}_N$, $\mathbf{B} \triangleq -\mathbf{F}_{k,\mathrm{u},(2)}^{H}\mathbb{E} \{ \mathbf{G}_{kk}^{H} \}$, $\mathbf{C} \triangleq ( \sum_{l=1}^K{\mathbb{E} \{ \mathbf{G}_{kl}\mathbf{\bar{F}}_{l,\mathrm{u},(2)}\mathbf{G}_{kl}^{H} \}}+\sigma ^2\mathbf{S}_k ) ^{-1}$ and $\mathbf{D} \triangleq \mathbb{E}\{ \mathbf{G}_{kk} \} \mathbf{F}_{k,\mathrm{u},(2)}$, respectively. We show the equivalence between $\mathrm{SE}_{k,(2)}^{\mathrm{opt}}$ and $\log _2| ( \mathbf{E}_{k,(2)}^{\mathrm{opt}} ) ^{-1} |$ without a factor $( 1-{\tau _p}/{\tau _c} )$.
\vspace{-0.3cm}
\section{Proof of Theorem~\ref{F_Th_Closed_Form}}\label{F_Closed_Form}
When MR combining $\mathbf{V}_{mk}=\mathbf{\hat{H}}_{mk}$ and the optimal LSFD scheme applied, we can easily compute $\mathbb{E} \{ \mathbf{G}_{kk}^{H} \}$, $\mathbf{A}_k^{\mathrm{opt}}$, and $\mathbf{E}_{k,(2)}^{\mathrm{opt}}$ in closed-form as Theorem~\ref{Th_Closed_Form}. Furthermore, by applying Lemma~\ref{Lemma1}, the $(i,n)$-th entry of $\mathbf{\bar{T}}_{lk}=\mathbb{E} \{\mathbf{G}_{lk}^{H}\mathbf{A}_l\mathbf{E}_{l,(2)}^{-1}\mathbf{A}_{l}^{H}\mathbf{G}_{lk}\}\in\mathbb{C} ^{N\times N}$ can be denoted as $\mathrm{tr}( \mathbf{\bar{A}}_l\mathbb{E} \{ \mathbf{g}_{lk,n}\mathbf{g}_{lk,i}^{H} \})$, where $\mathbf{\bar{A}}_l\triangleq \mathbf{A}_l\mathbf{E}_{l,(2)}^{-1}\mathbf{A}_{l}^{H}$ and $\mathbf{g}_{lk,n}\in \mathbb{C} ^{MN}$ is the $n$-th column of $\mathbf{G}_{lk}$. Note that the $\left( m-1 \right) N+p$-th element of $\mathbf{g}_{lk,n}$ is $\mathbf{\hat{h}}_{ml,p}^{H}\mathbf{h}_{mk,n}$ so the $[ \left( m-1 \right) N+p,\left( m^{\prime}-1 \right) N+p^{\prime}] $-th (or $[o,j]$-th briefly) entry of $\mathbf{\bar{G}}_{lk,ni}\triangleq \mathbb{E} \{ \mathbf{g}_{lk,n}\mathbf{g}_{lk,i}^{H} \} \in \mathbb{C} ^{MN\times MN}$ can be denoted as $\mathbb{E} \{ \mathbf{\hat{h}}_{ml,p}^{H}\mathbf{h}_{mk,n}\mathbf{h}_{m^{\prime}k,i}^{H}\mathbf{\hat{h}}_{m^{\prime}l,p^{\prime}} \}$, which can be computed for four AP-UE combinations as Theorem~\ref{Th_Closed_Form}.

For ``$l\notin \mathcal{P}_k,m\ne m^{\prime}$", we have $\mathbb{E}\{ \mathbf{\hat{h}}_{ml,p}^{H}\mathbf{h}_{mk,n}\mathbf{h}_{m^{\prime}k,i}^{H}\mathbf{\hat{h}}_{m^{\prime}l,p^{\prime}}\} =0$. For ``$l\in \mathcal{P}_k,m\ne m^{\prime}$", we have $\mathbb{E}\{ \mathbf{\hat{h}}_{ml,p}^{H}\mathbf{h}_{mk,n}\mathbf{h}_{mk,i}^{H}\mathbf{\hat{h}}_{ml,p^{\prime}}\} =\mathrm{tr}( \mathbf{R}_{mk}^{ni}\mathbf{\hat{R}}_{ml}^{p^{\prime}p})$. For ``$l\notin \mathcal{P}_k,m=m^{\prime}$", we have
$\mathbb{E} \{ \mathbf{\hat{h}}_{ml,p}^{H}\mathbf{h}_{mk,n}\mathbf{h}_{m^{\prime}k,i}^{H}\mathbf{\hat{h}}_{m^{\prime}l,p^{\prime}}\}=\mathbb{E}\{ \mathbf{\hat{h}}_{ml,p}^{H}\mathbf{h}_{mk,n} \} \mathbb{E} \{ \mathbf{h}_{m^{\prime}k,i}^{H}\mathbf{\hat{h}}_{m^{\prime}l,p^{\prime}}\}=\mathrm{tr}( \mathbf{\Xi }_{mlk}^{np}) \mathrm{tr}( \mathbf{\Xi }_{m^{\prime}kl}^{p^{\prime}i} )$, where $\mathbf{\Xi }_{mlk}=\tau _p\mathbf{R}_{mk}\mathbf{\tilde{F}}_{k,\mathrm{p}}^{H}\mathbf{\Psi }_{mk}^{-1}\mathbf{\tilde{F}}_{l,\mathrm{p}}\mathbf{R}_{ml}$ and $\mathbf{\Xi }_{m^{\prime}kl}=\tau _p\mathbf{R}_{m^{\prime}l}\mathbf{\tilde{F}}_{l,\mathrm{p}}^{H}\mathbf{\Psi }_{m^{\prime}l}^{-1}\mathbf{\tilde{F}}_{k,\mathrm{p}}\mathbf{R}_{m^{\prime}k}$. For ``$l\in \mathcal{P}_k,m=m^{\prime}$", we obtain $\mathbb{E} \{ \mathbf{\hat{h}}_{ml,p}^{H}\mathbf{h}_{mk,n}\mathbf{h}_{mk,i}^{H}\mathbf{\hat{h}}_{ml,p^{\prime}}\}=\mathrm{tr}( \mathbf{R}_{mk}^{ni}\mathbf{P}_{mkl,\left( 1 \right)}^{p^{\prime}p} ) +\tau _{p}^{2}\sum_{q_1=1}^N{\sum_{q_2=1}^N{\mathrm{tr}( \mathbf{\tilde{P}}_{mlk,\left( 2 \right)}^{q_1p}\mathbf{\tilde{R}}_{mk}^{nq_2}\mathbf{\tilde{R}}_{mk}^{q_2i}\mathbf{\tilde{P}}_{mlk,\left( 2 \right)}^{p^{\prime}q_1} )}}+\tau _{p}^{2}\sum_{q_1=1}^N{\sum_{q_2=1}^N{\mathrm{tr}( \mathbf{\tilde{P}}_{mlk,\left( 2 \right)}^{q_1n}\mathbf{\tilde{R}}_{mk}^{nq_1}) \mathrm{tr}( \mathbf{\tilde{P}}_{mlk,\left( 2 \right)}^{p^{\prime}q_2}\mathbf{\tilde{R}}_{mk}^{q_2i})}},$
where $\mathbf{S}_{ml}=\mathbf{R}_{ml}\mathbf{\tilde{F}}_{l,\mathrm{p}}^{H}\mathbf{\Psi }_{ml}^{-1}$, $\mathbf{P}_{mlk,\left( 1 \right)}=\tau _p\mathbf{S}_{ml}( \mathbf{\Psi }_{ml}-\tau _p\mathbf{\tilde{F}}_{k,\mathrm{p}}\mathbf{R}_{mk}\mathbf{\tilde{F}}_{k,\mathrm{p}}^{H} ) \mathbf{S}_{ml}^{H}$ and $\mathbf{P}_{mlk,\left( 2 \right)}=\mathbf{S}_{ml}\mathbf{\tilde{F}}_{k,\mathrm{p}}\mathbf{R}_{mk}\mathbf{\tilde{F}}_{k,\mathrm{p}}^{H}\mathbf{S}_{ml}^{H}$ with
$\mathbf{\tilde{R}}_{mk}^{ni}$ and $\mathbf{\tilde{P}}_{mkl,( 2 )}^{ni}$ being $( n,i )$-submatrix of $\mathbf{R}_{mk}^{\frac{1}{2}}$ and $\mathbf{P}_{mkl,( 2 )}^{\frac{1}{2}}$, respectively. We can compute $\mathbb{E} \{\mathbf{g}_{lk,n}\mathbf{g}_{lk,i}^{H}\}_{oj}$ in closed-form as \eqref{gg} and  $\mathbf{F}_{k,\mathrm{u},(2)}^{\mathrm{opt}}$ in closed-form as \eqref{F_2_Closed_form}.

\end{appendices}





\bibliographystyle{IEEEtran}
\bibliography{IEEEabrv,Ref}

\begin{thebibliography}{10}
\providecommand{\url}[1]{#1}
\csname url@samestyle\endcsname
\providecommand{\newblock}{\relax}
\providecommand{\bibinfo}[2]{#2}
\providecommand{\BIBentrySTDinterwordspacing}{\spaceskip=0pt\relax}
\providecommand{\BIBentryALTinterwordstretchfactor}{4}
\providecommand{\BIBentryALTinterwordspacing}{\spaceskip=\fontdimen2\font plus
\BIBentryALTinterwordstretchfactor\fontdimen3\font minus
  \fontdimen4\font\relax}
\providecommand{\BIBforeignlanguage}[2]{{%
\expandafter\ifx\csname l@#1\endcsname\relax
\typeout{** WARNING: IEEEtran.bst: No hyphenation pattern has been}%
\typeout{** loaded for the language `#1'. Using the pattern for}%
\typeout{** the default language instead.}%
\else
\language=\csname l@#1\endcsname
\fi
#2}}
\providecommand{\BIBdecl}{\relax}
\BIBdecl

\bibitem{2022arXiv220111299W}
Z.~Wang, J.~Zhang, H.~Q. {Ngo}, B.~Ai, and M.~Debbah, ``Iteratively weighted
  {MMSE} uplink precoding for cell-free massive {MIMO},'' in \emph{Proc. IEEE
  ICC}, May 2022, pp. 1--6.

\bibitem{7827017}
H.~Q. Ngo, A.~Ashikhmin, H.~Yang, E.~G. Larsson, and T.~L. Marzetta,
  ``{Cell-free massive {MIMO} versus small cells},'' \emph{IEEE Trans. Wireless
  Commun.}, vol.~16, no.~3, pp. 1834--1850, Mar. 2017.

\bibitem{9113273}
J.~Zhang, E.~Bj{\"o}rnson, M.~Matthaiou, D.~W.~K. Ng, H.~Yang, and D.~J. Love,
  ``Prospective multiple antenna technologies for beyond {5G},'' \emph{IEEE J.
  Sel. Areas Commun}, vol.~38, no.~8, pp. 1637--1660, Jun. 2020.

\bibitem{chen2021survey}
S.~Chen, J.~Zhang, J.~Zhang, E.~Bj{\"o}rnson, and B.~Ai, ``{A survey on
  user-centric cell-free massive MIMO systems},'' \emph{Digit. Commun. Netw.},
  2021.

\bibitem{[162]}
E.~Bj{\"o}rnson and L.~Sanguinetti, ``Making cell-free massive {MIMO}
  competitive with {MMSE} processing and centralized implementation,''
  \emph{IEEE Trans. Wireless Commun.}, vol.~19, no.~1, pp. 77--90, Jan. 2019.

\bibitem{9586055}
S.~Elhoushy, M.~Ibrahim, and W.~Hamouda, ``Cell-free massive {MIMO}: A
  survey,'' \emph{IEEE Commun. Surveys Tuts.}, vol.~24, no.~1, pp. 492--523,
  1st Quart. 2022.

\bibitem{8768014}
J.~{Zhang}, S.~{Chen}, Y.~{Lin}, J.~{Zheng}, B.~{Ai}, and L.~{Hanzo},
  ``Cell-free massive {MIMO}: A new next-generation paradigm,'' \emph{IEEE
  Access}, vol.~7, pp. 99\,878--99\,888, Jul. 2019.

\bibitem{8097026}
H.~Q. {Ngo}, L.~{Tran}, T.~Q. {Duong}, M.~{Matthaiou}, and E.~G. {Larsson},
  ``On the total energy efficiency of cell-free massive {MIMO},'' \emph{IEEE
  Trans. Green Commun. Netw}, vol.~2, no.~1, pp. 25--39, Mar. 2018.

\bibitem{9064545}
E.~Bj{\"o}rnson and L.~Sanguinetti, ``Scalable cell-free massive {MIMO}
  systems,'' \emph{IEEE Trans. Commun.}, vol.~68, no.~7, pp. 4247--4261, Jul.
  2020.

\bibitem{9174860}
S.~Chen, J.~Zhang, E.~Bj{\"o}rnson, J.~Zhang, and B.~Ai, ``Structured massive
  access for scalable cell-free massive {MIMO} systems,'' \emph{IEEE J. Sel.
  Areas Commun}, vol.~39, no.~4, pp. 1086--1100, Aug. 2021.

\bibitem{9528977}
A.~Papazafeiropoulos, E.~Bj{\"o}rnson, P.~Kourtessis, S.~Chatzinotas, and J.~M.
  Senior, ``Scalable cell-free massive {MIMO} systems: Impact of hardware
  impairments,'' \emph{IEEE Trans. Veh. Technol.}, vol.~70, no.~10, pp.
  9701--9715, Oct. 2021.

\bibitem{9849114}
J.~Zheng, J.~Zhang, E.~Bj{\"o}rnson, Z.~Li, and B.~Ai, ``Cell-free massive
  {MIMO}-{OFDM} for high-speed train communications,'' \emph{IEEE J. Sel. Areas
  Commun.}, vol.~40, no.~10, pp. 2823--2839, Oct. 2022.

\bibitem{9043895}
M.~Attarifar, A.~Abbasfar, and A.~Lozano, ``Subset {MMSE} receivers for
  cell-free networks,'' \emph{IEEE Trans. Wireless Commun.}, vol.~19, no.~6,
  pp. 4183--4194, Jun. 2020.

\bibitem{7869024}
E.~{Nayebi}, A.~{Ashikhmin}, T.~L. {Marzetta}, and B.~D. {Rao}, ``Performance
  of cell-free massive {MIMO} systems with {MMSE} and {LSFD} receivers,'' in
  \emph{Proc. Asilomar Conf. Signals, Syst. Comput.}, Nov. 2016, pp. 203--207.

\bibitem{8809413}
{\"O}.~{{\"O}zdogan}, E.~{Bj{\"o}rnson}, and J.~{Zhang}, ``Performance of
  cell-free massive {MIMO} with {R}ician fading and phase shifts,'' \emph{IEEE
  Trans. Wireless Commun.}, vol.~18, no.~11, pp. 5299--5315, Nov. 2019.

\bibitem{9276421}
Z.~Wang, J.~Zhang, E.~Bj{\"o}rnson, and B.~Ai, ``Uplink performance of
  cell-free massive {MIMO} over spatially correlated {R}ician fading
  channels,'' \emph{IEEE Commun. Lett.}, vol.~25, no.~4, pp. 1348--1352, Apr.
  2021.

\bibitem{9529197}
J.~Zhang, J.~Zhang, E.~Bj{\"o}rnson, and B.~Ai, ``Local partial zero-forcing
  combining for cell-free massive {MIMO} systems,'' \emph{IEEE Trans. Commun.},
  vol.~69, no.~12, pp. 8459--8473, Dec. 2021.

\bibitem{8761828}
G.~Interdonato, P.~Frenger, and E.~G. Larsson, ``Scalability aspects of
  cell-free massive {MIMO},'' in \emph{Proc. IEEE ICC}, 2019, pp. 1--6.

\bibitem{9499049}
Z.~H. Shaik, E.~Bj{\"o}rnson, and E.~G. Larsson, ``{MMSE}-optimal sequential
  processing for cell-free massive {MIMO} with radio stripes,'' \emph{IEEE
  Trans. Commun.}, vol.~69, no.~11, pp. 7775--7789, Nov. 2021.

\bibitem{8891922}
H.~{Masoumi} and M.~J. {Emadi}, ``Performance analysis of cell-free massive
  {MIMO} system with limited fronthaul capacity and hardware impairments,''
  \emph{IEEE Trans. Wireless Commun.}, vol.~19, no.~2, pp. 1038--1053, Feb.
  2020.

\bibitem{9446982}
M.~Bashar, P.~Xiao, R.~Tafazolli, K.~Cumanan, A.~G. Burr, and E.~Bj{\"o}rnson,
  ``Limited-fronthaul cell-free massive {MIMO} with local {MMSE} receiver under
  {R}ician fading and phase shifts,'' \emph{IEEE Wireless Commun. Lett.},
  vol.~10, no.~9, pp. 1934--1938, Sep. 2021.

\bibitem{9665300}
T.~Van~Chien, H.~Q. Ngo, S.~Chatzinotas, M.~Di~Renzo, and B.~Ottersten,
  ``Reconfigurable intelligent surface-assisted cell-free massive {MIMO}
  systems over spatially-correlated channels,'' \emph{IEEE Trans. Wireless
  Commun.}, vol.~21, no.~7, pp. 5106--5128, Jul. 2022.

\bibitem{9743355}
E.~Shi, J.~Zhang, S.~Chen, J.~Zheng, Y.~Zhang, D.~W. Kwan~Ng, and B.~Ai,
  ``Wireless energy transfer in {RIS}-aided cell-free massive {MIMO} systems:
  Opportunities and challenges,'' \emph{IEEE Commun. Mag.}, vol.~60, no.~3, pp.
  26--32, Mar. 2022.

\bibitem{8646330}
T.~C. Mai, H.~Q. Ngo, and T.~Q. Duong, ``Cell-free massive {MIMO} systems with
  multi-antenna users,'' in \emph{IEEE GlobalSIP}, Nov. 2018, pp. 828--832.

\bibitem{8901451}
S.~{Buzzi}, C.~{D'Andrea}, A.~{Zappone}, and C.~{D'Elia}, ``User-centric {5G}
  cellular networks: {R}esource allocation and comparison with the cell-free
  massive {MIMO} approach,'' \emph{IEEE Trans. Wireless Commun.}, vol.~19,
  no.~2, pp. 1250--1264, Feb. 2020.

\bibitem{194}
T.~C. Mai, H.~Q. Ngo, and T.~Q. Duong, ``Downlink spectral efficiency of
  cell-free massive {MIMO} systems with multi-antenna users,'' \emph{IEEE
  Trans. Commun.}, vol.~68, no.~8, pp. 4803--4815, Apr. 2020.

\bibitem{9424703}
M.~Zhou, L.~Yang, and H.~Zhu, ``Sum-{SE} for multigroup multicast cell-free
  massive {MIMO} with multi-antenna users and low-resolution {DACs},''
  \emph{IEEE Wireless Commun. Lett.}, vol.~10, no.~8, pp. 1702--1706, May 2021.

\bibitem{04962}
Z.~Wang, J.~Zhang, B.~Ai, C.~Yuen, and M.~Debbah, ``Uplink performance of
  cell-free massive {MIMO} with multi-antenna users over jointly-correlated
  {R}ayleigh fading channels,'' \emph{IEEE Trans. Wireless Commun.}, vol.~21,
  no.~9, pp. 7391--7406, Sep. 2022.

\bibitem{1576533}
W.~{Weichselberger}, M.~{Herdin}, H.~{Ozcelik}, and E.~{Bonek}, ``A stochastic
  {MIMO} channel model with joint correlation of both link ends,'' \emph{IEEE
  Trans. Wireless Commun.}, vol.~5, no.~1, pp. 90--100, Jan. 2006.

\bibitem{5756489}
Q.~Shi, M.~Razaviyayn, Z.-Q. Luo, and C.~He, ``An iteratively weighted {MMSE}
  approach to distributed sum-utility maximization for a {MIMO} interfering
  broadcast channel,'' \emph{IEEE Trans. Signal Process.}, vol.~59, no.~9, pp.
  4331--4340, Apr. 2011.

\bibitem{6302110}
J.~Shin and J.~Moon, ``Weighted-sum-rate-maximizing linear transceiver filters
  for the {K}-user {MIMO} interference channel,'' \emph{IEEE Trans. Commun.},
  vol.~60, no.~10, pp. 2776--2783, Sep. 2012.

\bibitem{4712693}
S.~S. Christensen, R.~Agarwal, E.~De~Carvalho, and J.~M. Cioffi, ``Weighted
  sum-rate maximization using weighted {MMSE} for {MIMO-BC} beamforming
  design,'' \emph{IEEE Trans. Wireless Commun.}, vol.~7, no.~12, pp.
  4792--4799, 2008.

\bibitem{9347738}
X.~Li, X.~Zhang, Y.~Zhou, and L.~Hanzo, ``Optimal massive-{MIMO}-{A}ided
  clustered base-station coordination,'' \emph{IEEE Trans. Veh. Technol.},
  vol.~70, no.~3, pp. 2699--2712, Feb. 2021.

\bibitem{8645336}
H.~Q. {Ngo}, H.~{Tataria}, M.~{Matthaiou}, S.~{Jin}, and E.~G. {Larsson}, ``On
  the performance of cell-free massive {MIMO} in {R}icean fading,'' in
  \emph{Proc. Asilomar Conf. Signals, Syst., Comput.}, Oct. 2018, pp. 980--984.

\bibitem{5340650}
E.~{Bj{\"o}rnson} and B.~{Ottersten}, ``A framework for training-based
  estimation in arbitrarily correlated {R}ician {MIMO} channels with {R}ician
  disturbance,'' \emph{IEEE Trans. Signal Process.}, vol.~58, no.~3, pp.
  1807--1820, Nov. 2010.

\bibitem{8187178}
E.~Bj{\"o}rnson, J.~Hoydis, and L.~Sanguinetti, ``{Massive MIMO networks:
  Spectral, energy, and hardware efficiency},'' \emph{Foundations and
  Trends{\textregistered} in Signal Processing}, vol.~11, no. 3-4, pp.
  154--655, 2017.

\bibitem{tse2005fundamentals}
D.~Tse and P.~Viswanath, \emph{Fundamentals of wireless communication}.\hskip
  1em plus 0.5em minus 0.4em\relax Cambridge university press, 2005.

\bibitem{8756286}
M.~Bashar, K.~Cumanan, A.~G. Burr, H.~Q. Ngo, M.~Debbah, and P.~Xiao, ``Max-min
  rate of cell-free massive {MIMO} uplink with optimal uniform quantization,''
  \emph{IEEE Trans. Commun.}, vol.~67, no.~10, pp. 6796--6815, Oct. 2019.

\bibitem{8283646}
B.~Hu, C.~Hua, C.~Chen, and X.~Guan, ``Joint beamformer design for wireless
  fronthaul and access links in {C-RAN}s,'' \emph{IEEE Trans. Wireless
  Commun.}, vol.~17, no.~5, pp. 2869--2881, May 2018.

\bibitem{6415388}
J.~Hoydis, S.~ten Brink, and M.~Debbah, ``Massive {MIMO} in the {UL}/{DL} of
  cellular networks: How many antennas do we need?'' \emph{IEEE J. Sel. Areas
  Commun}, vol.~31, no.~2, pp. 160--171, Feb. 2013.

\bibitem{6172680}
S.~Wagner, R.~Couillet, M.~Debbah, and D.~T.~M. Slock, ``Large system analysis
  of linear precoding in correlated {MISO} broadcast channels under limited
  feedback,'' \emph{IEEE Trans. Inf. Theory}, vol.~58, no.~7, pp. 4509--4537,
  Jul. 2012.

\bibitem{zhang2021improving}
J.~Zhang, J.~Zhang, D.~W.~K. Ng, S.~Jin, and B.~Ai, ``Improving sum-rate of
  cell-free massive {MIMO} with expanded compute-and-forward,'' \emph{IEEE
  Trans. Signal Process.}, vol.~70, pp. 202--215, 2021.

\bibitem{1459054}
A.~Tulino, A.~Lozano, and S.~Verdu, ``Impact of antenna correlation on the
  capacity of multiantenna channels,'' \emph{IEEE Trans. Inf. Theory}, vol.~51,
  no.~7, pp. 2491--2509, Jun. 2005.

\bibitem{9169906}
Y.~Wu, S.~Ma, and Y.~Gu, ``A unified framework of non-orthogonal pilot design
  for multi-cell massive {MIMO} systems,'' \emph{IEEE Trans. Commun.}, vol.~68,
  no.~12, pp. 7623--7633, 2020.

\bibitem{hjorungnes2011complex}
A.~Hj{\o}rungnes, \emph{Complex-valued matrix derivatives: with applications in
  signal processing and communications}.\hskip 1em plus 0.5em minus 0.4em\relax
  Cambridge University Press, 2011.

\end{thebibliography}

\end{document}